\newcommand{\Tr}{\mathrm{tr}}
\def\Z{\mathbb Z}
\def\N{\mathbb N}
\def\R{\mathbb R}
\def\i{{\mathrm i}}
\def\e{{\mathrm e}}
\def\epsilon{\varepsilon}
\def\eps{\varepsilon}
\newtheorem{thm}{\bfseries Theorem}
\newtheorem{lem}{\bfseries Lemma}
\newtheorem{cor}{\bfseries Corollary}
\newtheorem{conj}{\bfseries Conjecture}
\newtheorem{prop}{\bfseries Proposition}
\theoremstyle{definition}
\newtheorem{rem}{\bfseries Remark}
\newtheorem*{cor*}{\bfseries Corollary}
\begin{document}
\title[Excitation spectrum for weakly interacting bosons in a
trap]{The excitation spectrum for weakly interacting \\ bosons in a
  trap} \author{Philip Grech} \address{Department of Mathematics and
  Statistics, McGill University, 805 Sherbrooke Street West, Montreal,
  Quebec H3A 0B9, Canada; Centre de Recherches Math\'ematiques,
  Universit\'e de Montr\'eal, 2920 Chemin de la Tour, Montr\'eal,
  Qu\'ebec H3T 1J4, Canada.}  \email{pgrech@math.mcgill.ca}
\author{Robert Seiringer} \address{Department of Mathematics and
  Statistics, McGill University, 805 Sherbrooke Street West, Montreal,
  Quebec H3A 0B9, Canada.}  \email{rseiring@math.mcgill.ca}

\date{May 22, 2012}

\begin{abstract} We investigate the low-energy excitation spectrum of
  a Bose gas confined in a trap, with weak long-range repulsive
  interactions. In particular, we prove that the spectrum can be
  described in terms of the eigenvalues of an effective one-particle
  operator, as predicted by the Bogoliubov approximation.
\end{abstract}

\maketitle

\section{Introduction and main results}

\subsection{Introduction}

Bose-Einstein condensates of dilute atomic gases have been studied
extensively in recent years, both from an experimental and a
theoretical perspective \cite{DGPS,bloch}. Many fundamental aspects of
quantum mechanics were investigated  with the aid of these systems. One
of the manifestations of their quantum behavior is superfluidity,
leading to the appearance of quantized vortices in rotating systems
\cite{cooper,fetter}. This property is related to the structure of the
low-energy excitation spectrum, via the Landau criterion
\cite{landau}.  Excitation spectra of atomic Bose-Einstein condensates
have actually been measured \cite{Davidson}, and agreement was found
with theoretical predictions based on the Bogoliubov approximation
\cite{Bogo}.

From the point of view of mathematical physics, starting with the
basic underlying many-body Schr\"odinger equation, it remains a big
challenge to understand many features of cold quantum gases
\cite{LSSY,S}. While the validity of the Bogoliubov approximation for
evaluating the ground state energy has been studied in several cases
\cite{LSol,LSol2,Sol,ESY,GSlhy,yauyin}, no rigorous results on the
excitation spectrum of many-body systems with genuine interactions
among the particles are available, with the exception of certain
exactly solvable models in one dimension
\cite{gir,LL,L,calogero,sutherland}. In particular, it remains an open
problem to verify Landau's criterion for superfluidity in interacting
gases.

In this paper, we shall prove the accuracy of the Bogoliubov
approximation for the excitation spectrum of a trapped Bose gas, in
the mean-field or Hartree limit \cite{FL,FKS}, where the interaction
is weak and long-range. While the interactions among atoms in the
experiments on cold gases are more accurately modeled as strong and
short-range, effective long-range interactions can be achieved via
application of suitable electromagnetic fields \cite{Esslinger}. Our
work generalizes the recent results in \cite{Seiring}, where the
validity of Bogoliubov's approximation was verified for a homogeneous,
translation invariant model of interacting bosons. The inhomogeneity
caused by the trap complicates the analysis and leads to new features,
due to the non-commutativity of the various operators appearing in the
effective Bogoliubov Hamiltonian.

\subsection{Model and Main Results}\label{ss:mod}

We consider a system of $N\geq 2$ bosons in $\R^{d}$, for general
$d\geq 1$. The particles are confined by an external potential
$V_{\mathrm{ext}}(x)$, and interact via a weak two-body potential,
which we write as $(N-1)^{-1}v(x-y)$. The Hamiltonian of the system
reads, in suitable units,
\begin{align}\label{def:hn}
H_{N}=\sum_{i=1}^{N}\left(-\Delta_{i}+V_{\text{ext}}(x_{i})\right)+\frac{1}{N-1}\sum_{i<j}v(x_{i}-x_{j}) \, ,
\end{align}
with $\Delta$ denoting the standard Laplacian on $\R^d$. It acts on
the Hilbert space of permutation-symmetric square integrable functions
on $\R^{dN}$, as appropriate for bosons.  We assume that $v$ is a
bounded symmetric function, which is non-negative and of positive
type, i.e., has non-negative Fourier transform. The external potential $V_{\text{ext}}$ 
is assumed to be locally bounded and to satisfy
$V_{\mathrm{ext}}(x)\to \infty$ as $|x|\to \infty$.

Under these assumptions on $V_{\text{ext}}$ and $v$, the non-linear Hartree equation
\begin{align}\label{eq:hart}
(-\Delta +V_{\text{ext}})\varphi_{0} + (v \ast |\varphi_{0}|^{2})\varphi_{0}=\eps_{0} \varphi_{0}\, 
\end{align}
admits a unique strictly positive solution $\varphi_{0}$, normalized
as $\int \varphi_0^2 =1$, which is equal to the ground state of the
corresponding Hartree energy functional. In addition, there is a complete
set of normalized eigenfunctions $\{\varphi_{i}\}_{i\in \N}$ for the
Hartree operator
\begin{align}\label{def:HH}
H_{\text{H}}=-\Delta  +V_{\text{ext}}+v\ast \varphi_{0}^{2}\,.
\end{align}
The corresponding eigenvalues will be denoted by $\eps_{0}<
\eps_{1}\leq \eps_{2}\dots$ . We note that $\varphi_0$ is necessarily
the ground state of $H_{\rm H}$, since it is an eigenfunction that is
positive. Moreover, we emphasize that the inequality $\epsilon_1>
\epsilon_0$ is strict, since operators of the form (\ref{def:HH}) have
a unique ground state \cite{RS}. This will be essential for our
analysis.

Let $V$ denote the operator defined by the integral kernel
$$
V(x,y) = \varphi_0(x) v(x-y) \varphi_0(y) \, .
$$
As shown below, our assumptions on $v$ imply that this defines a
positive trace-class operator, whose trace is equal to $\Tr\, V = v(0)
= \|v\|_\infty$. Define also
\begin{equation}\label{def:D}
D: = H_{\text{H}} - \epsilon_0 = \sum_{i\geq 0}(\eps_{i}-\eps_{0})\ket{\varphi_{i}}\bra{\varphi_{i}}
\end{equation}
and let 
\begin{equation}\label{def:E}
E:= \left(D^{1/2}(D+2V)D^{1/2}\right)^{1/2}\, .
\end{equation}
Since $V$ is positive and bounded, $E$ is well-defined on the domain
of $D$.  We note that both $D$ and $E$ are, by construction, positive
operators, with $D\varphi_0 = E \varphi_0 = 0$. The Hartree minimizer
$\varphi_0$ is the only function in their kernel, all other
eigenvalues of $D$ and $E$ are strictly positive.


It turns out that $E-D-V$ is a trace class operator. (We will prove
this in Subsection~\ref{ss:II} below.) Let $0=e_0 < e_1 \leq e_2 \leq
\dots$ denote the eigenvalues of $E$. Our main result concerns the
spectrum of the Hamiltonian $H_N$, and reads as follows:

\begin{thm}\label{mainthm}
The ground state energy $E_{0}(N) = {\rm inf\, spec\,} H_{N}$ equals
\begin{align}\nonumber 
  E_{0}(N)& =N \int_{\R^d} \left( |\nabla \varphi_0(x)|^2 +
    V_{\rm{ext}}(x) \varphi_0(x)^2 \right) dx +\frac {N+1} 2
  \int_{\R^{2d}} \varphi_0(x)^2 v(x-y) \varphi_0(y)^2 dx\, dy \\ &
  \quad -\frac{1}{2}\mathrm{tr} (D+V-E) + O(N^{-1/2})\,
  .\label{gsenerg}
\end{align}
Moreover, the spectrum of $H_{N}-E_{0}(N)$ below an energy $\xi$ is equal to finite sums of the form
\begin{align}\label{spectrum}
\sum_{i\geq 1} e_{i}n_{i}+O(\xi^{3/2}N^{-1/2})\,,
\end{align}
where $n_{i}\in \N$ with $\sum_{i\geq 1} n_{i}\leq N$.
\end{thm}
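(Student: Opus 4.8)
The plan is to convert the formal Bogoliubov approximation into a rigorous two-sided comparison between $H_N$ and an explicit quadratic Hamiltonian acting on the excited subspace $QL^2(\R^d)$, with $Q=1-\ket{\varphi_0}\bra{\varphi_0}$, whose diagonalisation produces precisely the energies $e_i$. First I would work on the bosonic Fock space over $L^2(\R^d)$, writing $a_i$ for the operator annihilating a particle in the Hartree mode $\varphi_i$ and separating the condensate mode $a_0$ from the excited modes $a_i$, $i\ge1$. Instead of substituting the $c$-number $a_0\mapsto\sqrt N$ by hand, I would use the unitary excitation map $U_N$ identifying the $N$-particle space $\mathcal H_N$ with the truncated Fock space over $QL^2(\R^d)$, under which $a_0^\ast a_0\mapsto N-\mathcal N_+$ and $a_0^\ast a_i\mapsto\sqrt{N-\mathcal N_+}\,a_i$, where $\mathcal N_+=\sum_{i\ge1}a_i^\ast a_i$ counts the excited particles. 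Organising $H_N$ by the number of excited creation and annihilation operators, the terms of order zero reproduce the Hartree energy (the first line of \eqref{gsenerg} together with the interaction term), those of order one cancel to leading order by the Hartree equation \eqref{eq:hart}, those of order two form the Bogoliubov Hamiltonian, and the remaining cubic and quartic terms are treated as errors, the quartic one being non-negative since $v$ is of positive type.

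Next I would identify the quadratic part on $QL^2(\R^d)$ as the Bogoliubov Hamiltonian with diagonal block $A=D+V$ --- the Hartree gap $D$ of \eqref{def:D} together with the exchange term carried by the kernel $\varphi_0 v\varphi_0$ --- and pairing block $B=V$. Because $A-B=D$ and $A+B=D+2V$, the abstract diagonalisation of such a Hamiltonian by a Bogoliubov transformation yields quasiparticles whose energies are the eigenvalues of $\bigl(D^{1/2}(D+2V)D^{1/2}\bigr)^{1/2}=E$ from \eqref{def:E}, together with the scalar shift $-\tfrac12\Tr(A-E)=-\tfrac12\Tr(D+V-E)$. The trace-class property of $E-D-V$ recorded before the theorem guarantees that this shift is finite and equals the correction in \eqref{gsenerg}, while $\Tr V=v(0)<\infty$ makes the off-diagonal part Hilbert--Schmidt, so that the transformation is unitarily implementable and admits quasi-free trial states.

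For the upper bound I would use trial states obtained by applying the diagonalising transformation to the condensate and exciting finitely many quasiparticles, checking that the cubic terms and the replacement of $\sqrt{N-\mathcal N_+}$ by $\sqrt N$ contribute only the stated errors. The lower bound is the harder direction. I would first prove a priori estimates showing that any state with $\langle H_N-E_0(N)\rangle\le\xi$ has $\langle\mathcal N_+\rangle$ and a $D$-weighted analogue bounded by $\xi$; these follow by bootstrapping from the strict positivity of $D$ on $QL^2(\R^d)$ (the gap $\epsilon_1>\epsilon_0$) and of the quartic term. The cubic terms, of size $N^{-1/2}$, are then absorbed by Cauchy--Schwarz against the positive quadratic and quartic parts, yielding errors of order $\xi^{3/2}N^{-1/2}$, and the discrepancy between $\sqrt{N-\mathcal N_+}$ and $\sqrt N$ is estimated similarly. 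Comparing the low-energy spectra of $H_N-E_0(N)$ and of the diagonalised Bogoliubov Hamiltonian via the min--max principle --- excited quasi-free states giving the upper eigenvalue bounds and the a priori localisation giving the lower ones --- produces \eqref{spectrum}.

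I expect the principal obstacle to be exactly the non-commutativity of $D$ and $V$ stressed in the introduction. In the homogeneous model of \cite{Seiring} all operators are simultaneously diagonal in the momentum representation, so the Bogoliubov rotation and every error bound reduce to scalar computations mode by mode; here the construction of the quasiparticle operators, the verification that the transformation preserves the relevant operator domains, and the trace estimates ensuring finiteness of $\Tr(D+V-E)$ must all be carried out abstractly in terms of the non-commuting operators $D$, $V$ and $E$ on $QL^2(\R^d)$, keeping commutators under control throughout.
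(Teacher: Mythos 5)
Your outline is correct and its skeleton coincides with the paper's: a priori condensation bounds for low-energy states, a two-sided comparison of $H_N$ with a quadratic Hamiltonian whose linear part cancels via the Hartree equation \eqref{eq:hart}, a Bogoliubov diagonalisation producing $E=\bigl(D^{1/2}(D+2V)D^{1/2}\bigr)^{1/2}$ with scalar shift $-\tfrac12\Tr(D+V-E)$, and min--max to convert operator inequalities into \eqref{gsenerg} and \eqref{spectrum}. The one genuine structural difference is how particle-number conservation is handled. You conjugate by the excitation map $U_N$, sending $a_0^\ast a_i\mapsto\sqrt{N-\mathcal N_+}\,a_i$, and then apply a standard Bogoliubov transformation on the Fock space over $QL^2(\R^d)$; the paper never leaves $\mathcal F^{(N)}$, instead replacing $a_i$ by $b_i=a_ia_0^\dagger/\sqrt{N-1}$, defining a number-conserving Bogoliubov Hamiltonian \eqref{def:hbog} in the $b$'s, and implementing the rotation by the number-preserving unitary $\mathcal U=\mathcal W^\dagger\e^{X}$ of \eqref{defofU} with $X$ quadratic in the $b_i$. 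Your route is viable and buys a cleaner organisation of $H_N$ by the number of excited operators, with exact CCR on the excitation space; its price is that the diagonalising transformation does not preserve the truncated Fock space $\{\mathcal N_+\le N\}$, so you need a localisation argument in $\mathcal N_+$ that the paper's $b$-calculus avoids at the cost of non-canonical commutators (the source of the error terms in Proposition~\ref{maintechprop}).

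Two points in your sketch are under-specified, and they are exactly where the paper spends its effort. First, first moments $\langle\mathcal N_+\rangle,\langle T_{\rm H}\rangle\lesssim\xi$ do not suffice: the comparison errors are of the form $N^{-1}\eps^{-1}(N^{>}+1)(T_{\rm H}+1)$, so reaching $O(\xi^{3/2}N^{-1/2})$ after optimising $\eps\sim\sqrt{\xi/N}$ requires the mixed second moment $\langle N^{>}T_{\rm H}\rangle\lesssim\xi^{2}$, which is Lemma~\ref{quadrbdlem} and is not a formal consequence of the gap alone. Second, for the eigenvalue upper bounds the error operators must be controlled \emph{after} conjugation by the diagonalising unitary, i.e.\ one needs $\mathcal U(N^{>}+1)(T_{\rm H}+1)\mathcal U^{\dagger}\le C\bigl(\sum_{i\ge1}e_ia_i^\dagger a_i+1\bigr)^{2}$ (Lemma~\ref{quadraticbound}); since $D$ and $V$ do not commute this rests on $W_0^{\ast}DW_0\le CE$, proved via a non-self-adjoint resolvent bound (Gil's lemma), together with the Hilbert--Schmidt estimates for $A-1$, $B-1$ and $\alpha$ obtained from integral representations and $\|V\|_2<\infty$ --- note that $\Tr V=v(0)<\infty$ by itself does not directly give the Hilbert--Schmidt property of the pairing kernel. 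These are the concrete contents behind the ``keeping commutators under control'' you correctly anticipate as the main obstacle; your high-level plan is sound, but neither step is routine.
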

The error term $O(N^{-1/2})$ in (\ref{gsenerg}) stands for an
expression which is bounded by a constant times $N^{-1/2}$ for large
$N$, where the constant only depends on the interaction potential $v$
and the gap $\eps_{1}-\eps_{0}$ in the spectrum of $H_\text{H}$;
likewise for the error term $O(\xi^{3/2}N^{-1/2})$ in
(\ref{spectrum}). The dependence on $v$ is relatively complicated but
could in principle be computed explicitly by following our proof; all our
bounds are quantitative.

Our result is a manifestation of the fact that the Bogoliubov
approximation becomes exact in the Hartree limit $N\to \infty$. In
particular, as long as $\xi \ll N$, each individual excitation energy
$\xi$ is of the form $\sum_{i\geq 1} e_{i}n_{i}(1+o(1))$. This result is
expected to be optimal in the following sense: if $\xi \ll N$ fails to
hold then there is a non-negligible number of particles outside the
condensate, violating a key assumption of Bogoliubov's approximation
\cite{Bogo,LSSY,Seiring}. Hence there is no reason why the Bogoliubov
approximation should predict the correct spectrum for excitation energies of order $N$ or larger.

Theorem~\ref{mainthm} states that the low-energy spectrum of $H_{N}-
E_0(N)$ is, up to small errors, equal to the one of the effective
operator
\begin{equation}\label{sumE}
\sum_{i=1}^N \hat E_i \quad , \quad \hat E = \sum_{j\geq 1} e_j |\varphi_j\rangle\langle \varphi_j| \,,
\end{equation}
where the subscript $i$ in $\hat E_i$ stands for the action of the
operator $\hat E$ on the $i$'th variable. Note that $\hat E$ is
unitarily equivalent to the operator $E$ defined in (\ref{def:E}).
The proof of Theorem~\ref{mainthm} actually consists of constructing
an explicit unitary operator that relates $H_N - E_0(N)$ and
(\ref{sumE}). In other words, we shall bound $H_{N}-E_0(N)$ from above
and below by a suitable unitary transform (cf. Eq.~(\ref{defofU})
below) of (\ref{sumE}), with error terms that are small in the
subspace of low energy.  As a byproduct of the proof we obtain the
following corollary.

\begin{cor} \label{cor2} Let $P_{H}^{j}$ be the projection onto the
  subspace spanned by the eigenfunctions corresponding to the $j$
  lowest eigenvalues of $H_{N}$ (counted with
  multiplicity). Similarly, let
  $P_{K}^{j}=\sum_{k=1}^{j}\ket{\psi_{k}}\bra{\psi_{k}}$ be the
  projection onto the subspace spanned by the eigenfunctions corresponding to the $j$ lowest
  eigenvalues of
\begin{align*}
  K:=\mathcal{U}^{\dagger}\left(\sum_{i=1}^N \hat E_i
  \right)\mathcal{U}+1=:\sum_{i=1}^{\infty}k_{i}\ket{\psi_{i}}\bra{\psi_{i}}
\end{align*}
($k_{1}\leq k_{2}\leq \dots$), where $\mathcal U$ is the unitary operator
defined in (\ref{defofU}). Then there is a constant $C$, depending
only on $v$ and $\eps_{1}- \eps_{0}$, such that if $k_{j+1}>k_{j}$ then
\begin{align*}
\| P_{K}^{j}-P_{H}^{j}\|_{2}^{2}\leq C(k_j/N)^{1/2}\frac{ \sum_{l=1}^{j}k_{l}}{k_{j+1}-k_{j}}\, ,
\end{align*}
with $\|\,\cdot\,\|_2$ denoting the Hilbert-Schmidt norm.
\end{cor}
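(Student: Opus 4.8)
The plan is to pass to the shifted operator $H := H_N - E_0(N) + 1$, which has the same eigenprojections as $H_N$, so that $P_H^j$ projects onto the span of the $j$ lowest eigenvectors $\chi_1,\chi_2,\dots$ of $H$, with eigenvalues $h_i = E_i(N) - E_0(N) + 1$. Writing $\psi_1,\psi_2,\dots$ for the eigenvectors of $K$ with eigenvalues $k_1 \le k_2 \le \dots$, the gap hypothesis $k_{j+1} > k_j$ together with the closeness of the two spectra furnished by Theorem~\ref{mainthm} forces $h_{j+1} > h_j$ for $N$ large, so that both $P_K^j$ and $P_H^j$ have rank $j$. The first step is the projection identity
\[
\tfrac12\|P_K^j - P_H^j\|_2^2 = j - \Tr(P_K^j P_H^j) = \Tr\bigl(P_K^j(1 - P_H^j)\bigr) = \sum_{l=1}^{j} \|(1 - P_H^j)\psi_l\|^2 \, ,
\]
which reduces the task to estimating how much the low-lying $K$-eigenvectors leak above the $H$-threshold separating level $j$ from level $j+1$.

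For the leakage I would use a spectral-gap inequality at the level of quadratic forms. Since $H \ge h_{j+1}$ on the range of $1 - P_H^j$ and $[H, P_H^j] = 0$, for any $\mu < h_{j+1}$ one has the operator inequality $1 - P_H^j \le (h_{j+1} - \mu)^{-1}(H - \mu)(1 - P_H^j)$; testing against the positive operator $P_K^j$ gives
\[
\Tr\bigl(P_K^j(1 - P_H^j)\bigr) \le \frac{1}{h_{j+1} - \mu}\,\Tr\bigl(P_K^j (H - \mu)(1 - P_H^j)\bigr) \, .
\]
The decisive step is to write $H - \mu = (K - \mu) + (H - K)$ and to choose $\mu = k_j$: since $P_K^j(K - k_j) = \sum_{l \le j}(k_l - k_j)\ket{\psi_l}\bra{\psi_l} \le 0$, the contribution of $K - \mu$ to the trace is non-positive and may be discarded, leaving
\[
\Tr\bigl(P_K^j(1 - P_H^j)\bigr) \le \frac{1}{h_{j+1} - k_j}\,\Tr\bigl(P_K^j (H - K)(1 - P_H^j)\bigr) \, .
\]
It is precisely this cancellation of the leading term that produces the gap $k_{j+1} - k_j$ to the first power in the denominator, after invoking $h_{j+1} = k_{j+1} + o(1)$ to replace $h_{j+1} - k_j$ by a quantity comparable to $k_{j+1} - k_j$.

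It remains to bound the error trace $\Tr(P_K^j(H - K)(1 - P_H^j)) = \sum_{l \le j}\langle \psi_l, (H - K)(1 - P_H^j)\psi_l\rangle$, and this is where the quantitative content of Theorem~\ref{mainthm} enters. The proof of that theorem controls $H - K$ as a quadratic form on the low-energy sector, with the $O(\xi^{3/2}N^{-1/2})$ scaling of \eqref{spectrum}; feeding in a form bound of the type $\langle\phi,|H-K|\phi\rangle \lesssim N^{-1/2}\langle\phi, K^{3/2}\phi\rangle$ on states of energy $\lesssim k_j$ (via a Cauchy--Schwarz estimate pairing $\langle\psi_l,(H-K)\,\cdot\,\rangle$ against $|H-K|$) and summing over $l \le j$ yields a bound of order $N^{-1/2}\sum_{l \le j} k_l^{3/2} \le (k_j/N)^{1/2}\sum_{l \le j} k_l$, using $k_l^{3/2} = k_l^{1/2}k_l \le k_j^{1/2}k_l$. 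Combined with the gap factor this gives the asserted estimate. The main obstacle is exactly this last input: extracting from the construction of the unitary $\mathcal U$ (cf.\ \eqref{defofU}) a form bound on $H - K$ that is uniform on the low-energy subspace and carries the correct $N^{-1/2}$ scaling, since this is the quantitative heart of the whole analysis. A secondary technical point is to verify, using the same estimates, that the perturbation is small compared with $k_{j+1} - k_j$, so that the ordering of the lowest $j$ levels of $H$ matches that of $K$; this is what legitimises both the rank count and the replacement of $h_{j+1} - k_j$ by $k_{j+1} - k_j$.
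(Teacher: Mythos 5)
Your skeleton --- the identity $\tfrac12\|P_K^j-P_H^j\|_2^2=\Tr\bigl(P_K^j(1-P_H^j)\bigr)$, the spectral-gap inequality for $H$, and the discarding of the non-positive term $\Tr\bigl(P_K^j(K-k_j)(1-P_H^j)\bigr)$ --- is algebraically sound, but the two inputs you yourself flag as obstacles are genuine gaps, not secondary technicalities. First, your denominator $h_{j+1}-k_j$ must be comparable to $k_{j+1}-k_j$, which requires the eigenvalue error $|h_{j+1}-k_{j+1}|=O(k_{j+1}^{3/2}N^{-1/2})$ to be small relative to the gap. The corollary, however, is asserted for \emph{every} $j$ with $k_{j+1}>k_j$, with no lower bound on the size of the gap; when $k_{j+1}-k_j\lesssim k_{j+1}^{3/2}N^{-1/2}$ your denominator can degenerate or even turn negative, and the trivial estimate $\|P_K^j-P_H^j\|_2^2\le 2j$ does not always cover this regime: at gap scale $\sim k_j^{3/2}N^{-1/2}$ the right side of the corollary is of order $\sum_{l\le j}k_l/k_j$, which can be much smaller than $j$ (e.g.\ when most $k_l$ are small compared to $k_j$). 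Second, after your $B$-weighted Cauchy--Schwarz step you must evaluate the dominating form $B$ on the vectors $(1-P_H^j)\psi_l$. The bounds actually available from the paper ((\ref{lowerlast}), (\ref{upperestlast}) together with Lemma~\ref{quadraticbound}) have the shape $\pm(H-K-c_N)\le \lambda K + C(\lambda,\eps,\zeta)\,N^{-1}(\cdots)K^2+\cdots$, so $B$ necessarily contains a $K^2$ term; since $1-P_H^j$ does not commute with $K$, controlling it on $(1-P_H^j)\psi_l$ requires second moments such as $\Tr(P_H^jK^2)$, which none of the paper's estimates supply (Lemmas~\ref{simple bounds} and \ref{quadrbdlem} yield only first moments). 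The form bound $|H-K|\lesssim N^{-1/2}K^{3/2}$ that you posit is not established anywhere in the paper and would itself be a nontrivial addition.

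The paper's actual argument is structured precisely to avoid both problems: it proves only the diagonal first-moment bound $\bra{\chi_k}K\ket{\chi_k}\le k_k\bigl(1+C(\xi/N)^{1/2}\bigr)$ for the low eigenvectors $\chi_k$ of $H$ --- combining the lower bound (\ref{lowerlast}) with Lemmas~\ref{simple bounds}--\ref{quadrbdlem} to get $\bra{\chi_k}K\ket{\chi_k}\le h_k(1+C(\xi/N)^{1/2})$, and the upper bound (\ref{upperestlast}) with min--max to get $h_k\le k_k(1+C(\xi/N)^{1/2})$, with $\xi\sim k_j$ --- and then invokes the abstract overlap lemma of \cite[Lemma~2]{yinsei}. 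That lemma is a counting argument in the $K$-eigenbasis: writing $t_l=\bra{\psi_l}P_H^j\ket{\psi_l}$, one has $\Tr(P_H^jK)\ge\sum_{l\le j}k_lt_l+k_{j+1}\bigl(j-\sum_{l\le j}t_l\bigr)$, and since $\sum_{l\le j}k_l(1-t_l)\le k_j\bigl(j-\sum_{l\le j}t_l\bigr)$ this yields $(k_{j+1}-k_j)\bigl(j-\Tr(P_K^jP_H^j)\bigr)\le C(\xi/N)^{1/2}\sum_{l\le j}k_l$ for every $j$ with $k_{j+1}>k_j$, with the gap of $K$ (not of $H$) in the denominator. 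The gap extraction thus happens entirely on the $K$ side from first moments alone: no inversion against the spectrum of $H$, no comparison of $h_{j+1}$ with $k_{j+1}$, and no off-diagonal control of $H-K$ is ever needed. To salvage your route you would at minimum have to treat the small-gap regime separately and establish second moments of $K$ in the low-energy subspace of $H$; the simpler fix is to replace your second and third steps by the first-moment-plus-abstract-lemma scheme above.
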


The corollary implies, in particular, that the ground state wave function $\Psi_{0}$ of $H_{N}$ satisfies 
\begin{align}\label{lest}
\left\| \Psi_{0}- \mathcal{U}^{\dagger} \otimes_{i=1}^N \varphi_0\right\|^{2}\leq C N^{-1/2}
\end{align}
(for a suitable choice of the phase factor). The presence of the
unitary operator $\mathcal{U}$ in (\ref{lest}) is important, we do not
expect that $\Psi_{0}$ is close to $\otimes_{i=1}^N \varphi_0$ in an
$L^{2}$-sense for large $N$. (Compare with Remark~\ref{finrem} in
Section~\ref{corproof}.)

In addition, the corollary states that the eigenfunctions of $H_{N}$
near the bottom of the spectrum are approximately given by
$\mathcal{U}^\dagger$ applied to the eigenfunctions of (\ref{sumE}),
which are symmetrized products of the eigenfunctions $\varphi_i$ of
$H_\text{H}$ in (\ref{def:HH}). These functions can be obtained by
applying a number, $n$, of raising operators $a^\dagger(\varphi_i)$ to
the $N-n$ particle ground state, which is simply the product
$\prod_{i=1}^{N-n}\varphi_0(x_i)$. (Here we use the convenient Fock
space notation of creation operators, which will be recalled in the
next section.) In Subsection~\ref{ss:I}, we shall also calculate
$\mathcal{U}^\dagger a^\dagger(\varphi_i) \mathcal{U}$ (up to small
error terms), and hence arrive at a convenient alternative
characterization of the excited eigenstates of $H_N$.  (See
Remark~\ref{rem:pr} in Section~\ref{corproof}.)

\begin{rem}\label{nonrig}
  The emergence of the effective operator $E$ in (\ref{def:E}) can
  also be understood as follows. One considers the time-dependent
  Hartree equation $\i\partial_t \varphi = (-\Delta + V_{\rm ext} +
  |\varphi|^2*v) \varphi$ and looks for solutions of the form $\varphi
  = e^{-\i\eps_0 t}( \varphi_0 + u\, e^{-\i \omega t} + \overline{y}\,
  e^{\i\omega t})$ for some $\omega > 0$. Expanding to first order in
  $u$ and $y$ leads to the Bogoliubov-de-Gennes equations (see, e.g.,
  \cite{StringPit}, Eq.~(5.68))
\begin{align}\label{physeigprob}
  \left(\begin{array}{cc}D+V & V \\ -V &
      -(D+V)\end{array}\right)\left( \begin{array}{c} u \\
      y \end{array} \right) =\omega \left( \begin{array}{c} u \\
      y \end{array} \right) \, .
\end{align}
The positive values which can be assumed by $\omega$ are then
interpreted as excitation energies. This is in agreement with our
result: We will see below that the values for $\omega$ obtained this
way are precisely the eigenvalues of $E$. (Compare with
Remark~\ref{rem:bdg} in Section~\ref{sympl}.)
\end{rem}

\subsection{The translation-invariant case}

It is instructive to compare Theorem~\ref{mainthm} with the
translation invariant case studied in \cite{Seiring}, where the Bose
gas is confined to the flat unit torus $\mathbb{T}^{d}$. Up to an
additive constant, the Hartree operator equals the Laplacian in this
case, whose eigenfunctions are conveniently labeled by the quantized
momentum $p \in (2\pi \Z)^{d}$, and are given explicitly by the plane
waves $\varphi_{p}(x)=e^{\i p\cdot x}$. In this basis, the operators
$D$ and $V$ can be written as
\begin{align*}
D&=  \sum_{p \in (2\pi \Z)^{d} }   {p^{2}} \ket{\varphi_{p}}\bra{\varphi_{p}}  \\
V&=\sum_{p \in (2\pi \Z)^{d} }  \hat{v}(p) \ket{\varphi_{p}}\bra{\varphi_{p}}
\end{align*}
with $\hat{v}(p)= \int_{\mathbb{T}^{d}}v(x)e^{-\i p\cdot x} dx = \hat
v(-p)$. Since $D$ and $V$ commute in this case, we further have
\begin{align*}
E=  \sum_{p \in (2\pi \Z)^{d}} \sqrt{{{p^{4}}+{2p^{2}}\hat{v}(p)}} \ket{\varphi_{p}}\bra{\varphi_{p}} \,.
\end{align*}
Hence
\begin{align*}
  \mathrm{tr}\left(D+V - E\right) = \sum_{p \in (2\pi \Z)^{d}}
  \left(p^{2}+\hat{v}(p)-\sqrt{p^{4}+2p^{2}\hat{v}(p) } \right) \, ,
\end{align*}
and the eigenvalues of $E$ are given by 
\begin{align*}
e_{p} = \sqrt{p^{4}+2p^{2}\hat{v}(p) } \, ,
\end{align*}
yielding the well-known Bogoliubov spectrum of elementary excitations,
which is linear in $|p|$ for small momentum.

\subsection{Short-range interactions}

In Theorem~\ref{mainthm}, we assumed that $v(x)$ is a bounded
function. If we replace $v(x)$ by $g\delta(x)$, then $D+V-E$ will, in
general, fail to be trace class (in fact, it is not for the above
model of bosons on $\mathbb{T}^{d}$ for $d\geq 2$). However,
Formula~(\ref{spectrum}) for the excitation spectrum still makes
sense. Since all our bounds are quantitative, our proof thus shows
that if $v$ is allowed to depend on $N$ in such a way that it
converges to a $\delta$-function, and $v(0)$ increases with $N$ {\em
  slow enough}, then the excitation spectrum is still of the form
$\sum_{i} e_i n_i$, where $e_i$ are the non-zero eigenvalues of $E$ in
(\ref{def:E}), and $V$ is now the multiplication operator $g
\varphi_0(x)^2$. If $v(0)$ increases too fast with $N$, though, our
error bounds cease to be good enough to allow this
conclusion. 

Consider now the case $d=3$. If we write the interaction potential as $(N-1)^{-1} \lambda_N^3
v_0(\lambda_N x)$ for some fixed, $N$-independent $v_0$, with
$\lambda_N\to \infty$ as $N\to \infty$, we expect that the Bogoliubov
approximation yields the correct excitation spectrum as long as
$\lambda_N \ll N$.  If $\lambda_N\sim N$, the scattering length of the
interaction potential is of the same order as the range of the
interactions. This corresponds to the Gross-Pitaevskii scaling
\cite{LSSY} of a dilute gas. In this latter case, the scattering length
becomes the physically relevant parameter quantifying the interacting
strength, instead of $\int_{\R^{3}} v(x) dx$. Hence we expect the
following to be true.

\begin{conj}\label{gpconj}
Consider the Hamiltonian
\begin{align*}
  H_{N}^{\mathrm{GP}}:=\sum_{i=1}^{N}\left(-\Delta_{i}+V_{\rm{ext}}(x_{i})\right)+N^{2}\sum_{i<j}v(N(x_{i}-x_{j}))
  \, ,
\end{align*}
on $L^{2}(\R^{3 })^{\otimes_{\text{s}}N}$, with $v$ non-negative,
bounded and integrable at infinity, and denote its ground state energy
by $E_{0}(N)$. The spectrum of $H_{N}^{\mathrm{GP}}-E_{0}(N)$ below an
energy $\xi\ll N $ is equal to finite sums of the form
\begin{align*}
  \sum_{i\geq 1} e_{i}n_{i} \left(1 + o(1) \right)
\end{align*}
for large $N$. Here, $e_{i}$ is defined as in the Hartree case with
the replacements
\begin{align*}
H_{\rm H}&\leadsto H_{\mathrm{GP}}:=-\Delta+V_{\mathrm{ext}}+8\pi a_{0}\varphi_{0}^{2} \\
V&\leadsto 8\pi a_{0} \varphi_0^2 \, ,
\end{align*}
where $\varphi_0$ is now the minimizer of the Gross-Pitaevskii energy
functional, and $a_{0}$ is the zero energy scattering length of the
interaction potential $v(x)$.
\end{conj}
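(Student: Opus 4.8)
The plan is to reduce the Gross–Pitaevskii problem to the Hartree analysis of Theorem~\ref{mainthm} after first extracting the short-scale correlation structure responsible for the replacement $\int v \leadsto 8\pi a_0$. As a first step I would pass to the excitation Hilbert space relative to the condensate $\varphi_0$, which is now the Gross–Pitaevskii minimizer: writing a symmetric $N$-particle state in terms of the number of particles outside the condensate and the creation operators $a^\dagger(\varphi_i)$, $i\geq 1$, one obtains an excitation Hamiltonian on the truncated bosonic Fock space built over the orthogonal complement of $\varphi_0$. In contrast to the mean-field case, the two-body potential $N^2 v(N\,\cdot\,)$ is now concentrated on the scale $1/N$, so its effect on the energy and on the effective coupling is no longer captured by $\int v$ alone; the difference $\int v - 8\pi a_0 = \int v\,(1-f)$, where $f$ solves the zero-energy scattering equation $-\Delta f + \frac{1}{2} v f = 0$ with $f\to 1$ at infinity, is an order-one correction generated by short-range two-particle correlations.

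The central new ingredient, absent in the Hartree treatment, is a unitary correlation transformation $e^{B}$ that implements these correlations, where $B$ is an antisymmetric quadratic expression in the excitation creation and annihilation operators with kernel determined by the scattering solution $f$ on the scale $1/N$. Conjugating the excitation Hamiltonian by $e^{B}$ should renormalize the singular potential, replacing it up to controllable errors by the regular effective interaction $8\pi a_0\varphi_0^2$. After this conjugation the quadratic part of the excitation Hamiltonian takes exactly the Bogoliubov form with $V$ replaced by the multiplication operator $8\pi a_0\varphi_0^2$ and $H_{\rm H}$ replaced by $H_{\rm GP}$, so that the relevant one-particle operators become $D = H_{\rm GP}-\eps_0$ and $E=(D^{1/2}(D+2V)D^{1/2})^{1/2}$ with this new $V$. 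I would then apply the Bogoliubov diagonalization of the present paper, i.e.\ the unitary $\mathcal U$ of Theorem~\ref{mainthm}, to the renormalized quadratic Hamiltonian, yielding the spectrum $\sum_{i\geq 1}e_i n_i(1+o(1))$ with $e_i$ the eigenvalues of the new $E$.

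Two a priori inputs are essential for the error analysis. The first is complete Bose–Einstein condensation together with a quantitative bound on the expected number of excitations $\langle \mathcal N_+\rangle$ in low-energy states; this localizes the analysis to the sector where Bogoliubov's approximation is valid and controls the Fock-space truncation. The second is the treatment of the \emph{cubic} terms $a^\dagger a^\dagger a + \text{h.c.}$, which are negligible in the Hartree limit but contribute at leading order in the Gross–Pitaevskii regime and must be removed by an additional cubic conjugation before the quadratic diagonalization can be carried out.

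The main obstacle I expect is precisely the rigorous control of the correlation transformation $e^{B}$ in the inhomogeneous, trapped setting. The non-commutativity of $D$ and $V$, already the source of the new difficulties in Theorem~\ref{mainthm}, now interacts with the singular, $N$-dependent scattering structure, so the commutator estimates needed to show that $e^{B}$ turns $N^2 v(N\,\cdot\,)$ into $8\pi a_0\varphi_0^2$ up to $o(1)$ errors are considerably more delicate than in the mean-field case. A natural way to organize the proof, and a sensible intermediate target, is to first establish the result for $\lambda_N\ll N$, where the potential is less singular and the correlation corrections are subleading, and only afterwards push the estimates to the physically critical scaling $\lambda_N\sim N$.
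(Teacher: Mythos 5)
The statement you are trying to prove is Conjecture~\ref{gpconj}: the paper contains \emph{no} proof of it, and explicitly expects any proof to be substantially harder than that of Theorem~\ref{mainthm}, because the Bogoliubov approximation must be modified to capture the short-distance correlations that produce the scattering length $a_0$ in place of $(8\pi)^{-1}\int v$. So there is no argument in the paper to compare yours against step by step; the question is whether your text constitutes a proof, and it does not. It is a strategy outline --- and, to your credit, essentially the strategy that later succeeded (excitation map, quadratic correlation transformation $e^{B}$ built from the zero-energy scattering solution, an additional cubic conjugation, then Bogoliubov diagonalization, as carried out by Boccato, Brennecke, Cenatiempo and Schlein years after this paper). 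But every load-bearing step is asserted rather than established. The operator $B$ is never constructed: you do not specify its kernel (in practice a suitably localized and projected version of $N(1-f(N(x-y)))\varphi_0(x)\varphi_0(y)$), nor prove the bounds without which nothing follows --- uniform Hilbert--Schmidt control of the kernel, stability of the number operator and of the energy under conjugation by $e^{B}$, and the cancellation identities showing that the commutator of $B$ with the kinetic energy reproduces, up to controllable errors, the singular part of $N^{2}v(N\,\cdot\,)$. The sentence ``conjugating \ldots should renormalize the singular potential'' \emph{is} the conjecture; stating it is not proving it.

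Two of your ``a priori inputs'' are themselves theorems of difficulty comparable to the whole conjecture: complete condensation with an $O(1)$ bound on the expected number of excitations in low-energy states in the Gross--Pitaevskii regime is not available off the shelf (it was eventually obtained by a bootstrap through the same renormalization machinery), and the cubic conjugation needed to remove the order-one cubic terms requires its own delicate growth estimates. Finally, even the last step cannot be quoted verbatim from this paper: the diagonalization in Proposition~\ref{maintechprop} uses that $V$ is trace class with finite $\mathrm{tr}\, V = v(0)$, whereas for $V=8\pi a_0\varphi_0^{2}$ in $d=3$ the operator $D+V-E$ fails to be trace class (the paper itself points this out for $\delta$-type interactions), so the ground-state energy involves a divergent Bogoliubov trace that must cancel against terms generated by $e^{B}$ --- a cancellation your outline does not address. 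Your closing suggestion to first treat intermediate scalings $\lambda_N\ll N$ agrees with the paper's own expectation, but it too is offered without estimates. In short: correct roadmap, no proof; each arrow in the roadmap hides a substantial theorem.
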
 

We expect the proof of Conjecture~\ref{gpconj} to be more complicated
than that of Theorem~\ref{mainthm}. In particular, the Bogoliubov
approximation would have to be modified in such a way to account for
the detailed structure of the wave function when particles are close,
which gives rise to the scattering length $a_{0}$ (instead of merely
its first-order Born approximation $(8\pi)^{-1}\int_{\R^{3}} v(x)
dx$).

\subsection{Outline}

The remainder of the paper is organized as follows. In
Section~\ref{prelim} we establish bounds on the number of particles
outside the condensate, the $N$-body Hartree operator $\sum_{i=1}^N
D_i$, and their product for a low-energy state. Section~\ref{Bogoham}
shows how $H_{N}$ can be bounded from above and below by what we call
the Bogoliubov Hamiltonian, which is formally close to Bogoliubov's
approximate quadratic Hamiltonian on Fock space, yet is particle
number conserving. The diagonalization of the quadratic Hamiltonian
can be achieved by a Bogoliubov transformation, which is carried out
in Section~\ref{sympl}. To diagonalize the actual Bogoliubov
Hamiltonian we use a modification thereof, which involves the
estimation of various error terms (Section~\ref{sec_est}). Finally, we
shall complete the proof of Theorem~\ref{mainthm}
(Section~\ref{finish}) and Corollary~\ref{cor2}
(Section~\ref{corproof}).

Throughout this work a multiplicative constant $C$ in an estimate is
understood to be generic: it can have different values on each
appearance. By $\|\,\cdot\,\|$ we denote the operator or vector norm,
depending on context; $\|\cdot \|_{1}$ and $ \|\cdot \|_{2}$ denote
the trace class and Hilbert-Schmidt norms of operators, respectively.

\section{Bounds on the Condensate Depletion}\label{prelim}

It is convenient to regard the $N$-particle Hilbert space
$\mathcal{F}^{(N)}:=L^{2}(\R^{d})^{\otimes_{\text{s}}N}$, the
symmetric tensor product of $N$ one-particle Hilbert spaces
$L^{2}(\R^{d})$, as a subspace of the bosonic Fock space $\mathcal{F}=
\oplus_{N=0}^{\infty}\mathcal{F}^{(N)}$. The Hamiltonian $H_{N}$ can
then be written in second quantized form as
\begin{align}\label{hnsq}
  H_{N}=\sum_{i,j}h_{ij}a_{i}^{\dagger}a_{j}+\frac{1}{2(N-1)}\sum_{i,j,k,l}v_{ijkl}a_{j}^{\dagger}a_{i}^{\dagger}a_{k}a_{l}
\end{align}
where
$$
h_{ij}:=\langle \varphi_{i} | -\Delta+V_{\text{ext}} |
\varphi_{j}\rangle \quad \text{and} \quad v_{ijkl}:=\langle
\varphi_{i},\varphi_{j}|v| \varphi_{k},\varphi_{l}\rangle \,.
$$
Recall that the set $\{\varphi_i\}_{i\in \N}$ denotes the orthonormal
basis of eigenfunctions of $H_{\text{H}}$ in (\ref{def:HH}), which we
can all assume to be chosen {\em real} without loss of generality. The
operators $a_{i}^\dagger$ and $a_{i}$ in (\ref{hnsq}) are the usual
creation and annihilation operators corresponding to these functions,
i.e., $a_{i}:=a(\varphi_{i})$.

To be precise, $H_N$ in (\ref{def:hn}) agrees with the right side of
(\ref{hnsq}) on the subspace $\mathcal{F}^{(N)}$. We shall always work
on this subspace, and use Fock space notation only for convenience. In
particular, unless stated otherwise, all subsequent identities and
inequalities involving operators on Fock space are understood as
holding on $\mathcal{F}^{(N)}$ only.

We introduce the rank-one projection
$P=\ket{\varphi_{0}}\bra{\varphi_{0}}$ and the complementary
projection $Q=1 -P$. The operator that counts the number of particles
outside the Hartree ground state is the second quantization
$\mathrm{d}\Gamma(Q)$ of $Q$ and will be denoted by $N^{>}$, i.e.
\begin{align*}
  N^{>}=\sum_{i=1}^{N}Q_{i}=\sum_{i}{}^{'}a^{\dagger}_{i}a_{i}\, .
\end{align*}
Here and in the following, $\sum^{'}$ denotes a sum over all nonzero
indices.  Another important quantity is the following $N$-body Hartree
operator,
\begin{align}\label{def:TH}
  T_\text{H}:=\sum_{i=1}^{N}\left(-\Delta_{i}+V_{\text{ext}}(x_{i})+(v\ast
    \varphi_{0}^{2})(x_{i})-\eps_{0}\right) = \mathrm{d}\Gamma(D)\, ,
\end{align}
with $D$ defined in (\ref{def:D}).

The following lemma gives simple bounds on the ground state energy of
$H_N$, as well as on the expectation values of $N^{>}$ and
$T_\text{H}$ in low-energy states.

\begin{lem}\label{simple bounds} The ground state energy $E_0(N)$ of
  $H_{N}$ satisfies the bounds
  \begin{align*}
    0\geq E_{0}(N)-Nh_{00}-\frac{N}{2}v_{0000}\geq{\frac 1
      2}v_{0000}-\frac{N}{2(N-1)}v(0)\, .
  \end{align*}
  Moreover, for any $N$-particle state $\Psi$ with $\bra{\Psi} H_{N}
  \ket{\Psi}\leq Nh_{00} +\frac{N}{2}{v}_{0000}+\mu$, we have
  \begin{align}\label{expvalineq}
    (\eps_{1}-\eps_{0})\bra{\Psi} N^{>} \ket{\Psi} \leq
    \bra{\Psi}T_\text{H}\ket{\Psi}\leq \mu+
    \frac{N}{2(N-1)}v(0)-\frac{v_{0000}}{2}\, .
  \end{align}
\end{lem}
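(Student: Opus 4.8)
The plan is to prove the two chains of inequalities essentially by testing $H_N$ against well-chosen states and by comparing $H_N$ to the Hartree operator $T_{\mathrm{H}}$ operatorwise. I would begin with the upper bound on $E_0(N)$. Since $E_0(N)$ is an infimum, I simply evaluate the energy in the pure condensate state $\Psi_0 := \otimes_{i=1}^N \varphi_0$, i.e.\ the vector with all particles in $\varphi_0$. A direct computation of $\langle \Psi_0 | H_N | \Psi_0\rangle$ using the second-quantized form (\ref{hnsq}) gives $N h_{00}$ from the one-body part and, from the interaction, $\frac{1}{2(N-1)}\binom{N}{2}\cdot 2\,v_{0000}=\frac{N}{2}v_{0000}$; the annihilation operators acting on $\Psi_0$ only leave the term with all indices equal to $0$. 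This yields $E_0(N)\le N h_{00}+\frac{N}{2}v_{0000}$, which is the left inequality.

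For the lower bound on $E_0(N)$, the idea is to rewrite $H_N$ in terms of $T_{\mathrm{H}}$. One has $\sum_i(-\Delta_i+V_{\mathrm{ext}}(x_i))=T_{\mathrm{H}}+N\epsilon_0-\sum_i(v*\varphi_0^2)(x_i)$, so that $H_N = T_{\mathrm{H}}+N\epsilon_0-\sum_i(v*\varphi_0^2)(x_i)+\frac{1}{N-1}\sum_{i<j}v(x_i-x_j)$. Since $v$ is of positive type and non-negative, the interaction term is a non-negative operator, and $T_{\mathrm{H}}=\mathrm{d}\Gamma(D)\ge 0$. The remaining terms are one-body; completing the square or, more directly, estimating the interaction from below against its mean-field replacement should produce the stated constant $\frac12 v_{0000}-\frac{N}{2(N-1)}v(0)$. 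Concretely, I would use $\frac{1}{N-1}\sum_{i<j}v(x_i-x_j)\ge \sum_i(v*\varphi_0^2)(x_i)-N\epsilon_0+N h_{00}+\frac{N}{2}v_{0000}+\big(\tfrac12 v_{0000}-\tfrac{N}{2(N-1)}v(0)\big)$ after dropping $T_{\mathrm{H}}\ge0$, tracking the self-interaction term $\frac{1}{2(N-1)}\sum_i v(0)=\frac{N}{2(N-1)}v(0)$ that arises when the double sum is completed to a full square.

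For the second display (\ref{expvalineq}), the leftmost inequality $(\epsilon_1-\epsilon_0)\langle\Psi|N^{>}|\Psi\rangle\le\langle\Psi|T_{\mathrm{H}}|\Psi\rangle$ is an operator inequality: on $\mathcal F^{(N)}$ one has $T_{\mathrm{H}}=\mathrm{d}\Gamma(D)$ and $D\ge(\epsilon_1-\epsilon_0)Q$, since $D$ annihilates $\varphi_0$ and has all other eigenvalues at least $\epsilon_1-\epsilon_0$; second-quantizing this gives $T_{\mathrm{H}}\ge(\epsilon_1-\epsilon_0)N^{>}$, and taking expectations finishes it. For the rightmost inequality I would again write $H_N=T_{\mathrm{H}}+\big(\text{one-body and interaction terms}\big)$ and take the expectation in $\Psi$. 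The hypothesis bounds $\langle\Psi|H_N|\Psi\rangle$ by $N h_{00}+\frac{N}{2}v_{0000}+\mu$; subtracting the constant $N\epsilon_0$ and using non-negativity of the interaction operator to bound it from below (together with the arithmetic of the one-body Hartree terms, which reproduce $N h_{00}+\frac N2 v_{0000}$ up to the same $\frac{N}{2(N-1)}v(0)-\frac{v_{0000}}{2}$ correction as above) isolates $\langle\Psi|T_{\mathrm{H}}|\Psi\rangle\le\mu+\frac{N}{2(N-1)}v(0)-\frac{v_{0000}}{2}$.

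The main obstacle, and the only place requiring genuine care rather than routine bookkeeping, is the careful treatment of the interaction term and the self-energy correction: one must correctly account for the difference between the genuine two-body sum $\frac{1}{N-1}\sum_{i<j}v(x_i-x_j)$ and its mean-field surrogate $\sum_i(v*\varphi_0^2)(x_i)$, and in particular track the diagonal $i=j$ contribution $v(0)$ that the positive-type property lets one drop only at the cost of the explicit $\frac{N}{2(N-1)}v(0)$ term. I would handle this by writing the interaction in normal-ordered form and using that $\sum_{i,j}\varphi_0(x_i)v(x_i-x_j)\varphi_0(x_j)$-type quadratic forms are non-negative by the positive-type assumption; the constant then falls out by matching the combinatorial prefactors $\frac{1}{2(N-1)}$ against the number of pairs.
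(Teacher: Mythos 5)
Your proposal follows essentially the same route as the paper: the trial state $\otimes_{i=1}^N\varphi_0$ for the upper bound, the positive-type ``completion of the square'' comparing $\frac{1}{N-1}\sum_{i<j}v(x_i-x_j)$ with $\sum_i(v*\varphi_0^2)(x_i)$ (your concrete inequality is exactly the paper's (\ref{interacts}), which the paper derives from $0\le\int\psi\, v\,\psi$ with $\psi=\varphi_0^2-\frac{1}{N-1}\sum_i\delta_{x_i}$, the diagonal $i=j$ terms producing the $\frac{N}{2(N-1)}v(0)$ correction you track), and the operator inequality $T_{\mathrm{H}}=\mathrm{d}\Gamma(D)\ge(\eps_1-\eps_0)N^{>}$ from $D\ge(\eps_1-\eps_0)Q$ for (\ref{expvalineq}). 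All constants check out, so the argument is correct and matches the paper's proof.
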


Recall that $\epsilon_0$ and $\epsilon_1$ denote the lowest two
eigenvalues of the Hartree operator $H_{\rm H}$ in (\ref{def:HH}).  We
emphasize that $\eps_{1}-\eps_{0}> 0$.

\begin{proof}
  For the upper bound we use the trial function $\ket{ N,0,\dots}$
  denoting a state where all particles occupy the ground state of the
  Hartree operator $H_{\text{H}}$. This yields
  \begin{align*}
    E_{0}(N)&\leq \sum_{i,j}h_{ij} \bra{N,0,\dots} a_{i}^{\dagger}a_{j}\ket{N,0,\dots} \\ &\ \ +\frac{1}{2(N-1)}\sum_{i,j,k,l}v_{ijkl} \bra{N,0,\dots} a_{j}^{\dagger}a_{i}^{\dagger}a_{k}a_{l}\ket{N,0,\dots} \\
    &= Nh_{00}+\frac{N}{2}v_{0000}\, .
  \end{align*}
  For the lower bound we exploit the positive definiteness of the
  interaction potential $v$ in the following way. With
  $\psi(x)=\varphi_{0}^{2}(x)-\frac{1}{{N-1}}\sum_{i=1}^{N}\delta(x-x_{i})$,
  we have
  \begin{align} \nonumber 0 & \leq
    \int_{\R^{2d}}\psi(x)v(x-y)\psi(y)dxdy \\ & =
    v_{0000}-\frac{2}{N-1}\sum_{i=1}^{N}(v \ast
    \varphi_{0}^{2})(x_{i})+\frac{1}{(N-1)^{2}}\sum_{i,j}v(x_{i}-x_{j})\,
    .  \label{posest}
  \end{align}
  Put differently, this inequality reads
  \begin{align}\label{interacts}
    \frac{1}{N-1}\sum_{i<j}v(x_{i}-x_{j}) \geq
    -\frac{N-1}{2}v_{0000}+\sum_{i=1}^{N}(v\ast
    \varphi_{0}^{2})(x_{i})-\frac{N}{2(N-1)}v(0)\,.
  \end{align}
  Since $\eps_0 = h_{00} + v_{0000}$, we hence have
  \begin{align*}
    H_{N}&\geq \sum_{i=1}^{N}\left(-\Delta_{i}+V_{\text{ext}}(x_{i})+(v\ast \varphi_{0}^{2})(x_{i})\right) -\frac{N-1}{2}v_{0000}-\frac{N}{2(N-1)}v(0) \nonumber \\
    &= T_{\text{H}}+Nh_{00}+\frac{N+1}{2}v_{0000}-\frac{N}{2(N-1)}v(0)
    \,.
  \end{align*}
  The asserted bounds now follows, since $T_{\text{H}} \geq
  (\epsilon_1-\epsilon_0) N^> \geq 0$.
\end{proof}

\begin{rem} \label{opin1} The proof actually shows the operator
  inequality
  \begin{align*}
    T_\text{H}\leq H_{N}-Nh_{00}-\frac{N+1}{2}v_{0000}
    +\frac{N}{2(N-1)}v(0)
  \end{align*}
  from which (\ref{expvalineq}) readily follows.
\end{rem}

In our analysis we shall also need bounds on the expectation value of
the product $N^{>}T_\text{H}$ for a low-energy state. Such a bound is
the content of Lemma~\ref{quadrbdlem}.

\begin{lem} \label{quadrbdlem} Let $\Psi$ be an $N$-particle wave
  function in the spectral subspace of $H_{N}$ corresponding to an
  energy $E\leq E_{0}(N)+\mu$. Then
  \begin{align*}
    (\eps_{1}-\eps_{0})\bra{\Psi} N^{>}T_\text{H}\ket{\Psi} \leq
    (\mu-v_{0000}&+3v(0))\left( \mu+
      \frac{N}{2(N-1)}v(0)-\frac{v_{0000}}{2}\right) +\frac{1}{4}
    {\left(2v(0)+\mu\right)^{2}}\, .
  \end{align*}
\end{lem}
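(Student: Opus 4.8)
The goal is to bound $\langle\Psi|N^{>}T_{\text{H}}|\Psi\rangle$ for a low-energy state $\Psi$. The natural strategy is to use the same positive-definiteness trick that drove the proof of Lemma~\ref{simple bounds}, but now to estimate a \emph{product} of operators rather than a single expectation value. The plan is to produce an operator inequality in which $N^{>}T_{\text{H}}$ (or a closely related quantity) appears on one side, and a product of two simpler quantities --- each controlled by Lemma~\ref{simple bounds} --- appears on the other. Roughly, I expect to write $T_{\text{H}} \le H_N - c_N$ as an operator inequality (this is exactly Remark~\ref{opin1}, with $c_N := Nh_{00}+\frac{N+1}{2}v_{0000}-\frac{N}{2(N-1)}v(0)$), and then to feed in the spectral hypothesis $H_N \le E_0(N)+\mu$ together with $N^{>}\ge 0$.

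The main technical point is that $N^{>}$ and $T_{\text{H}}$ need not commute, so $N^{>}T_{\text{H}}$ is not manifestly self-adjoint and one cannot simply multiply two expectation-value bounds. My plan is to symmetrize: write the real expectation value as $\langle\Psi|N^{>}T_{\text{H}}|\Psi\rangle = \frac12\langle\Psi|(N^{>}T_{\text{H}}+T_{\text{H}}N^{>})|\Psi\rangle$ plus a commutator term, and then estimate the commutator $[N^{>},T_{\text{H}}] = [\mathrm{d}\Gamma(Q),\mathrm{d}\Gamma(D)] = \mathrm{d}\Gamma([Q,D])$. Since $D=H_{\text{H}}-\eps_0$ commutes with $P=|\varphi_0\rangle\langle\varphi_0|$ and hence with $Q=1-P$, this commutator in fact vanishes, so $N^{>}$ and $T_{\text{H}}$ \emph{do} commute and the symmetrization issue disappears. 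That observation is the key simplification; I would verify it carefully since it is what makes the argument clean.

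With commutativity in hand, I would bound the product by inserting the operator inequality from Remark~\ref{opin1}. Concretely, since $N^{>}\ge 0$ and $T_{\text{H}}$ commutes with $N^{>}$, one has $(\eps_1-\eps_0)N^{>}T_{\text{H}} \le T_{\text{H}}^2 \le T_{\text{H}}(H_N-c_N)$, and I would then replace the factor $H_N - c_N$ by its spectral bound on the relevant subspace. The appearance of the \emph{two} factors in the stated right-hand side --- one resembling the bound $\mu+\frac{N}{2(N-1)}v(0)-\frac{v_{0000}}{2}$ from Lemma~\ref{simple bounds}, the other a fresh factor $\mu-v_{0000}+3v(0)$ --- strongly suggests that one factor comes from the $T_{\text{H}}$-bound of Lemma~\ref{simple bounds} and the other from re-estimating the interaction term, where the constant $3v(0)$ and the extra $\frac14(2v(0)+\mu)^2$ arise from controlling $v_{0000}$, $v(0)$ and cross terms. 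I expect the genuine obstacle to lie precisely here: one must bound the interaction part of $H_N$ not just from below (as in Lemma~\ref{simple bounds}) but in a two-sided way, and the correction terms involving $v(0)$ come from estimating how far $\frac{1}{N-1}\sum_{i<j}v(x_i-x_j)$ can be above its Hartree value, using boundedness $v\le v(0)$ and the diagonal contributions. Matching the precise constants $3v(0)$ and $\frac14(2v(0)+\mu)^2$ will be the delicate bookkeeping step, but it is routine once the commutativity and the operator-inequality framework are set up.
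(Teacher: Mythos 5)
Your two preliminary observations are correct: $[N^{>},T_\text{H}]=\mathrm{d}\Gamma([Q,D])=0$ does hold (the paper uses this implicitly), and Remark~\ref{opin1} gives the operator inequality $T_\text{H}\le H_{N}-c_{N}$. But the pivotal step of your plan fails. From $(\eps_{1}-\eps_{0})N^{>}T_\text{H}\le T_\text{H}^{2}$ you propose $T_\text{H}^{2}\le T_\text{H}(H_{N}-c_{N})$ and then to ``replace $H_{N}-c_{N}$ by its spectral bound on the relevant subspace.'' The correct operator-monotone version of that chain is $T_\text{H}^{2}\le T_\text{H}^{1/2}(H_{N}-c_{N})T_\text{H}^{1/2}$, and at that point the spectral hypothesis is unusable: $H_{N}\le E_{0}(N)+\mu$ holds only as a quadratic-form bound on vectors \emph{in} the spectral subspace, while $T_\text{H}^{1/2}\Psi$ is not in that subspace, because $T_\text{H}$ does not commute with $H_{N}$ (it fails to commute with the interaction). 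If instead you try Cauchy--Schwarz on $\bra{\Psi}T_\text{H}\left(H_{N}-E_{0}(N)-\tfrac{\mu}{2}\right)\ket{\Psi}$, you pay $\|T_\text{H}\Psi\|=\bra{\Psi}T_\text{H}^{2}\ket{\Psi}^{1/2}$, which Lemma~\ref{simple bounds} does not control (it only controls $\bra{\Psi}T_\text{H}\ket{\Psi}$), and which in fact dominates the quantity you are trying to bound, since $T_\text{H}^{2}\ge(\eps_{1}-\eps_{0})N^{>}T_\text{H}$ --- so the argument is circular without a new idea.

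The paper's mechanism, absent from your sketch, is the following. (i) Decompose $N^{>}T_\text{H}=N^{>}S+N^{>}\left(H_{N}-E_{0}(N)-\tfrac{\mu}{2}\right)$, where $S$ collects the mean-field minus interaction terms; Cauchy--Schwarz then makes $H_{N}-E_{0}(N)-\tfrac{\mu}{2}$ act \emph{directly} on $\Psi$, where its norm is at most $\mu/2$, at the price of $\bra{\Psi}(N^{>})^{2}\ket{\Psi}^{1/2}$ rather than the uncontrollable $\bra{\Psi}T_\text{H}^{2}\ket{\Psi}^{1/2}$ --- the choice of which factor absorbs the spectral hypothesis is the crux. (ii) Estimate $\bra{\Psi}N^{>}S\ket{\Psi}=N\bra{\Psi}Q_{1}S\ket{\Psi}$ by permutation symmetry, split $S=S_{a}+S_{b}$, apply the positive-definiteness trick to particles $2,\dots,N$ to get $S_{a}\le\tfrac12\left(v(0)-v_{0000}+\mu\right)$, and treat $S_{b}$ by inserting $1=P_{2}+Q_{2}$, where the term $Q_{1}P_{2}\left[(v\ast\varphi_{0}^{2})(x_{1})-v(x_{1}-x_{2})\right]P_{2}$ has vanishing expectation; this is where the constant $3v(0)$ comes from. (iii) Close the bound by absorbing $\bra{\Psi}(N^{>})^{2}\ket{\Psi}^{1/2}\le\left((\eps_{1}-\eps_{0})^{-1}\bra{\Psi}N^{>}T_\text{H}\ket{\Psi}\right)^{1/2}$ (this is where the commutativity you noticed actually enters) back into the left-hand side; this bootstrap produces the term $\tfrac14(2v(0)+\mu)^{2}$. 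These steps are the substance of the proof, not deferred bookkeeping; your outline as it stands cannot be completed into a proof.
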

\begin{rem}
  A slight modification of the proof yields the operator inequality
  \begin{align*}
    (\eps_{1}-\eps_{0})N^{>}T_\text{H}\leq
    (3v(0)-v_{0000})T_\text{H}+2v(0)^{2}+2(H_{N}-E_{0}(N))^{2}\, .
  \end{align*}
\end{rem}

\begin{proof} We write
  \begin{align*}
    \bra{\Psi} N^{>}T_\text{H}\ket{\Psi} = \bra{\Psi} N^{>}S
    \ket{\Psi} + \left\langle \Psi \left| N^{>}\left(
          H_{N}-E_{0}(N)-\frac{\mu}{2}\right) \right| \Psi
    \right\rangle
  \end{align*}
  where $S= \sum_{i=1}^{N} (v\ast \varphi_{0}^{2})(x_{i})- N\eps_{0}
  -\frac{1}{N-1}\sum_{i<j}v(x_{i}-x_{j}) +\frac{\mu}{2}+
  E_{0}(N)$. The second term can be bounded by Schwarz's inequality as
  \begin{align*}
    \left| \left\langle \Psi \left| N^{>}\left(
            H_{N}-E_{0}(N)-\frac{\mu}{2}\right) \right| \Psi
      \right\rangle \right| \leq \frac{\mu}{2} \bra{\Psi} (N^{>})^{2}
    \ket{\Psi}^{1/2}\, .
  \end{align*}
  For the first we use the permutation symmetry of $\Psi$ and get
  \begin{align*}
    \bra{\Psi} N^{>}S \ket{\Psi}=N\bra{\Psi}Q_{1}S\ket{\Psi} \, .
  \end{align*}
  We split $S$ into two parts, $S=S_{a}+S_{b}$, where
  \begin{align*}
    S_{a}&:= \sum_{i=2}^{N}(v\ast \varphi_{0}^{2})(x_{i})-N\eps_{0}-\frac{1}{N-1}\sum_{2\leq i <j}v(x_{i}-x_{j}) +\frac{\mu}{2} + E_{0}(N)\, , \\
    S_{b}&:= (v\ast
    \varphi_{0}^{2})(x_{1})-\frac{1}{N-1}\sum_{i=2}^{N}v(x_{1}-x_{i})\,
    .
  \end{align*}
  Using the positive definiteness of $v$ as in (\ref{posest}), this
  time for
  $\psi(x)=\varphi_{0}^{2}(x)-\frac{1}{{N-1}}\sum_{i=2}^{N}\delta(x-x_{i})$,
  we obtain
  \begin{align*}
    \frac{1}{N-1}\sum_{2\leq i<j}v(x_{i}-x_{j})\geq
    -\frac{(N-1)}{2}v_{0000}+\sum_{i=2}^{N}(v\ast
    \varphi_{0}^{2})(x_{i})-\frac{v(0)}{2} \,.
  \end{align*}
  In combination with the upper bound on $E_0(N)$ in Lemma~\ref{simple
    bounds} this implies that
  \begin{align*}
    S_{a}\leq \frac 1 2 \left( v(0)-v_{0000}+\mu \right)\, .
  \end{align*}
  In particular, since $S_a$ commutes with $Q_1$, we have
  \begin{align*}
    N\bra{\Psi} Q_{1}S_{a}\ket{\Psi} \leq \frac 1 2 \left(
      v(0)-v_{0000}+\mu \right)\bra{\Psi}N^{>}\ket{\Psi}\, .
  \end{align*}
  To bound the contribution of $S_{b}$, we compute
  \begin{align*}
    \bra{\Psi} Q_{1}S_{b} \ket{\Psi}=&\bra{\Psi}Q_{1}\left[(v\ast \varphi_{0}^{2})(x_{1})-v(x_{1}-x_{2})\right]\ket{\Psi} \\
    =&  \bra{\Psi} Q_{1}Q_{2}\left[(v\ast \varphi_{0}^{2})(x_{1})-v(x_{1}-x_{2})\right] \ket{\Psi} \\
    & +\bra{\Psi} Q_{1}P_{2}\left[(v\ast \varphi_{0}^{2})(x_{1})-v(x_{1}-x_{2})\right]P_{2}\ket{\Psi} \\
    & + \bra{\Psi} Q_{1}P_{2}\left[(v\ast
      \varphi_{0}^{2})(x_{1})-v(x_{1}-x_{2})\right]Q_{2}\ket{\Psi}\,.
  \end{align*}
  The second term on the right side of the last equation vanishes. For
  the first and the third, we use Schwarz's inequality and
  $|(v\ast\varphi_{0}^{2}(x_{1})-v(x_{1}-x_{2})|\leq v(0)$ to conclude
  \begin{align*}
    |\bra{\Psi} Q_{1}S_{b}\ket{\Psi}| \leq v(0) \bra{\Psi}
    Q_{1}Q_{2}\ket{\Psi}^{{1/2}} +v(0) \bra{\Psi} Q_{1}\ket{\Psi}\,.
  \end{align*}
  Since
  \begin{align*}
    N^{2} \bra{\Psi} Q_{1}Q_{2}\ket{\Psi} \leq \bra{\Psi} (N^{>})^{2}
    \ket{\Psi}
  \end{align*}
  we have thus shown that
  \begin{align*}
    \bra{\Psi} N^{>}T_\text{H}\ket{\Psi} \leq & \,\frac{1}{2} \left(\mu - v_{0000}+3v(0) \right) \bra{\Psi} N^{>}\ket{\Psi} \\
    &+\frac 1 2 \left(2v(0)+{\mu} \right) \bra{\Psi}
    (N^{>})^{2}\ket{\Psi}^{{1/2}}\, .
  \end{align*}
  Using $T_\text{H} \geq (\eps_{1}-\eps_{0}) N^{>}$ this implies
  \begin{align*}
    \bra{\Psi}N^{>}T_\text{H}\ket{\Psi} \leq
    (\mu-v_{0000}+3v(0))\bra{\Psi} N^{>}\ket{\Psi}+\frac{1}{4}
    \frac{(2v(0)+\mu)^{2}}{\eps_{1}-\eps_{0}}\, .
  \end{align*}
  The result then follows from Lemma~\ref{simple bounds}.
\end{proof}

\section{The Bogoliubov Hamiltonian}\label{Bogoham}

The well-known Bogoliubov approximation \cite{Bogo} consists of
replacing the operators $a_0$ and $a_0^\dagger$ in (\ref{hnsq}) by
$\sqrt{N}$, and dropping all terms higher than quadratic in the $a_i$
and $a^\dagger_i$ for $i\geq 1$. The resulting Bogoliubov Hamiltonian
does not preserve particle number and is thus not suitable as an
approximation to the full Hamiltonian $H_{N}$, as far as operator
inequalities are concerned. To circumvent this problem, we work with
the following modification of the Bogoliubov Hamiltonian. For $i\geq
1$, we introduce the operators
\begin{align*}
  b_{i}:= \frac{a_{i}a_{0}^{\dagger}}{\sqrt{N-1}}\, ,
\end{align*}
and we define the Bogoliubov Hamiltonian as
\begin{align}\label{def:hbog}
  H_{\text{Bog}}:= \sum_{i}{}^{'}
  \left(\eps_{i}-\eps_{0}\right)b_{i}^{\dagger}b_{i}+{\frac 1 2}
  \sum_{i,j}
  {}^{'}V_{ij}\left(2b_{i}^{\dagger}b_{j}+b_{i}b_{j}+b_{j}^{\dagger}b_{i}^{\dagger}\right)\,,
\end{align}
where $V_{ij}= v_{00ij} = \langle\varphi_i | V |\varphi_j\rangle$.
Note that this operator preserves the number of particles, hence we
can study its restriction to $\mathcal{F}^{(N)}$, the sector of $N$
particles. The price to pay, as compared with the usual Bogoliubov
Hamiltonian, is that the $b_{i}, b_{i}^{\dagger}$ do not satisfy
canonical commutation relations, making it harder to determine the
spectrum of $H_{\mathrm{Bog}}$.

In the following, we shall investigate the relation between $H_N$ and
$H_{\rm Bog}$. In particular, we shall derive upper and lower bounds
on $H_N$ in terms of $H_{\rm Bog}$, with error terms that are small in
the low-energy sector.

\subsection{Lower Bound}

Using the positivity of the interaction potential $v$, a Schwarz
inequality on $\mathcal{F}^{(2)}$ yields
\begin{align*}
  &(P\otimes Q+Q\otimes P)vQ\otimes Q +Q\otimes Q v (P\otimes Q+Q\otimes P) \\
  &\ \ \geq -\eps (P\otimes Q+Q\otimes P)v(P\otimes Q+Q\otimes
  P)-\eps^{-1}Q\otimes Qv Q\otimes Q \,.
\end{align*}
Consequently,
\begin{align}\nonumber
  v&\geq P\otimes P v P\otimes P +P \otimes P v Q\otimes Q + Q\otimes
  Q v P\otimes P \\ \nonumber
  &\quad +(1-\eps)(P\otimes Q+Q \otimes P)v(P\otimes Q +Q\otimes P)+(1-\eps^{-1})Q\otimes Qv Q\otimes Q \\
  &\quad +P\otimes P v P\otimes Q+ P\otimes P v Q\otimes P + P\otimes
  Q v P\otimes P +Q\otimes P v P \otimes P \label{bv}
\end{align}
for any $\eps>0$. The last term in the second line can be bounded from
below by $(1-\eps^{-1})v(0)Q\otimes Q$ as long as $\eps\leq 1$ which
we shall assume henceforth. We remark that in the case of translation
invariance the terms in the last line vanish due to momentum
conservation, but this is not the case here.

In second quantized language, the bound (\ref{bv}) implies that
$H_{N}$ is bounded from below by the operator
\begin{align}\label{lowerbound}
  \sum_{i,j}{}^{'}&h_{ij}a^{\dagger}_{i}a_{j}+\sqrt{N-1}\sum_{i}{}^{'}h_{i0}\left(b_{i}^{\dagger}+b_{i}\right) +h_{00}(N-N^{>})\nonumber \\
  &+v_{0000}\frac{(N-N^{>})(N-N^{>}-1)}{2(N-1)}+\frac1 2\sum_{i,j}{}^{'}V_{ij}\left(b_{i}b_{j}+b_{j}^{\dagger}b_{i}^{\dagger}\right) \nonumber\\
  &+\frac{1-\eps}{N-1} \sum_{i,j}{}^{'}\left(v_{0i0j}+ V_{ij}\right)a_{i}^{\dagger}a_{0}^{\dagger}a_{j}a_{0} \nonumber \\
  &+(1-\eps^{-1})\frac{N^{>}(N^{>}-1)v(0)}{2(N-1)}+\sum_{i}{}^{'}\frac{v_{i000}}{\sqrt{N-1}}\left(b_{i}(N-N^{>})+(N-N^{>})b_{i}^{\dagger}\right)
  \,,
\end{align}
restricted to the $N$-particle sector. We note that
\begin{align*}
  \sum_{i,j}{}^{'}(h_{ij}-\eps_{0}\delta_{ij})b_{i}^{\dagger}b_{j}&=\sum_{i,j}{}^{'}(h_{ij}-\eps_{0}\delta_{ij})a_{i}^{\dagger}a_{j}\frac{N-N^{>}+1}{N-1} \\
  &=
  \sum_{i,j}{}^{'}(h_{ij}-\eps_{0}\delta_{ij})a_{i}^{\dagger}a_{j}+\frac{2-N^>}{N-1}\sum_{i,j}{}^{'}(h_{ij}-\eps_{0}\delta_{ij})a_{i}^{\dagger}a_{j}
  \,.
\end{align*}
Since $D - v(0) \leq -\Delta + V_\text{ext} - \eps_0 \leq D$, we can
bound the last term as
$$
\frac{2-N^>}{N-1}\sum_{i,j}{}^{'}(h_{ij}-\eps_{0}\delta_{ij})a_{i}^{\dagger}a_{j}
\leq \frac 1{N-1} T_\text{H} + v(0) \frac{\left(N^>\right)^2}{N} \,.
$$
This bound can be easily verified by investigating separately the
sectors of different values of $N^>$. (In particular, note that
$T_\text{H}=0$ on the subspace where $N^{>}=0$, for instance.)

We also have
\begin{align*}
  &\frac{1-\eps}{N-1} \sum_{i,j}{}^{'}\left(v_{0i0j}+ V_{ij}\right)a_{i}^{\dagger}a_{0}^{\dagger}a_{j}a_{0} \\ &= \sum_{i,j}{}^{'}\left(v_{0i0j}+ V_{ij}\right)b_{i}^{\dagger}b_{j}-\frac{1+\eps(N-N^{>})}{N-1}   \sum_{i,j}{}^{'}\left(v_{0i0j}+ V_{ij}\right)a_{i}^{\dagger}a_{j} \\
  &\geq \sum_{i,j}{}^{'}\left(v_{0i0j}+
    V_{ij}\right)b_{i}^{\dagger}b_{j}-2v(0)\frac{N^{>}+\eps
    N^{>}(N-N^{>})}{N-1} \, ,
\end{align*}
where we have used that $V$ as well as multiplication with $v\ast
\varphi_{0}^{2}$ are bounded operators with norm bounded by
$v(0)$. Using $\eps_{0}=h_{00}+v_{0000}$ one verifies that
\begin{align*}
  \eps_{0}N^{>}+h_{00}(N-N^{>})+v_{0000}\frac{(N-N^{>})(N-N^{>}-1)}{2(N-1)}+(1-\eps^{-1})\frac{N^{>}(N^{>}-1)v(0)}{2(N-1)} \\
  = Nh_{00}+\frac N 2
  v_{0000}+((1-\eps^{-1})v(0)+v_{0000})\frac{N^{>}(N^{>}-1)}{2(N-1)}\,.
\end{align*}
The Hartree equation (\ref{eq:hart}) implies $h_{i0}+v_{i000}=0$ for
$i\neq 0$, hence we have
\begin{align*}
  \sqrt{N-1}\sum_{i}{}^{'}h_{i0}\left(b_{i}^{\dagger}+b_{i}\right)+\sum_{i}{}^{'}\frac{v_{i000}}{\sqrt{N-1}}\left(b_{i}(N-N^{>})+(N-N^{>})b_{i}^{\dagger}\right)
  \\
  =\sum_{i}{}^{'}\frac{v_{i000}}{\sqrt{N-1}}\left(b_{i}(1-N^{>})+(1-N^{>})b_{i}^{\dagger}\right)
  \, .
\end{align*}
This last expression can be bounded by Schwarz's inequality: for any
$\zeta>0$ one has
\begin{align*}
  -\frac{(1-N^{>})^{2}}{\zeta\sqrt{N-1}}-\zeta
  v(0)^{2}\frac{NN^{>}}{(N-1)^{3/2}} & \leq
  \sum_{i}{}^{'}\frac{v_{i000}}{\sqrt{N-1}}\left(b_{i}(1-N^{>})+(1-N^{>})b_{i}^{\dagger}\right) \\
  & \leq \frac{(1-N^{>})^{2}}{\zeta\sqrt{N-1}}+\zeta
  v(0)^{2}\frac{NN^{>}}{(N-1)^{3/2}}\, .
\end{align*}
Here we made use of
\begin{align*}
  \sum_{i}{}^{'}|v_{i000}|^{2}=\bra{\varphi_{0}} v\ast \varphi_{0}Q
  v\ast \varphi_{0}^{2}\ket{\varphi_{0}} \leq v(0)^{2}\, .
\end{align*}

What these computations show is that
\begin{equation}\label{HNlower}
  H_{N}\geq H_{\text{Bog}}+Nh_{00}+\frac{N}{2}v_{0000} -E_{\eps}\, ,
\end{equation}
where
\begin{align*}
  E_{\eps}&=-\left((1-\eps^{-1})v(0)+v_{0000}\right)\frac{N^{>}(N^{>}-1)}{2(N-1)} + \frac 1{N-1} T_\text{H} + v(0) \frac{\left(N^>\right)^2}{N} \\
  & \quad +\frac{(1-N^{>})^{2}}{\zeta\sqrt{N-1}}+\zeta v(0)^{2}N^{>}\frac{N}{(N-1)^{3/2}}+2v(0)\frac{1+\eps N}{N-1}N^{>}\,  \\
  & \leq
  C\left(\left(\eps^{-1}N^{-1}+\zeta^{-1}N^{-1/2}\right)(N^{>}+1)(T_\text{H}+1)+
    (N^{-1}+\eps+\zeta N^{-1/2})(T_\text{H}+1) \right) \, .
\end{align*}

\subsection{Upper bound} The upper bound on $H_N$ follows essentially
the same lines as the lower bound in the previous subsection. By
Schwarz's inequality
\begin{align*}
  &(P\otimes Q + Q\otimes P )v Q\otimes Q + Q\otimes Q v (P\otimes Q + Q\otimes P ) \\
  &\ \ \leq \eps (P\otimes Q + Q\otimes P )v(P\otimes Q + Q\otimes P
  )+\eps^{-1}Q\otimes QvQ\otimes Q
\end{align*}
and hence
\begin{align*}
  v & \leq P\otimes P v P \otimes P + P \otimes P v Q\otimes Q + Q\otimes Q v P\otimes P \\
  & \quad + (1+\eps)(P\otimes Q + Q\otimes P )v(P\otimes Q + Q\otimes P )+(1+\eps^{-1})v(0)Q\otimes Q \\
  & \quad +P\otimes P v P\otimes Q+ P\otimes P v Q\otimes P + P\otimes
  Q v P\otimes P +Q\otimes P v P \otimes P
\end{align*}
for any $\epsilon>0$.  This means that $H_{N}$ is bounded from above
by the expression (\ref{lowerbound}) with $\eps$ exchanged for
$-\eps$.  Using
\begin{align*}
  \sum_{i,j}{}^{'}\left(h_{ij}-\eps_{0}\delta_{ij}\right)a_{i}^{\dagger}a_{j}= & \sum_{i,j}{}^{'}\left(h_{ij}-\eps_{0}\delta_{ij}\right)b_{i}^{\dagger}b_{j}+\frac{N^{>}-2}{N-1}\sum_{i,j}{}^{'}\left(h_{ij}-\eps_{0}\delta_{ij}\right)a_{i}^{\dagger}a_{j} \\
  \leq &
  \sum_{i,j}{}^{'}\left(h_{ij}-\eps_{0}\delta_{ij}\right)b_{i}^{\dagger}b_{j}+
  \frac{N^{>}T_\text{H}}{N} + v(0) \frac{N^>}{N-1}
\end{align*}
we obtain
\begin{equation}\label{HNupper}
  H_{N}\leq H_{\text{Bog}}+Nh_{00}+\frac{N}{2}v_{0000}+F_{\eps}
\end{equation}
where
\begin{align}\nonumber
  F_{\eps} &= \frac{N^{>}T_\text{H}}{N}+ v(0)\frac{3+2\eps
    N}{N-1}N^{>}+((1+\eps^{-1})v(0)+v_{0000})\frac{N^{>}(N^{>}-1)}{2(N-1)}
  \nonumber\\ \nonumber & \quad +
  \frac{(1-N^{>})^{2}}{\zeta\sqrt{N-1}}+\zeta
  v(0)^{2}N^{>}\frac{N}{(N-1)^{3/2}}\, \\ \nonumber & \leq C\Big(
  \left(N^{-1}+\zeta^{-1}N^{-1/2}+\eps^{-1}N^{-1}
  \right)(N^{>}+1)(T_\text{H}+1) \\ &\ \ \ \ \ \ \ \ +\left(\eps+\zeta
    N^{-1/2}+N^{-1}\right)(N^{>}+1)^{1/2}(T_\text{H}+1)^{1/2} \Big)\,
  . \label{bofe}
\end{align}
Here we have again used that $N^>$ can be bounded by $T_\text{H}$, and
similarly for their square roots. To proceed with the analysis in
Section~\ref{ss:ub}, it is convenient to work with the bound
(\ref{bofe}) on $F_\eps$, involving only the operator
$(N^{>}+1)(T_\text{H}+1)$ and its square root.

\section{Symplectic diagonalization}\label{sympl}

In order to investigate the spectrum of the Bogoliubov Hamiltonian
$H_{\text{Bog}}$ in (\ref{def:hbog}), it is useful to consider first
the usual Bogoliubov Hamiltonian, which is the formal quadratic
expression
\begin{align}\label{us:bog}
  \tilde{H}_{\text{Bog}}&=\frac{1}{2}\left((a^{\dagger})^{\intercal},
    a^{\intercal}\right)\left(\begin{array}{cc}D+V & V \\ V &
      D+V\end{array}\right) \left(\begin{array}{cc} a \\
      a^{\dagger}\end{array}\right) \, .
\end{align}
It is convenient to use a matrix notation where
\begin{align*}
  a=\left(\begin{array}{c}a_{1} \\ a_{2} \\
      \vdots \end{array}\right)\, , \hspace{.5in}
  a^{\dagger}=\left(\begin{array}{c}a_{1}^{\dagger} \\ a_{2}^{\dagger}
      \\ \vdots \end{array}\right)\, ,
\end{align*}
and ${}^{\intercal}$ denotes transposition; e.g., $a^{\intercal}D
a^{\dagger}$ stands for $\sum_{i,j}^{'}\langle \varphi_i| D |
\varphi_j \rangle a_{i}a_{j}^{\dagger}$, $(a^\dagger)^\intercal V a$ stands for $\sum_{i,j}' V_{ij} a_i^\dagger a_j$, etc.  The operator
$\tilde{H}_{\text{Bog}}$ is symmetric since $V$ has real matrix
elements with respect to the basis $\{\varphi_{i}\}_{i\in
  \N}$. Eq.~(\ref{us:bog}) is only a formal expression; in particular,
it has an infinite ground state energy. It also does not preserve the
particle number and hence cannot be restricted to the sector of $N$
particles. Nevertheless, it serves as a useful device to motivate our
analysis below leading to an approximate diagonalization of the
actual Bogoliubov Hamiltonian $H_{\text{Bog}}$.

We introduce the Segal field operators $\phi=(\phi_{1}, \phi_{2},\dots
)^{\intercal}$, $\pi=(\pi_{1},\pi_{2}, \dots )^{\intercal}$, which are
given by
\begin{align*}
  \left(\begin{array}{c}a \\ a^\dagger \end{array}\right)=
  \frac{1}{\sqrt{2}}\left(\begin{array}{cc} 1 & \i \\ 1 &
      -\i \end{array}\right) \left(\begin{array}{c}\phi \\
      \pi \end{array}\right) =: T\left(\begin{array}{c}\phi \\
      \pi \end{array}\right)\, .
\end{align*}
They satisfy the commutation relations
\begin{align*} [\phi_{i},\phi_{j}]=[\pi_{i},\pi_{j}]=0\, ,
  \hspace{.5in} [\phi_{i},\pi_{j}]=i\delta_{ij}\, .
\end{align*}
These remain invariant under symplectic transformations $S$, which
satisfy
\begin{align*}
  S^\intercal JS=J= \left(\begin{array}{cc} 0 & 1 \\ -1 &
      0 \end{array} \right)\, .
\end{align*}

We can write
\begin{align*}
  \tilde{H}_{\text{Bog}}=\left(\phi^{\intercal}, \pi^{\intercal}
  \right) M\left(\begin{array}{c}\phi \\ \pi \end{array}\right)
\end{align*}
where
\begin{align*}
  M:=\frac 1 2 \,T^{*} \left(\begin{array}{cc}D+V & V \\ V &
      D+V\end{array}\right)T= \frac 1 2 \left(\begin{array}{cc} D+2V &
      0 \\ 0 & D \end{array} \right)\, .
\end{align*}
Here and in the following, we shall use $^*$ for the adjoint of an
operator on the one-particle space $\mathcal{F}^{(1)}$ or the doubled
space $\mathcal{F}^{(1)}\oplus \mathcal{F}^{(1)}$, while we use
$^\dagger$ for the adjoint of an operator on Fock space.

In order to diagonalize $\tilde{H}_{\text{Bog}}$ we thus have to
symplectically diagonalize $M$. To do so we introduce a real unitary
operator $U_{0}$ such that
\begin{align*}
  \hat{E}=U_{0}^{*}EU_{0}
\end{align*}
is diagonal with ordered eigenvalues,
i.e. $\hat{E}=\sum_{i}'e_{i}\ket{\varphi_{i}}\bra{\varphi_{i}}$ with
$0<e_{1}\leq e_{2}\leq \dots$. On the subspace $Q L^{2}(\R^{d})$, the
operators $D$, $E$ and $\hat E$ are invertible, and we denote their
inverse by $D^{-1}$, $E^{-1}$ and $\hat E^{-1}$ for simplicity, i.e.,
$D^{-1} = Q (QD)^{-1}$, etc.

With
\begin{equation}\label{def:S}
  S=\left(\begin{array}{cc} D^{{1/2}} & 0 \\ 0 & D^{-{1/2}} \end{array} \right)\left(\begin{array}{cc} U_{0} & 0 \\ 0 & U_{0} \end{array} \right)\left(\begin{array}{cc} \hat{E}^{-{1/2}} & 0 \\ 0 & \hat{E}^{{1/2}} \end{array} \right)=\left(\begin{array}{cc} AU_{0} & 0 \\ 0 & BU_{0} \end{array} \right) \,,
\end{equation}
where $A:=D^{1/2}E^{-1/2}$ and $B:=(A^{-1})^{*}$, we then have
$S^\intercal = S^*$ and
\begin{align*}
  S^{{*}}MS= \frac 1 2 \left(\begin{array}{cc} \hat{E} & 0 \\ 0 &
      \hat{E} \end{array} \right)\, .
\end{align*}
This corresponds to a Hamiltonian consisting of sums of independent
harmonic oscillators of the form $\phi_i^2 + \pi_i^2$, and hence
yields the desired diagonalization of $\tilde{H}_{\text{Bog}}$.

\begin{rem}\label{rem:bdg}
  As claimed in Remark~\ref{nonrig} in Section~\ref{ss:mod}, it is not
  difficult to see that the positive eigenvalues $\omega$ in
  (\ref{physeigprob}) are precisely the eigenvalues of $E$. With
  \begin{align*}
    I:=\left(\begin{array}{cc}-\i& 0 \\ 0 & \i \end{array} \right)
  \end{align*}
  Eq.~(\ref{physeigprob}) can be written as $$ 2 \i I T M T^* \psi =
  \omega \psi\,,
$$
where we denote $\psi = (u,y)^\intercal$ for short.  If we multiply
this from the left with $I TS^{*}JT^{*}$, using $T^* I = J T^*$ and
$S^* J = J S^{-1}$, we obtain the equation
$$
2 \i I T S^* M S T^* \chi = \omega \chi \,,
$$
with $\chi = TS^{-1}T^{*} \psi$. This latter equation is simply
\begin{align*}
  \left(\begin{array}{cc}\hat{E}& 0 \\ 0 & -\hat{E} \end{array}
  \right) \chi= \omega \chi\,,
\end{align*}
hence $\omega$ is indeed an eigenvalue of $\hat E$, as claimed.
\end{rem}

The formal considerations above serve as a starting point of our
analysis. Using $S$ in (\ref{def:S}), we define particle number
preserving operators $c=(c_{1},c_{2},\dots)$ by
\begin{align}
  \left(\begin{array}{cc} b \\ b^{\dagger} \end{array} \right) &=
  \frac 1 2 \left(\begin{array}{cc} 1 & \i \\ 1 &
      -\i \end{array}\right) \left(\begin{array}{cc}AU_{0} & 0 \\ 0 &
      BU_{0} \end{array}\right)
  \left(\begin{array}{cc} 1 & 1 \\ -\i & \i \end{array}\right) \left(\begin{array}{c}  {c} \\ {c}^{\dagger} \end{array}\right) \nonumber \\
  &= \frac 1 2 \left(\begin{array}{cc} AU_{0}+BU_{0}& AU_{0}-BU_{0} \\
      AU_{0}-BU_{0} & AU_{0}+BU_{0} \end{array}\right)
  \left(\begin{array}{c} {c} \\ {c}^{\dagger} \end{array}\right)\,
  . \label{transformation}
\end{align}
Note that the operators $A$, $B$ and $U_0$ are all real, hence
$c^\dagger_j$ is indeed the adjoint of $c_j$.  By inverting $S$ one
easily obtains the inverse transformation law
\begin{align}\label{invtrafo}
  \left(\begin{array}{c} {c} \\ {c}^{\dagger} \end{array}\right)=\frac
  1 2 \left(\begin{array}{cc}U_{0}^{*} A^{-1}+U_{0}^{*}B^{-1}&
      U_{0}^{*}A^{-1}-U_{0}^{*}B^{-1} \\
      U_{0}^{*}A^{-1}-U_{0}^{*}B^{-1} &
      U_{0}^{*}A^{-1}+U_{0}^{*}B^{-1} \end{array}\right)
  \left(\begin{array}{c} {b} \\ {b}^{\dagger} \end{array}\right) \, .
\end{align}
We can rewrite the Bogoliubov Hamiltonian
\begin{align}\label{bogomatrix}
  H_{\text{Bog}} = (b^{\dagger})^{{\intercal}}(D+ V) b+\frac 1 2
  b^{{\intercal}}Vb+\frac 1 2 (b^{\dagger})^{{\intercal}} V
  b^{\dagger}
\end{align}
as a quadratic operator in these $c$, $c^{\dagger}$. Here,
$(b^\dagger)^\intercal V b = \sum_{i,j}' V_{ij} b_i^\dagger b_j$,
etc. We insert (\ref{transformation}) into (\ref{bogomatrix}) and
obtain {\begin{align}\label{BogoHam} \nonumber
    H_{\text{Bog}} & =  {\sum_{i}{}^{'} e_{i}{c}_{i}^{\dagger}{c}_{i}}- \sum_{i,j}{}^{'}  \left(U_{0}^{{*}}\left( Y-\frac{E}{2}\right)U_{0}\right)_{ij}[{c}_{i},{c}^{\dagger}_{j}]  \\
    & \quad -{\frac 1 2
      \sum_{i,j}{}^{'}Z_{ij}\left([{c}_{j},{c}_{i}]+[{c}_{i}^{\dagger},{c}_{j}^{\dagger}]
      \right)}=: \text{(I)+(II)+(III)} \, ,
  \end{align}} where
$$
Y:=\frac 1 4
E^{1/2}D^{-1/2}\left({D}+{V}\right)D^{1/2}E^{-1/2}+\text{h.c.}
$$
and
\begin{align*}
  {Z:}&=\frac{1}{4}U_{0}^{*}\left[(A-B)^{*}(D+V)(A+B)+\frac 1 2 (A+B)^{*} V (A+B)+ \frac 1 2 (A-B)^{*}V (A-B)\right]U_{0} \\
  &= \frac 1 4 U_{0}^{*}\left[{A^{*}(D+2V)A-B^{*}DB}{-B^{*}(D+V)A+A^{{*}}(D+V)B}\right] U_{0} \\
  &= \frac 1 4 U_{0}^{{*}}\left[A^{{*}}(D+V)B -B^{{*}}(D+V)A\right]
  U_{0} \,.
\end{align*}
Note that $Z$ is antisymmetric and hence
\begin{align*} \sum_{i,j}{}^{'} Z_{ij}{c}_{i} {c}_{j} &= \frac 1 2
  \sum_{i,j}{}^{'} Z_{ij}{c}_{i}{c}_{j}- \frac 1 2 \sum_{i,j}{}^{'}
  Z_{ji}{c}_{i}{c}_{j} = - \frac 1 2 \sum_{i,j}{}^{'}
  Z_{ji}[{c}_{i},{c}_{j}] \,.
\end{align*}

To arrive at (\ref{BogoHam}), we have used that
\begin{align*}
  {}&\frac{1}{4} \left[(A+B)^{{*}}(D+V)(A+B)+\frac 1 2 (A-B)^{{*}}V(A+B)+\frac 1 2  (A+B)^{{*}}V(A-B) \right] \\
  &= \frac{1}{4} \left[(A+B)^{{*}}(D+V)(A+B)+A^{{*}}VA-B^{{*}}VB \right] \\
  &= \frac{1}{4} \left[  A^{{*}}(D+2V)A +B^{{*}}DB+B^{{*}}(D+V)A+A^{{*}}(D+V)B \right] \\
  &= \frac 1 2 E +\frac{1}{4}\left[{B^{{*}}(D+V)A+A^{{*}}(D+V)B} \right] \\
  &=\frac 1 2 E + Y
\end{align*}
and
\begin{align*}
  {}&\frac{1}{4}\left[(A-B)^{{*}}(D+V)(A-B)+\frac{1}{2}(A+B)^{{*}}V(A-B)+\frac{1}{2} (A-B)^{{*}}V(A+B) \right] \\
  &=  \frac{1}{4} \left[(A-B)^{{*}}(D+V)(A-B)+A^{{*}}VA-B^{{*}}VB \right] \\ &= \frac{1}{4}  \left[A^{{*}}(D+2V)A+B^{{*}}D B -B^{{*}}(D+V)A-A^{{*}}(D+V)B\right] \\
  &= \frac 12 E -\frac{1}{4}\left[B^{{*}}(D+V)A+A^{{*}}(D+V)B \right] \\
  &= \frac 12 E -Y\, .
\end{align*}

\section{Bounds on the Bogoliubov Hamiltonian} \label{sec_est}

To prove Theorem~\ref{mainthm} we derive upper and lower bounds for
the various terms (I)--(III) in (\ref{BogoHam}). This yields a bound
on $H_{\mathrm{Bog}}$ in terms of an operator whose spectrum is
explicit, as well as errors which are small for large $N$ in the low-energy sector. More
specifically we shall prove:

\begin{prop}\label{maintechprop} The three terms in (\ref{BogoHam})
  have the following properties. There exists a unitary operator
  $\mathcal{U}:\mathcal{F}^{(N)}\to\mathcal{F}^{(N)}$ (explicitly
  given in (\ref{defofU}) below) such that the following bounds hold
  on $\mathcal{F}^{(N)}$:
\begin{itemize} 
\item[(I):] For arbitrary $\lambda>0$ we have 
\begin{align*}
\sum_{i}{}^{'}e_{i}c_{i}^{\dagger}c_{i}& \geq  (1-\lambda)\mathcal{U}^{\dagger}\left(\sum_{i}{}^{'}e_{i} a_{i}^{\dagger}a_{i}\right)\mathcal{U}-C(1+\lambda^{-1})N^{-1}(N^{>}+1)(T_\text{H}+1) \, ,  \\
\sum_{i}{}^{'}e_{i}c_{i}^{\dagger}c_{i} &\leq  (1+\lambda)\mathcal{U}^{\dagger}\left(\sum_{i}{}^{'}e_{i} a_{i}^{\dagger}a_{i}\right)\mathcal{U}+C(1+\lambda^{-1})N^{-1}(N^{>}+1)(T_\text{H}+1)\, .
\end{align*}
\item[(II):] $D+V-E$ is a trace class operator, and
\begin{align*}
& -CN^{-1}(T_\text{H}+1) \\ & \qquad \leq 2 \sum_{i,j}{}^{'}  \Big( U_{0}^{{*}}  \left( Y-\frac{E}{2}\right)U_{0}\Big) _{ij}[{c}_{i},{c}^{\dagger}_{j}] -\mathrm{tr}(D+V-E) + v_{0000} \\ & \qquad \qquad \leq CN^{-1}(T_\text{H}+1)\, .
\end{align*}
\item[(III):] \begin{align*}
-CN^{-1}T_\text{H}\leq\sum_{i,j}{}^{'}Z_{ij}\left([{c}_{j},{c}_{i}]+[{c}_{i}^{\dagger},{c}_{j}^{\dagger}] \right) \leq CN^{-1}T_\text{H}\, .
\end{align*}
\end{itemize}

\end{prop}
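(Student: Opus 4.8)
The guiding principle is that if the $b_i$ obeyed exact canonical commutation relations, then $S$ in (\ref{def:S}) being symplectic would make the $c_i$ canonical as well, and (\ref{BogoHam}) would collapse \emph{exactly} to $\sum_i{}'e_ic_i^\dagger c_i$ plus the constant $-\frac12\mathrm{tr}(D+V-E)$. Hence every discrepancy in Proposition~\ref{maintechprop} originates from the failure of $b_i,b_i^\dagger$ to be canonical, which is governed by the condensate depletion controlled in Section~\ref{prelim}. The computational starting point is the identity, valid on $\mathcal F^{(N)}$,
\begin{align*}
[b_i,b_j^\dagger]=\frac{1}{N-1}\left(\delta_{ij}(N-N^{>})-a_j^\dagger a_i\right)\,,\qquad [b_i,b_j]=0\,.
\end{align*}
Inserting the inverse transformation (\ref{invtrafo}), with real coefficient matrices $\alpha:=\frac12U_0^{*}(A^{-1}+B^{-1})$ and $\beta:=\frac12U_0^{*}(A^{-1}-B^{-1})$ (so $c_i=\sum_k{}'(\alpha_{ik}b_k+\beta_{ik}b_k^\dagger)$), a direct computation gives
\begin{align*}
[c_i,c_j^\dagger]=\frac{N-N^{>}}{N-1}\,(\alpha\alpha^\intercal-\beta\beta^\intercal)_{ij}-\frac{1}{N-1}\sum_{k,l}{}'\left(\alpha_{ik}\alpha_{jl}\,a_l^\dagger a_k-\beta_{ik}\beta_{jl}\,a_k^\dagger a_l\right)\,,
\end{align*}
and an analogous expression for $[c_i,c_j]$ with $\alpha\alpha^\intercal-\beta\beta^\intercal$ replaced by $\alpha\beta^\intercal-\beta\alpha^\intercal$. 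The defining relations of the symplectic transformation, $\alpha\alpha^\intercal-\beta\beta^\intercal=1$ and $\alpha\beta^\intercal=\beta\alpha^\intercal$ on $QL^2(\R^d)$, reduce the leading terms to $\frac{N-N^{>}}{N-1}\delta_{ij}$ and $0$, respectively. Thus both $[c_i,c_j^\dagger]-\delta_{ij}$ and $[c_i,c_j]$ are $O(N^{-1})$ times operators built from $N^{>}$ and the $a_k^\dagger a_l$.

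For the bound (I), I would take $\mathcal U$ to be the Bogoliubov unitary on Fock space implementing the transformation (\ref{invtrafo}) on the \emph{canonical} pairs $a_i,a_i^\dagger$, $i\ge1$; its existence requires only that the off-diagonal block $\beta$ be Hilbert--Schmidt, which follows from the trace-class estimate below. Then $\mathcal U^\dagger a_i\mathcal U$ equals the right-hand side of (\ref{invtrafo}) with $b$ replaced by $a$, so that with $d_i:=c_i-\mathcal U^\dagger a_i\mathcal U=\sum_k{}'(\alpha_{ik}(b_k-a_k)+\beta_{ik}(b_k^\dagger-a_k^\dagger))$ and $b_k-a_k=a_k(a_0^\dagger-\sqrt{N-1})/\sqrt{N-1}$. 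Expanding $\sum_i{}'e_ic_i^\dagger c_i$ and estimating the cross terms by Cauchy--Schwarz, $\sum_i{}'e_i\big((\mathcal U^\dagger a_i\mathcal U)^\dagger d_i+\mathrm{h.c.}\big)\le\lambda\sum_i{}'e_i(\mathcal U^\dagger a_i\mathcal U)^\dagger(\mathcal U^\dagger a_i\mathcal U)+\lambda^{-1}\sum_i{}'e_i d_i^\dagger d_i$, the two-sided bound follows once $\sum_i{}'e_i\,d_i^\dagger d_i\le CN^{-1}(N^{>}+1)(T_\text{H}+1)$; here the weight $e_i\sim\eps_i-\eps_0$ supplies the factor $T_\text{H}+1$, the depletion $a_0^\dagger-\sqrt{N-1}$ supplies $N^{>}+1$, and the normal-ordering constant from $a_0a_0^\dagger=a_0^\dagger a_0+1$ accounts for the $+1$'s.

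Parts (II) and (III) are then read off from the commutator formulas. In (III) the leading term vanishes by $\alpha\beta^\intercal=\beta\alpha^\intercal$, leaving only the $O(N^{-1})$ correction, which is bounded by $CN^{-1}T_\text{H}$ after contracting the antisymmetric matrix $Z$ against the $a_k^\dagger a_l$ and estimating the latter by $N^{>}\le(\eps_1-\eps_0)^{-1}T_\text{H}$. In (II) the leading contribution is $\frac{N-N^{>}}{N-1}\,\mathrm{tr}_Q\!\big(U_0^{*}(Y-\frac E2)U_0\big)$, and the crux is the trace identity $2\,\mathrm{tr}_Q(Y-\frac E2)=\mathrm{tr}(D+V-E)-v_{0000}$, which I would obtain from the formal cyclicity $\mathrm{tr}_Q\,E^{1/2}D^{-1/2}(D+V)D^{1/2}E^{-1/2}=\mathrm{tr}_Q(D+V)$ applied to the definition of $Y$, the $-v_{0000}$ accounting for the single excluded mode $\varphi_0$, on which $(D+V-E)\varphi_0=v_{0000}\varphi_0$. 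Replacing $\frac{N-N^{>}}{N-1}$ by $1$ and subtracting the $a_k^\dagger a_l$-correction, contracted against the trace-class matrix $U_0^{*}(Y-\frac E2)U_0$, then produces the stated error $CN^{-1}(T_\text{H}+1)$.

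The main obstacle is the analytic backbone of (II): making the cyclicity argument rigorous, i.e.\ proving that $D+V-E$ is genuinely trace class. One cannot argue termwise, since $V$ fails to be trace class after conjugation by powers of $D$ (because $v$ is only bounded, $D^{1/2}VD^{1/2}$ need not have finite trace). Instead I would write $E-(D+V)=\sqrt{E^2}-\sqrt{(D+V)^2}$ and use the integral representation $\sqrt{X}-\sqrt{Y}=\pi^{-1}\int_0^\infty s^{-1/2}(s+X)^{-1}(X-Y)(s+Y)^{-1}\,ds$, together with $E^2-(D+V)^2=-[D^{1/2},[D^{1/2},V]]-V^2$. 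Here $V^2$ is trace class, and the double commutator is trace class thanks to the smoothness and decay of $\varphi_0$ and the cancellation built into the commutator; the spectral gap of $D$ on $QL^2(\R^d)$ keeps the resolvents bounded near $s=0$ and guarantees integrability in $s$. A pervasive secondary point is to present each error operator as $N^{-1}$ times a product that Lemma~\ref{simple bounds} and Lemma~\ref{quadrbdlem} control by $(N^{>}+1)(T_\text{H}+1)$ on low-energy states.
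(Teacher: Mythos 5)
Your overall scheme for (II) and (III)---compute the non-canonical corrections to $[c_i,c_j^\dagger]$ and $[c_i,c_j]$ from $[b_i,b_j^\dagger]=\frac{1}{N-1}(\delta_{ij}(N-N^{>})-a_j^\dagger a_i)$, use the symplectic relations to kill the leading terms, and identify the constant via the trace identity $2\,\mathrm{tr}(Y-\tfrac{E}{2})=\mathrm{tr}(D+V-E)-v_{0000}$---matches the paper. But part (I) contains a fatal gap, and it is exactly the point the paper is designed to get around. You take $\mathcal{U}$ to be the standard Bogoliubov unitary implementing (\ref{invtrafo}) on the canonical pairs $a_i,a_i^\dagger$. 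That operator does \emph{not} conserve particle number, so it is not a unitary $\mathcal{F}^{(N)}\to\mathcal{F}^{(N)}$ as the proposition requires, and the later application (min-max on the $N$-particle sector, where the spectrum of $\mathcal{U}^\dagger(\sum_i' e_i a_i^\dagger a_i)\mathcal{U}$ must consist of sums $\sum e_i n_i$ with $\sum n_i\leq N$) collapses. Worse, your error estimate is quantitatively false: with your $d_i=c_i-\mathcal{U}^\dagger a_i\mathcal{U}$ one has $b_k-a_k=a_k(a_0^\dagger-\sqrt{N-1})/\sqrt{N-1}$, and on $\Psi\in\mathcal{F}^{(N)}$ the vectors $a_0^\dagger\Psi$ and $\sqrt{N-1}\,\Psi$ lie in \emph{orthogonal} particle-number sectors, so $\|(a_0^\dagger-\sqrt{N-1})\Psi\|^2=\|a_0^\dagger\Psi\|^2+(N-1)\|\Psi\|^2\approx 2N\|\Psi\|^2$: there is no cancellation at all, and $\sum_i' e_i\, d_i^\dagger d_i$ is of order $T_\text{H}+1$, not $N^{-1}(N^{>}+1)(T_\text{H}+1)$. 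The ``$a_0\approx\sqrt{N}$'' substitution is legitimate only inside number-conserving products. This is precisely why the paper builds the number-conserving generator $X=\frac12\sum_{i,j}'\alpha_{ij}(b_i^\dagger b_j^\dagger-b_ib_j)$ (with $\alpha=\log(|A^*|^{-1})$), sets $\mathcal{U}=\mathcal{W}^\dagger e^X$, keeps $\nu=a_0^2/(N-1)$ as an \emph{operator} in $d_i=a(G\varphi_i)+\nu\,a^\dagger(H\varphi_i)$, and controls $K_i=\mathcal{U}^\dagger a_i\mathcal{U}-d_i$ through a Gr\"onwall argument for $\kappa_f(t)=e^{-tX}a(f)e^{tX}-a(g_tf)-\nu a^\dagger(h_t\bar f)$, together with $e^{-tX}(N^{>}+1)^2e^{tX}\leq e^{Ct}(N^{>}+1)^2$. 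None of this machinery is recoverable from your setup.

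There is a secondary gap in your trace-class argument for $D+V-E$. The identity $E^2-(D+V)^2=-[D^{1/2},[D^{1/2},V]]-V^2$ is algebraically correct, but the claim that the double commutator is trace class ``thanks to smoothness and decay of $\varphi_0$'' is unsupported and unavailable under the paper's hypotheses: $v$ is only bounded and of positive type, $V_{\rm ext}$ only locally bounded, and $D^{1/2}$ is the square root of a Schr\"odinger-type operator whose commutators with the non-smooth kernel $\varphi_0(x)v(x-y)\varphi_0(y)$ you have no tools to control (even $DV$ alone is not obviously trace class). The paper instead exploits the exact and benign relation $E^2-D^2=2D^{1/2}VD^{1/2}$ inside the resolvent representation of the square root, computes the matrix elements of the leading term explicitly as $V_{ij}D_i/(D_i+D_j)$, observes that the symmetrization returns exactly $V_{ij}$, and then upgrades to trace class using positivity $E\geq D$---a mechanism your sketch does not contain. (Also a small but real slip: $(D+V-E)\varphi_0\neq v_{0000}\varphi_0$, since $V\varphi_0=\varphi_0\cdot(v\ast\varphi_0^2)$ is not proportional to $\varphi_0$; only the diagonal matrix element $\langle\varphi_0|(D+V-E)|\varphi_0\rangle=v_{0000}$ enters, which is what the bookkeeping actually needs. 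And in your square-root representation the weight should be $s^{1/2}$, not $s^{-1/2}$.)
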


The following three subsections contain the proof of this proposition.

\subsection{Proof of Proposition~\ref{maintechprop} (I)} \label{ss:I}

We first refine the symplectic transformation $S$. By polar
decomposition there is a unitary $W_{0}$ such that
$A=|A^{*}|W_{0}=W_{0}|A|$. Since
\begin{align}\label{lucky}
  |B^{*}|=|A^{-1}|=|A^{*}|^{-1}
\end{align}
also $B=|B^{*}|W_{0}$. Hence
\begin{align*}
  S&=\left(\begin{array}{cc} |A^{{*}}|W_{0}U_{0} & 0 \\ 0 &
      |B^{{*}}|W_{0}U_{0} \end{array} \right) =: \tilde{S}
  \left(\begin{array}{cc} W & 0 \\ 0 & W \end{array} \right) \, ,
\end{align*}
where $W=W_{0}U_{0}$.  The transformation
\begin{align*}
  \left(\begin{array}{cc} W & 0 \\ 0 & W \end{array} \right)
\end{align*}
is implementable on $\mathcal{F}$ by a unitary
$\mathcal{W}=\Gamma(W)$, as it corresponds to a change of basis of the
one-particle Hilbert space $L^{2}(\R^{d})$.  We define the real,
bounded, and positive operator
\begin{align*}
  \alpha&:= \log \left(|A^{{*}}|^{-1}\right)\, .
\end{align*}
Note that $\log |B^{*}|=\alpha$ due to (\ref{lucky}). One can show
that for any $t\in \R$ the symplectic transformation
\begin{align*}
  \tilde{S}_{t}:=\left(\begin{array}{cc} \e^{-t\alpha} & 0 \\ 0 &
      \e^{t\alpha} \end{array} \right)
\end{align*}
is implemented on Fock space $\mathcal{F}$ by $\e^{X_{a}t}$ where
\begin{align*}
  X_{a}&:=\frac 1 2 \sum_{i,j}{}^{'}
  \alpha_{ij}(a_{i}^{\dagger}a_{j}^{\dagger}-a_{i}a_{j}) \, .
\end{align*}
However, it is important to note that $\e^{X_{a}t}$ does not preserve
the particle number, and hence we shall instead work with $\e^{Xt}$,
where
\begin{align*}
  X:= \frac 1 2 \sum_{i,j}{}^{'}
  \alpha_{ij}(b_{i}^{\dagger}b_{j}^{\dagger}-b_{i}b_{j}) \, .
\end{align*}
We will repeatedly need the following facts.

\begin{lem} \label{hscondi}
  \begin{enumerate}
  \item $V$ is a positive trace class operator.
  \item $A-1$ and $B-1$ are Hilbert-Schmidt operators.
  \item $\alpha$ is a Hilbert-Schmidt operator.
  \item $iX:\mathcal{F}^{(N)}\to \mathcal{F}^{(N)}$ is a symmetric
    bounded operator.
  \end{enumerate}
\end{lem}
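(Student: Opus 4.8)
\emph{Part (i).} Since $v$ has non-negative Fourier transform I factor $v=w\ast w$ with $\widehat w=\sqrt{\widehat v}\ge 0$, so that $w$ is real and even and convolution by $w$ is self-adjoint. Then the operator with kernel $\varphi_0(x)v(x-y)\varphi_0(y)$ factors as $V=G^{*}G$, where $G$ is the operator $f\mapsto w\ast(\varphi_0 f)$, with kernel $w(x-y)\varphi_0(y)$. This exhibits $V$ as positive, and since $\|G\|_2^2=\int_{\R^{2d}}|w(x-y)|^2\varphi_0(y)^2\,dx\,dy=\|w\|_2^2=\int \widehat v=v(0)<\infty$, the operator $G$ is Hilbert--Schmidt; hence $V=G^*G$ is trace class with $\Tr V=\|G\|_2^2=v(0)=\|v\|_\infty$.

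\emph{Reductions for (ii) and (iii).} On $QL^2(\R^d)$ the operators $D$ and $E$ are bounded below by $\eps_1-\eps_0>0$ and $e_1>0$, so $D^{-1/2}$ and $E^{-1/2}$ are bounded there. As $V\ge 0$, the definition (\ref{def:E}) gives $E^2=D^{1/2}(D+2V)D^{1/2}\ge D^2$, hence $E\ge D$ and $\|A\|=\|D^{1/2}E^{-1/2}\|\le 1$, $\|A^*\|\le1$; likewise $A^{-1}=B^*$ is bounded. Writing $A-1=(D^{1/2}-E^{1/2})E^{-1/2}$ and $B-1=D^{-1/2}(E^{1/2}-D^{1/2})$ and using the boundedness of $E^{-1/2},D^{-1/2}$, statement (ii) reduces to showing that $E^{1/2}-D^{1/2}$ is Hilbert--Schmidt. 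For (iii), note $\alpha=-\tfrac12\log(AA^*)$ and that $AA^*$ has spectrum in a compact subinterval of $(0,1]$; since $x\mapsto\log x$ is Lipschitz there, (iii) follows once $AA^*-1$ is shown to be Hilbert--Schmidt.

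\emph{The Hilbert--Schmidt estimates.} From $E^2-D^2=2D^{1/2}VD^{1/2}$ and the Sylvester identity $E(E-D)+(E-D)D=E^2-D^2$ I obtain the absolutely convergent representation
\[
E-D=2\int_0^\infty e^{-tE}\,D^{1/2}VD^{1/2}\,e^{-tD}\,dt .
\]
The cleanest application is to $AA^*-1=-D^{1/2}E^{-1}(E-D)D^{-1/2}$: inserting the representation and regrouping by means of the exact identities $D^{1/2}E^{-1/2}=A$, $E^{-1/2}e^{-tE}E^{1/2}=e^{-tE}$ and $D^{1/2}e^{-tD}D^{-1/2}=e^{-tD}$, every unbounded factor $D^{1/2}$ is absorbed and one is left with $A\,e^{-tE}A^{*}$ and $e^{-tD}$, giving $\|AA^*-1\|_2\le 2\|V\|_2\int_0^\infty e^{-t(e_1+\eps_1-\eps_0)}\,dt<\infty$. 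This proves (iii). To reach $E^{1/2}-D^{1/2}$ for (ii) one combines the analogous Sylvester representation for the square roots with the displayed formula for $E-D$. \textbf{The main obstacle is the bookkeeping of the resulting iterated time integrals near $t\to0$}: here the bare factor $D^{1/2}$ adjacent to $V$ is controlled only by the parabolic bounds $\|E\,e^{-tE}\|\le C/t$ and $\|D^{1/2}e^{-tD}\|\le C/\sqrt t$, whose (at worst logarithmic) singularities are integrable once $\|V\|_2\le\|V\|_1<\infty$ is used. This yields $E^{1/2}-D^{1/2}$ Hilbert--Schmidt, and (ii) follows. Only the Hilbert--Schmidt property is needed here; the stronger statement that $E-D-V$ is trace class is established separately in Subsection~\ref{ss:II}.

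\emph{Part (iv).} Because $\alpha$ is real and symmetric, relabelling summation indices gives $X^\dagger=-X$, so $iX$ is symmetric; moreover $X$ changes neither the particle number nor the sector, so it maps $\mathcal F^{(N)}$ into itself. For boundedness, $b_i^\dagger b_j^\dagger=(N-1)^{-1}a_0^2\,a_i^\dagger a_j^\dagger$, and on the $N$-particle space all of $a_0$, $a_i$, $a_i^\dagger$ are bounded; the standard second-quantization estimate $\|\sum_{i,j}{}^{'}\alpha_{ij}a_i^\dagger a_j^\dagger\psi\|\le C\|\alpha\|_2\,(N+2)\|\psi\|$ shows that the infinite sum defining $X$ converges in operator norm on $\mathcal F^{(N)}$, with a bound controlled by $\|\alpha\|_2$. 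Thus $iX$ is a bounded symmetric (hence self-adjoint) operator on $\mathcal F^{(N)}$, which is exactly what is needed in order to implement $e^{Xt}$. Here the Hilbert--Schmidt property of $\alpha$ from (iii) is essential for the convergence.
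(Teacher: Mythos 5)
Your proposal is correct in substance, but for parts (ii) and (iii) it takes a genuinely different route from the paper. The paper proves (ii) first, writing $A-1=(D^{1/2}-E^{1/2})E^{-1/2}$ and using the resolvent-type integral representation of $x^{1/4}-y^{1/4}$ applied to $D^{2}$ and $E^{2}$, together with the resolvent identity and $E^{2}-D^{2}=2D^{1/2}VD^{1/2}$, always grouping so that each unbounded factor $D^{1/2}$ sits next to a taming resolvent; it then deduces (iii) from (ii) via the integral representation of $\log$ and the identity $\bigl\||A^{*}|-|B^{*}|\bigr\|_{2}=\|A-B\|_{2}$, which holds because $A$ and $B$ share the polar unitary $W_{0}$. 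You instead use heat semigroups and the Sylvester equation $E X+XD=E^{2}-D^{2}$. Your treatment of (iii) is particularly clean and, unlike the paper's, independent of (ii): the exact regrouping $D^{1/2}E^{-1}\e^{-tE}D^{1/2}=A\e^{-tE}A^{*}$ absorbs all unbounded factors, giving $\|AA^{*}-1\|_{2}\leq 2\|V\|_{2}/(e_{1}+\eps_{1}-\eps_{0})$, and the final step is even simpler than the two-operator Lipschitz bound you invoke, since $\log(AA^{*})$ and $AA^{*}-1$ are functions of the same self-adjoint operator, so $|\log x|\leq L|x-1|$ on $[c,1]$ gives $\|\alpha\|_{2}\leq \tfrac{L}{2}\|AA^{*}-1\|_{2}$ directly by spectral calculus. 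Your part (i) (factorization $v=w\ast w$, $V=G^{*}G$ with $G$ Hilbert--Schmidt) is if anything more careful than the paper's direct computation of the trace of a positive kernel, and part (iv) is essentially the paper's argument.

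Two points need tightening, though neither is a gap in the idea. First, the representation $E-D=2\int_{0}^{\infty}\e^{-tE}D^{1/2}VD^{1/2}\e^{-tD}\,dt$ is \emph{not} absolutely convergent in operator or Hilbert--Schmidt norm as you assert: the available bounds give $\|\e^{-tE}D^{1/2}\|\,\|D^{1/2}\e^{-tD}\|\sim t^{-1}$ near $t=0$, which is not integrable. It does converge in the weak sense on the common form domain of $D$ and $E$ (where each pairing is bounded and decays like $\e^{-t(e_{1}+\eps_{1}-\eps_{0})}$), and only \emph{after} sandwiching with $D^{1/2}E^{-1}(\cdot)D^{-1/2}$, respectively with the outer square-root semigroups, do the regrouped integrals converge absolutely; a short density argument is then needed to justify moving the unbounded factors through the integral. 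Second, for (ii) the double integral you sketch does close: with $g(s,t)=\sup_{x\geq c}\sqrt{x}\,\e^{-s\sqrt{x}-tx}\leq \e^{-(s\sqrt{c}+tc)/2}\min\bigl(Cs^{-1},Ct^{-1/2}\bigr)$ for $c>0$, one checks $\int_{0}^{\infty}\int_{0}^{\infty}g(s,t)^{2}\,ds\,dt<\infty$ by splitting at $t=s^{2}$, and in fact no logarithmic divergence occurs at all. So the "bookkeeping" you flagged as the main obstacle genuinely works out, with the spectral gaps $\eps_{1}-\eps_{0}>0$ and $e_{1}>0$ on $QL^{2}(\R^{d})$ supplying the integrability at infinity, exactly as the analogous gaps control the paper's resolvent integrals.
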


The proof of this lemma will be given at the end of this subsection.
The lemma implies that
\begin{align}\label{defofU}
  \mathcal{U}:=\mathcal{W}^{\dagger}\e^{X}
\end{align}
is a particle number preserving unitary transformation on the Fock
space $\mathcal{F}$, and hence we can study its restriction to the
$N$-particle sector $\mathcal{F}^{(N)}$. With
$\nu:=\frac{a_{0}^{2}}{N-1}$, $G:=\cosh(\alpha)W^{}$, and
$H:=\sinh(\alpha)W^{}$, we define the operators $d_i:
\mathcal{F}^{(N)}\to \mathcal{F}^{(N-1)}$ by
\begin{align}\label{def:di}
  d_{i}:=a(G\varphi_{i})+\nu a^{\dagger}(H\varphi_{i})\, .
\end{align}
Note that (\ref{invtrafo}) implies
\begin{align*}
  c_{i}&= \frac{1}{\sqrt{N-1}}\sum_{j}{}^{'}\left(\left( W^{*} \cosh(\alpha) \right)_{ij}a_{j}a_{0}^{\dagger}+\left( W^{*} \sinh(\alpha) \right)_{ij}a_{j}^{\dagger}a_{0}\right) \\
  &=\frac{a(G\varphi_{i})a_{0}^{\dagger}}{\sqrt{N-1}}+\frac{a(H\varphi_{i})a_{0}^{\dagger}}{\sqrt{N-1}}
\end{align*}
from which we derive
\begin{align} \nonumber
  d_{i}^{\dagger}d_{i}&=c_{i}^{\dagger}c_{i}+a^{\dagger}(G\varphi_{i})a(G\varphi_{i})\left(1-\frac{a_{0}a_{0}^{\dagger}}{N-1}\right)
  \\&\ \ \ \ \ \
  +a(H\varphi_{i})a^{\dagger}(H\varphi_{i})\left(\frac{(a_{0}^{\dagger})^{2}a_{0}^{2}}{(N-1)^{2}}-\frac{a_{0}^{\dagger}a_{0}}{N-1}\right)
  \nonumber \\ \nonumber
  &=c_{i}^{\dagger}c_{i}+a^{\dagger}(G\varphi_{i})a(G\varphi_{i})\frac{N^{>}-2}{N-1} \\
  & \ \ \ \ \ \
  -(a^{\dagger}(H\varphi_{i})a(H\varphi_{i})+\|H\varphi_{i}\|^{2})\frac{N^{>}(N-N^{>})}{(N-1)^{2}} \label{cdrel}\,
  .
\end{align}
The first step towards the proof of Proposition~\ref{maintechprop} (I)
is the following lemma.
\begin{lem} \label{firststeplem}
  \begin{align*}
    \sum_{i}{}^{'}e_{i}d_{i}^{\dagger}d_{i}-CN^{-1}T_\text{H}N^{>}\leq
    \sum_{i}{}^{'}e_{i}c_{i}^{\dagger}c_{i}\leq
    \sum_{i}{}^{'}e_{i}d_{i}^{\dagger}d_{i}+CN^{-1}(N^{>}+1)(T_\text{H}+1)
  \end{align*}
\end{lem}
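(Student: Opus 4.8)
The plan is to exploit the explicit relation between $c_i^\dagger c_i$ and $d_i^\dagger d_i$ recorded in (\ref{cdrel}), and to control the resulting error terms by $N^{>}$, $T_\text{H}$, and the Hilbert-Schmidt bounds from Lemma~\ref{hscondi}. Rewriting (\ref{cdrel}) as
\begin{align*}
  \sum_{i}{}^{'}e_{i}c_{i}^{\dagger}c_{i}
  = \sum_{i}{}^{'}e_{i}d_{i}^{\dagger}d_{i}
  - \sum_{i}{}^{'}e_{i}\,a^{\dagger}(G\varphi_{i})a(G\varphi_{i})\frac{N^{>}-2}{N-1}
  + \sum_{i}{}^{'}e_{i}\big(a^{\dagger}(H\varphi_{i})a(H\varphi_{i})+\|H\varphi_{i}\|^{2}\big)\frac{N^{>}(N-N^{>})}{(N-1)^{2}}\, ,
\end{align*}
I would bound each of the two correction sums separately and show they are of order $N^{-1}$ times $(N^{>}+1)(T_\text{H}+1)$. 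Since $0\le N^{>}\le N$, the scalar prefactors $\frac{N^{>}-2}{N-1}$ and $\frac{N^{>}(N-N^{>})}{(N-1)^{2}}$ are bounded by $C N^{-1}N^{>}$ up to constants, so the task reduces to bounding the operators $\sum_{i}'e_{i}\,a^{\dagger}(G\varphi_{i})a(G\varphi_{i})$ and $\sum_{i}'e_{i}\,a^{\dagger}(H\varphi_{i})a(H\varphi_{i})$ on $\mathcal{F}^{(N)}$.

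The key algebraic step is to recognize these sums as second quantizations of one-particle operators. Since $G=\cosh(\alpha)W$ and $H=\sinh(\alpha)W$, I would compute $\sum_{i}'e_{i}\,a^{\dagger}(G\varphi_{i})a(G\varphi_{i})=\mathrm{d}\Gamma(G\hat E G^{*})$ and likewise $\sum_{i}'e_{i}\,a^{\dagger}(H\varphi_{i})a(H\varphi_{i})=\mathrm{d}\Gamma(H\hat E H^{*})$, where $\hat E=\sum_{i}'e_{i}\ket{\varphi_i}\bra{\varphi_i}$. The plan is then to compare $G\hat E G^{*}$ (and $H\hat E H^{*}$) with $D$ as one-particle operators. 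Recalling $A=D^{1/2}E^{-1/2}=|A^{*}|W_0$ and $\hat E=U_0^{*}EU_0$, together with $G=\cosh(\alpha)W_0U_0$, $\log|A^{*}|^{-1}=\alpha$, one checks that $G$ and $H$ are built from $A,B$ and hence are bounded, with $G\hat E G^{*}$ and $H\hat E H^{*}$ bounded relative to $D$. Concretely I would aim to show $G\hat E G^{*}\le C(D+1)$ and $H\hat E H^{*}\le C(D+1)$ as operators on $QL^{2}(\R^d)$, so that $\mathrm{d}\Gamma(G\hat E G^{*})\le C(T_\text{H}+N^{>})\le C(T_\text{H}+1)N^{>}$ on the $N$-particle sector, and similarly for $H$; the scalar $\|H\varphi_i\|^2$ term contributes $\sum_i' e_i\|H\varphi_i\|^2=\Tr(H\hat E H^{*})$, which is finite because $H=\sinh(\alpha)W$ with $\alpha$ Hilbert-Schmidt (Lemma~\ref{hscondi}(iii)) and $\hat E$ bounded on the range of $\alpha$, giving a constant absorbed into the $O(N^{-1})$ term after multiplication by $N^{>}(N-N^{>})/(N-1)^2\le CN^{-1}N^{>}$.

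The main obstacle will be establishing the relative boundedness $G\hat E G^{*}\le C(D+1)$ rigorously, since $D$ is unbounded and $E\sim D^{1/2}(D+2V)^{1/2}D^{1/2}$ grows like $D$ at high energies while $G$ involves $\cosh\alpha$; one must verify that the high-energy behaviour of $G\hat E G^{*}$ matches that of $D$ and that the corrections are controlled by the trace-class/Hilbert-Schmidt nature of $V$ and $\alpha$. A clean way is to write $G\hat E G^{*}=D+(\text{Hilbert-Schmidt correction})$ using $A\to 1$, $B\to 1$ in the high-energy limit (Lemma~\ref{hscondi}(ii)), so that $G\hat E G^{*}-D$ is bounded, yielding the desired comparison. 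Once these one-particle bounds are in place, assembling the two correction terms with their $O(N^{-1}N^{>})$ prefactors gives the lower bound $\sum_{i}'e_{i}c_{i}^{\dagger}c_{i}\ge \sum_{i}'e_{i}d_{i}^{\dagger}d_{i}-CN^{-1}T_\text{H}N^{>}$ and the upper bound with the extra $+1$ shifts $(N^{>}+1)(T_\text{H}+1)$ coming from the scalar $\|H\varphi_i\|^2$ contribution, completing the proof of Lemma~\ref{firststeplem}.
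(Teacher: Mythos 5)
Your outline reproduces the paper's own strategy step by step: the identity (\ref{cdrel}), the identification of the two correction sums as second quantizations of $G\hat{E}G^{*}$ and $H\hat{E}H^{*}$, and the reduction to one-particle comparisons with $D$. But there is a genuine gap at the one point where the lemma requires real analytic work: the finiteness of $\sum_{i}{}'e_{i}\|H\varphi_{i}\|^{2}=\Tr\left(\sinh(\alpha)W_{0}EW_{0}^{*}\sinh(\alpha)\right)$. Your justification --- $\alpha$ is Hilbert-Schmidt and ``$\hat{E}$ is bounded on the range of $\alpha$'' --- is not a valid argument: $E$ is unbounded, it does not commute with $\alpha$ (this non-commutativity is precisely the new feature of the trapped case), and Hilbert-Schmidt-ness of $\sinh(\alpha)$ alone gives no control when it sandwiches an unbounded operator. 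What is actually needed is the \emph{weighted} statement that $E^{1/2}W_{0}^{*}\sinh(\alpha)$ is Hilbert-Schmidt. Since $\sinh(\alpha)W_{0}=\tfrac{1}{2}(B-A)=\tfrac{1}{2}D^{-1/2}(E-D)E^{-1/2}$, this is equivalent to $\|(E-D)D^{-1/2}\|_{2}<\infty$, which the paper proves via the integral representation $x^{1/2}=\pi\int_{0}^{\infty}t^{1/2}\left(t^{-1}-(x+t)^{-1}\right)dt$, the resolvent identity, and $E^{2}-D^{2}=2D^{1/2}VD^{1/2}$ with $V$ Hilbert-Schmidt, yielding an integrand that decays like $t^{-5/4}$. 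This estimate is the analytic heart of the lemma (the bound $\sum_{i}{}'e_{i}a^{\dagger}(H\varphi_{i})a(H\varphi_{i})\leq CN^{>}$ also follows from it) and it is entirely absent from your proposal.

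Conversely, the step you single out as ``the main obstacle,'' namely $G\hat{E}G^{*}\leq C(D+1)$, is actually the easy part: because of the spectral gap $\eps_{1}-\eps_{0}>0$ the operator $D^{-1}=Q(QD)^{-1}$ is bounded, and the identity $G\hat{E}G^{*}=\tfrac{1}{4}D^{1/2}\left(1+D^{-1}E+ED^{-1}+D^{-1}E^{2}D^{-1}\right)D^{1/2}$ together with $D^{-1}E^{2}D^{-1}=1+2D^{-1/2}VD^{-1/2}$ gives $G\hat{E}G^{*}\leq CD$ with no Hilbert-Schmidt input whatsoever. Your alternative route, writing $G\hat{E}G^{*}=D+(\text{Hilbert-Schmidt correction})$, is true (it is essentially (\ref{tchs2})), but proving it requires the same weighted estimates you are missing, so it cannot serve as a shortcut; note also that Lemma~\ref{hscondi}(ii) alone does not control products such as $(A-1)E$, since the unbounded factor $E$ sits next to the Hilbert-Schmidt one. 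Finally, for the lower bound as stated, with error $CN^{-1}T_{\text{H}}N^{>}$, you should exploit that the $H$-correction enters $c_{i}^{\dagger}c_{i}$ with a favourable sign and simply drop it; bounding both corrections in absolute value, as you propose, only yields the weaker error $CN^{-1}(N^{>}+1)(T_{\text{H}}+1)$, which would admittedly still suffice for Proposition~\ref{maintechprop}(I).
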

\begin{proof}
  By (\ref{cdrel}) we have
  \begin{align}\label{lowerboundspec}
    \sum_{i}{}^{'}e_{i}c_{i}^{\dagger}c_{i}\geq
    \sum_{i}{}^{'}e_{i}d_{i}^{\dagger}d_{i}-\frac{N^{>}}{N}\sum_{i}{}^{'}e_{i}a^{\dagger}(G\varphi_{i})a(G\varphi_{i})\,
  \end{align}
  and similarly
  \begin{align}\label{upperboundspec}
    \sum_{i}{}^{'}e_{i}c_{i}^{\dagger}c_{i}\leq& \sum_{i}{}^{'}e_{i}d_{i}^{\dagger}d_{i}\nonumber \\ &+\sum_{i}{}^{'}e_{i}\Big(\frac{2}{N-1}a^{\dagger}(G\varphi_{i})a(G\varphi_{i})\nonumber \\
    & \ \ \ \ \ \ \ \ \ \ \ \ \ \ \ \ \
    +\frac{NN^{>}}{(N-1)^{2}}\left(a^{\dagger}(H\varphi_{i})a(H\varphi_{i})+\|H\varphi_{i}\|^{2}\right)
    \Big)\, .
  \end{align}
  Hence the lemma follows from the following three estimates:
  \begin{align}\nonumber
    \sum_{i}{}^{'}e_{i}a^{\dagger}(G\varphi_{i})a(G\varphi_{i}) &\leq
    C T_\text{H}\, , \\ \nonumber
    \sum_{i}{}^{'}e_{i}a^{\dagger}(H\varphi_{i})a(H\varphi_{i}) &\leq
    C N^{>}\, , \\ \label{tes} \sum_{i}{}^{'}e_{i}
    \|H\varphi_{i}\|^{2}&<\infty\, .
  \end{align}

  For the proof of the first estimate note that the operator on the
  left side is the second quantization of $ G U_0^{{*}}E U_0 G^{{*}}$,
  which we can write as
  \begin{align*}
    G U_0^{{*}}E U_0 G^{{*}} &= \cosh(\alpha)W_{0}EW_{0}^{{*}}\cosh(\alpha) \\
    &= \frac 1 4 \left(D^{-1/2}E^{1/2}+D^{1/2}E^{-1/2} \right)E\left(E^{1/2}D^{-1/2}+E^{-1/2}D^{1/2} \right) \\
    &=\frac 1 4
    D^{1/2}\left(D^{-1}E^{1/2}+E^{-1/2}\right)E\left(E^{1/2}D^{-1}+E^{-1/2}\right)D^{1/2}\,.
  \end{align*}
  Since $T_\text{H}=\mathrm{d}\Gamma(D)$ it suffices to show
  boundedness of the operator
  \begin{align*}
    D^{-1}E^{2}D^{-1}+D^{-1}E+ED^{-1}+1\, ,
  \end{align*}
  which follows from
  \begin{align*}
    \|D^{-1}E^{2}D^{-1}\|= \|1 + 2 D^{-1/2}VD^{-1/2}\|<\infty\, .
  \end{align*}
  The second estimate in (\ref{tes}) follows from the third, for which
  we note that
  \begin{align*}
    \sum_{i}{}^{'}e_{i} \|H\varphi_{i}\|^{2}&=\mathrm{tr} (HU_{0}^{{*}}EU_{0} H^{{*}})\\
    &=\|E^{1/2}W_{0}^{{*}}\sinh(\alpha)\|_{2}^{2} \\
    &=\frac 1 4 \|(D-E)D^{-1/2} \|_{2}^{2}\, .
  \end{align*}
  To bound the Hilbert-Schmidt norm of the operator $(D-E)D^{-1/2}$,
  we use the integral representation $x^{1/2}=\pi
  \int_{0}^{\infty}t^{1/2}\left(\frac{1}{t}-\frac{1}{x+t}\right)dt$,
  which implies that
  \begin{align} \nonumber \|(D-E)D^{-1/2}\|_{2}&= \pi \left\|
      \int_{0}^{\infty}t^{1/2}\left( (t+D^{2})^{-1}-(t+E^{2})^{-1}
      \right)D^{-1/2}dt \right\|_{2} \\ \nonumber & = 2\pi \left\|
      \int_{0}^{\infty}t^{1/2} (t+E^{2})^{-1}D^{1/2}V(t+D^{2})^{-1} dt
    \right\|_{2} \\ \label{Hilbertschmidtest} &\leq 2\pi \|V\|_{2}
    \int_{0}^{\infty}t^{1/2}\|(t+E^{2})^{-1}D^{1/2}\|
    \|(t+D^{2})^{-1}\| dt\, .
  \end{align}
  Using $D\leq E$ and the spectral theorem one verifies that
  $\|(t+E^{2})^{-1}D^{1/2}\| \leq \|(t+E^{2})^{-1}E^{1/2}\|\leq
  C(1+t^{3/4})^{-1}$. Since also $\|(t+D^{2})^{-1}\| \leq 1/t$, the
  integrand in the last line in (\ref{Hilbertschmidtest}) falls off
  like $t^{-5/4}$ at infinity, making the integral finite.
\end{proof}
Proposition~\ref{maintechprop} (I) is now a direct consequence of the
following lemma.

\begin{lem} \label{secondsteplem} For arbitrary $\lambda>0$ we have
  the bounds
  \begin{align*}
    \sum_{i}{}^{'}e_{i}d_{i}^{\dagger}d_{i}& \geq  (1-\lambda)\mathcal{U}^{\dagger}\left(\sum_{i}{}^{'}e_{i} a_{i}^{\dagger}a_{i}\right)\mathcal{U}-C\lambda^{-1}N^{-1}(N^{>}+1)^{2} \, ,  \\
    \sum_{i}{}^{'}e_{i}d_{i}^{\dagger}d_{i} &\leq
    (1+\lambda)\mathcal{U}^{\dagger}\left(\sum_{i}{}^{'}e_{i}
      a_{i}^{\dagger}a_{i}\right)\mathcal{U}+C(1+\lambda^{-1})N^{-1}(N^{>}+1)^{2}\,
    .
  \end{align*}
\end{lem}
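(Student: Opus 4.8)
The plan is to show that the number‑lowering operators $d_i$ coincide with $\tilde d_i:=\mathcal U^\dagger a_i\mathcal U$ up to an error whose $e_i$‑weighted sum of squares is of size $N^{-1}(N^>+1)^2$, and then to deduce the two inequalities by an operator Cauchy–Schwarz step. First I would record the two framing facts. Since $X^\dagger=-X$, the operator $e^X$ is unitary on $\mathcal F^{(N)}$ (using Lemma~\ref{hscondi}(iv)), so $\mathcal U=\mathcal W^\dagger e^X$ is unitary and, because $\mathcal W=\Gamma(W)$ implements the one‑particle change of basis,
\[
\tilde d_i=\mathcal U^\dagger a_i\mathcal U=e^{-X}a(W\varphi_i)e^{X},\qquad
\mathcal U^\dagger\Big(\sum_i{}'e_i a_i^\dagger a_i\Big)\mathcal U=\sum_i{}'e_i\,\tilde d_i^\dagger\tilde d_i .
\]
Writing $R_i:=d_i-\tilde d_i$ and expanding $d_i^\dagger d_i=(\tilde d_i+R_i)^\dagger(\tilde d_i+R_i)$, the bound $\pm(\tilde d_i^\dagger R_i+R_i^\dagger\tilde d_i)\le\lambda\,\tilde d_i^\dagger\tilde d_i+\lambda^{-1}R_i^\dagger R_i$ turns both halves of the lemma into the single operator estimate
\[
\sum_i{}'e_i\,R_i^\dagger R_i\le C\,N^{-1}(N^{>}+1)^2\quad\text{on }\mathcal F^{(N)} .
\]

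Next I would compute $R_i$ essentially exactly. The starting point is the commutation structure of $X$: using $[b_i,b_j^\dagger]=(N-1)^{-1}\big(\delta_{ij}(N-N^{>})-a_j^\dagger a_i\big)$ one checks that, for $f,g\perp\varphi_0$,
\[
[a(f),X]=\nu\,a^\dagger(\alpha f),\qquad [a^\dagger(g),X]=\nu^\dagger a(\alpha g),
\]
with $\nu=a_0^2/(N-1)$ and $\nu^\dagger=(a_0^\dagger)^2/(N-1)$. Introduce the interpolating family $d_i(t):=a(\cosh(t\alpha)W\varphi_i)+\nu\,a^\dagger(\sinh(t\alpha)W\varphi_i)$, so that $d_i(0)=a(W\varphi_i)$ and $d_i(1)=d_i$. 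Differentiating $e^{tX}d_i(t)e^{-tX}$ and using the two commutators above, the $a^\dagger(\alpha\cosh(t\alpha)W\varphi_i)$ contributions cancel, leaving only the defect
\[
\mathcal E_i(t)=\big(1-\nu\nu^\dagger\big)a(\alpha\sinh(t\alpha)W\varphi_i)-[\nu,X]\,a^\dagger(\sinh(t\alpha)W\varphi_i).
\]
Integrating and conjugating back by $e^{-X}$ then yields the clean representation
\[
R_i=\int_0^1 e^{-(1-t)X}\,\mathcal E_i(t)\,e^{(1-t)X}\,dt .
\]
This makes transparent why $d_i$ is the correct leading term: it is exactly the part of $e^{-X}a(W\varphi_i)e^X$ that survives when the non‑canonical corrections are switched off.

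The third step is to estimate $\sum_i{}'e_i R_i^\dagger R_i$. The two prefactors in $\mathcal E_i(t)$ are both of order $(N^{>}+1)/N$: a direct count gives $0\le\mathbf 1-\nu\nu^\dagger=\mathbf 1-\frac{(n_0+1)(n_0+2)}{(N-1)^2}$, bounded in norm by $C(N^{>}+1)/N$, while $[\nu,X]=-\tfrac{1}{2(N-1)^2}\sum_{j,k}'\alpha_{jk}a_ja_k(4n_0+2)$ is controlled by $CN^{-1}\|\alpha\|_2(N^{>}+1)$. Each is followed by an $a(\cdot)$ or $a^\dagger(\cdot)$ of $\alpha\sinh(t\alpha)W\varphi_i$, respectively $\sinh(t\alpha)W\varphi_i$. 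Since $e^{(1-t)X}$ is unitary and (by a Gronwall argument using the boundedness of $[N^{>},X]$) comparable in the sense $\|(N^{>}+1)^{k}e^{sX}\Psi\|\le C\|(N^{>}+1)^{k}\Psi\|$, and since the mode sums $\sum_i{}'e_i\|\sinh(t\alpha)W\varphi_i\|^2$ and $\sum_i{}'e_i\|\alpha\sinh(t\alpha)W\varphi_i\|^2$ are finite uniformly in $t\in[0,1]$ — this is exactly the content of the estimates in (\ref{tes}) together with $\alpha$ being Hilbert–Schmidt (Lemma~\ref{hscondi}) — one arrives at a bound of the form $\sum_i{}'e_i\|R_i\Psi\|^2\le C N^{-2}\|(N^{>}+1)^{3/2}\Psi\|^2$. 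Absorbing one factor $N^{>}+1\le N$ valid on $\mathcal F^{(N)}$ gives the required $C N^{-1}(N^{>}+1)^2$, and the Cauchy–Schwarz reduction completes the proof.

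The main obstacle I expect is the third step rather than the algebra of the second: one must simultaneously (i) justify that conjugation by $e^{sX}$ does not inflate powers of $N^{>}$ beyond a constant, and (ii) establish the uniform-in-$t$ convergence of the $e_i$‑weighted mode sums. Both hinge on the Hilbert–Schmidt estimates of Lemma~\ref{hscondi} and on the integral‑representation bound behind (\ref{tes}); without the Hilbert–Schmidt property of $\alpha$ the error operator $R_i$ would not even be summable against the weights $e_i$.
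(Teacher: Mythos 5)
Your proposal is correct, and while it shares the paper's overall skeleton --- the same Cauchy--Schwarz reduction to the single estimate $\sum_i{}'e_i R_i^\dagger R_i\le CN^{-1}(N^>+1)^2$ (the paper works with $K_i=-R_i$), the same Gr\"onwall control of conjugation by $\e^{sX}$, and the same Hilbert--Schmidt summability inputs from Lemma~\ref{hscondi} --- it handles the central representation of the error by a genuinely different and arguably cleaner mechanism. The paper studies $\kappa_f(t)=\e^{-tX}a(f)\e^{tX}-a(g_tf)-\nu a^\dagger(h_t\overline{f})$ and, because $\kappa_f''$ contains $\kappa_{\alpha^2f}$ again, must close a self-referential inequality via a second-order Taylor expansion plus Gr\"onwall, producing the factor $\e^{\|\alpha\|^2}$. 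You instead build the hyperbolic interpolation $d_i(t)=a(\cosh(t\alpha)W\varphi_i)+\nu\,a^\dagger(\sinh(t\alpha)W\varphi_i)$ into the conjugation so that the leading commutator cancels exactly (your commutators $[a(f),X]=\nu a^\dagger(\alpha f)$ and $[a^\dagger(g),X]=\nu^\dagger a(\alpha g)$, and your formulas for $1-\nu\nu^\dagger$ and $[\nu,X]$, all check out), yielding the exact first-order Duhamel identity $R_i=\int_0^1\e^{-(1-t)X}\mathcal{E}_i(t)\e^{(1-t)X}dt$ with no Taylor remainder and no Gr\"onwall at this step; the defect $\mathcal{E}_i(t)$ consists of precisely the two operators the paper estimates in (\ref{opest1})--(\ref{opest2}). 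Two minor points of divergence in the assembly: carrying $(N^>+1)^{3/2}$ through the conjugation requires $\e^{-sX}(N^>+1)^3\e^{sX}\le C(N^>+1)^3$ rather than the paper's power-two version (\ref{opest3}) --- a routine extension of the same commutator argument, of the kind used at higher powers in the proof of Lemma~\ref{quadraticbound} --- or alternatively one absorbs a factor $N^>+1\le N$ before conjugating, as the paper effectively does in (\ref{opest1}); and your weighted mode sums $\sum_i{}'e_i\|\sinh(t\alpha)W\varphi_i\|^2=\|\sinh(t\alpha)W_0E^{1/2}\|_2^2\le\frac14\|(D-E)D^{-1/2}\|_2^2$, uniform in $t\in[0,1]$ since $\sinh^2(t\alpha)\le\sinh^2(\alpha)$, replace the paper's trace bookkeeping via $\|E^{1/2}(\log|A|)^2E^{1/2}\|_1$, but both rest on exactly the integral-representation estimate (\ref{Hilbertschmidtest}) underlying (\ref{tes}). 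In short: same ingredients and same quantitative output, with your interpolation trading the paper's second-order Taylor-plus-Gr\"onwall closure for an exact cancellation.
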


\begin{proof}
  We define $K_{i}:= \mathcal{U}^{\dagger}a_{i}\mathcal{U}-d_{i}$ and
  hence, by Schwarz's inequality,
  \begin{align*}
    (1-\lambda)\mathcal{U}^{\dagger}a_{i}^{\dagger}a_{i}\mathcal{U}-\lambda^{-1}K_{i}^{\dagger}K_{i}
    \leq d_{i}^{\dagger}d_{i} \leq
    (1+\lambda)\mathcal{U}^{\dagger}a_{i}^{\dagger}a_{i}\mathcal{U}+(1+\lambda^{-1})K_{i}^{\dagger}K_{i}
  \end{align*}
  for arbitrary $\lambda>0$.  We have to show that
  $\sum_{i}{}^{'}e_{i}K_{i}^{\dagger}K_{i} \leq CN^{-1}(N^{>}+1)^{2}$.
  To simplify notation we define also $g_{t}:=\cosh(\alpha t)$,
  $h_{t}:=\sinh(\alpha t)$ and the quantity
  \begin{align*}
    \kappa_{f}(t):= \e^{-t X}a(f)\e^{tX}-a(g_{t}f)-\nu
    a^{\dagger}(h_{t}\overline{f}) \quad \text{for} \ f \perp
    \varphi_{0} \,,
  \end{align*}
  which is related to $K_{i}$ by $K_{i}=\kappa_{W\varphi_{i}}(1)$. We
  claim
  \begin{align}
    \label{kappaest}
    \kappa_{f}(1)^{\dagger}\kappa_{f}(1)&\leq
    CN^{-1}(N^{>}+1)^{2}\langle f| \alpha^{2}| f\rangle \, .
  \end{align}
  Assuming this for the moment, we can use $\hat{E}=U_{0}^{*}EU_{0}$
  as well as $W_{0}^{*}\alpha^{2}W_{0}=(\log |A|)^{2}$ to conclude
  that
  \begin{align*}
    \sum_{i}{}^{'}e_{i}K_{i}^{\dagger}K_{i} &\leq CN^{-1}(N^{>}+1)^{2} \sum_{i}{}^{'}e_{i}\langle W\varphi_{i}|\alpha^{2}W\varphi_{i}\rangle \\
    &= CN^{-1}{(N^{>}+1)^{2}}\sum_{i}{}^{'}\langle E^{1/2}(\log
    |A|)^{2}E^{1/2}U_{0}\varphi_{i}| U_{0}\varphi_{i }\rangle \\
    &=CN^{-1}{(N^{>}+1)^{2}} \| E^{1/2}(\log |A|)^{2}E^{1/2}\|_{1} \,.
  \end{align*}
  The claim of the lemma then follows if $\|E^{1/2} (\log
  |A|)^{2}E^{1/2} \|_{1}<\infty $. To see this observe that
  \begin{align}\label{logestA}
    0\leq  -\log|A| &\leq |A|^{-1}-1 \nonumber \\
    & \leq |A|^{-2}-1= E^{1/2}D^{-1}E^{1/2}-1 \,
  \end{align}
  which leads to
  \begin{align*}
    \| E^{1/2}(\log |A|)^{2}E^{1/2}\|_{1} &\leq  \|E^{1/2}(E^{1/2}D^{-1}E^{1/2}-1)^{2}E^{1/2}\|_{1}\\
    &= \|(ED^{-1}-1)E^{1/2}\|_{2}^{2} \\
    &= \|(E-D)D^{-1}E^{1/2}\|_{2}^{2} \\
    &\leq \| (E-D)D^{-1/2}\|_{2}^{2}\|D^{-1/2}E^{1/2}\|^{2}\,.
  \end{align*}
  The claim thus follows from (\ref{Hilbertschmidtest}) and
  boundedness of $B=D^{-1/2}E^{1/2}$.

  The proof of (\ref{kappaest}) is a bit more elaborate. With
  \begin{align*} [X,a(f)]= -\nu a^{\dagger}(\alpha \overline{f})
  \end{align*}
  we easily obtain
  \begin{align*}
    \kappa_{f}''(t)=\kappa_{\alpha^{2}f}(t)-r_{\alpha,\alpha f}(t)
  \end{align*}
  where
  \begin{align*}
    r_{\alpha, \varphi}(t):=\e^{-tX} \left[(1-\nu
      \nu^{\dagger})a(\alpha
      \varphi)-[X,\nu]a^{\dagger}(\overline{\varphi})\right]\e^{tX}\,
    .
  \end{align*}
  Using $\kappa_{f}(0)=\kappa'_{f}(0)=0$, a second order Taylor
  expansion yields
  \begin{align*}
    \kappa_{f}(t)= \int_{0}^{t}(t-s)\left(\kappa_{\alpha^{2}f}(s)
      -r_{\alpha, f}(s)\right) ds\, .
  \end{align*}
  For any $\ket{\psi} \in \mathcal{F}^{(N)}$ we introduce
  \begin{align*}
    \hat{\kappa}_{\psi}(t)&:= \sup_{\|\alpha f\|\leq1} \|\kappa_{f}(t) \ket{\psi} \| \, ,\\
    \hat{r}_{\alpha,\psi}&:= \frac 1 2\sup_{s\leq 1} \sup_{\|\alpha
      f\| \leq 1} \|r_{\alpha, f}(s)\ket{\psi}\| \, .
  \end{align*}
  Note that
  \begin{align*}
    \sup_{\|\alpha f\|\leq1} \|\kappa_{\alpha^{2}f}(t) \ket{\psi} \|
    =\|\alpha\|^{2} \sup_{\|\alpha f\|\leq1}
    \|\kappa_{\alpha^{2}f/\|\alpha\|^{2}}(t) \ket{\psi} \| \leq
    \|\alpha\|^{2} \hat{\kappa}_{\psi}(t)
  \end{align*}
  which yields
  \begin{align*}
    \hat{\kappa}_{\psi}(t)\leq \hat{r}_{\alpha,\psi}+ \|\alpha\|^{2}
    \int_{0}^{t} \hat{\kappa}_{\psi}(s)ds
  \end{align*}
  for $t\leq 1$. It follows from Gr\"onwall's lemma (see, e.g.,
  \cite[Thm. III.1.1]{gron}) that
  \begin{align*}
    \hat{\kappa}_{\psi}(1) \leq \e^{\|\alpha\|^{2}}
    \hat{r}_{\alpha,\psi}(1)\, .
  \end{align*}
  If $f \in \ker \alpha$ then $\kappa_{f}(t)=0$. For $f \notin \ker
  \alpha$
  \begin{align*}
    \frac{\|\kappa_{f}(1)\ket{\psi} \|}{\|\alpha f\|}\leq
    \e^{\|\alpha\|^{2}} \hat{r}_{\alpha,\psi}(1)
  \end{align*}
  from which (\ref{kappaest}) follows if we can show that
  \begin{align}\label{remest}
    \hat{r}_{\alpha,\psi}(1) \leq C N^{-1/2}\| (N^{>}+1)\ket{\psi}
    \|\, .
  \end{align}

  To see (\ref{remest}) we define $g=\alpha f$ and first show that for
  $\|g\|\leq1$
  \begin{align} \label{opest1}
    a^{\dagger}(\alpha\overline{g})(1-\nu\nu^{\dagger})^{2}a(\alpha\overline{g})
    &\leq CN^{-1}{(N^{>}+1)^{2}}
  \end{align}
  and
  \begin{align}
    a(g)[X,\nu]^{\dagger}[X,\nu]a^{\dagger}(g)&\leq
    CN^{-1}{(N^{>}+1)^{2}} \, . \label{opest2}
  \end{align}
  The first bound follows directly from
  \begin{align*}
    1-\nu\nu^{\dagger}&=
    \frac{(2N+3)N^{>}-(N^{>})^{2}-5N-1}{(N-1)^{2}}
  \end{align*}
  and $a^{\dagger}(\alpha \overline{g}) a(\alpha \overline{g}) \leq
  \|\alpha\|^{2} N^{>}$. To show (\ref{opest2}) we write
  \begin{align*}
    a(g)[X,\nu]^{\dagger}[X,\nu]a^{\dagger}(g)&\leq \left(\frac{[(a_{0}^{\dagger})^{2},a_{0}^{2}]}{2(N-1)^{2}}\right)^{2}    a(g)\left(\sum_{i,j,k,l}{}^{'}\alpha_{ij} \alpha_{kl}a_{k}^{\dagger}a_{l}^{\dagger}a_{i}a_{j}\right) a^{\dagger}(g) \\
    &\leq \left(\frac{2(N-N^{>})+1}{(N-1)^{2}}\right)^{2}    a(g)\|\alpha\|_{2}^{2}N^{>}(N^{>}-1)a^{\dagger}(g) \\
    &\leq CN^{-2}\|\alpha\|_{2}^{2}\|g\|^{2}N^{>}(N^{>}-1)(N^{>}+1) \\
    &\leq C N^{-1}(N^{>}+1)^{2}\, .
  \end{align*}
 
  To conclude the proof of (\ref{kappaest}) it remains to show that
  the inequality
  \begin{align}\label{opest3}
    \e^{-tX}(N^{>}+1)^{2}\e^{tX}& \leq \e^{Ct}(N^{>}+1)^{2}
  \end{align}
  holds for $t=1$.  To that end we compute
  \begin{align*}
    [X,N^{>}]&=-\frac 1 2 \sum_{i,j}{}^{'}\left(\nu^{\dagger}\alpha_{ij}[a_{i}a_{j},N^{>}]-\nu \alpha_{ij}[a_{i}^{\dagger}a_{j}^{\dagger},N^{>}] \right) \\
    &=
    -\sum_{i,j}{}^{'}\alpha_{ij}(b_{i}b_{j}+b_{i}^{\dagger}b_{j}^{\dagger})
    \, .
  \end{align*}
  Taking the square of this expression yields
  \begin{align} \nonumber
    [X,N^{>}]^{2} &\leq 2 \left(\sum_{i,j}{}^{'} \alpha_{ij}b_{i}b_{j}\right)\left(\sum_{k,l}{}^{'} \alpha_{kl}b_{k}^{\dagger}b_{l}^{\dagger}\right) +2\left(\sum_{i,j}{}^{'} \alpha_{ij}b_{i}^{\dagger}b_{j}^{\dagger} \right) \left(\sum_{k,l}{}^{'}\alpha_{kl}b_{k}b_{l} \right) \\
    &\leq 2
    \nu^{\dagger}\nu\sum_{i,j,k,l}{}^{'}\alpha_{ij}\alpha_{kl}a_{i}a_{j}a_{k}^{\dagger}a_{l}^{\dagger}+
    2\|\alpha\|_{2}^{2} \nu \nu^{\dagger}N^{>}(N^{>}-1) \nonumber \\
    \nonumber &\leq 2
    \frac{(N-N^{>})(N-N^{>}-1)}{(N-1)^{2}}\left(\sum_{i,j,k,l}{}^{'}\alpha_{ij}\alpha_{kl}a_{i}^{\dagger}a_{j}^{\dagger}a_{k}a_{l}+4\sum_{i,j}{}^{'}(\alpha^{2})_{ij}a_{i}^{\dagger}a_{j}
      +2 \|\alpha\|_{2}^{2} \right) \\ \nonumber
    &\ \ \ \ + 2 \|\alpha\|_{2}^{2}\frac{(N-N^{>})(N-N^{>}+3)+2}{(N-1)^{2}}N^{>}(N^{>}-1) \\
    &\leq 2 \frac{(N-N^{>})(N-N^{>}-1)}{(N-1)^{2}}\left(
      \|\alpha\|_{2}^{2} N^{>}(N^{>}-1) + 4
      \|\alpha\|^{2}N^{>}+2\|\alpha\|_{2}^{2}\right) \nonumber \\
    \nonumber
    &\ \ \ \ +2 \|\alpha\|_{2}^{2}\frac{(N-N^{>})(N-N^{>}+3)+2}{(N-1)^{2}}N^{>}(N^{>}-1) \\
    &\leq C\|\alpha\|^{2}_{2}(N^{>}+1)^{2} \, . \label{squarecommest}
  \end{align}
  By Schwarz's inequality we obtain
  \begin{align*}
    [X,(N^{>}+1)^{2}]&=(N^{>}+1)[X,N^{>}]+[X,N^{>}](N^{>}+1) \\
    &\leq C(N^{>}+1)^{2}\,
  \end{align*}
  and hence it follows that
  \begin{align*}
    \e^{tX}(N^{>}+1)^{2}\e^{-tX} &= (N^{>}+1)^{2}+\int_{0}^{t}\e^{sX}[X,(N^{>}+1)^{2}]\e^{-sX}ds \\
    & \leq (N^{>}+1)^{2}+C\int_{0}^{t}\e^{sX}(N^{>}+1)^{2}\e^{-sX}ds\,
    .
  \end{align*}
  Gr\"onwall's lemma then yields (\ref{opest3}). This completes the
  proof of the lemma.
\end{proof}

We conclude this section with the proof of Lemma~\ref{hscondi}.
\begin{proof}[Proof of Lemma~\ref{hscondi}]
  (i): The positivity of $V$ follows directly from the assumption that
  $v$ is of positive type:
  \begin{align*}
    \bra{\psi}V\ket{\psi}=\int_{\R^{2d}}
    \varphi_{0}(x)\varphi_{0}(y)v(x-y)\overline{\psi(x)}\psi(y)dxdy
    \geq 0\, .
  \end{align*}
  In particular, the trace norm of $V$ equals its trace, which is
  equal to
$$
\Tr\, V = \int_{\R^{d}}\varphi_{0}(y)^{2}v(0)dy = v(0) < \infty \,.
$$

(ii): With $A-1=(D^{1/2}-E^{1/2})E^{-1/2}$ and the integral
representation
\begin{align*}
  x^{1/4}-y^{1/4}= \sqrt{2}\pi
  \int_{0}^{\infty}t^{1/4}\left(\frac{1}{y+t} -\frac{1}{x+t}\right)dt
\end{align*}
we have
\begin{align} \nonumber
  \|A-1\|_{2} &\leq \sqrt{2}\pi    \int_{0}^{\infty}t^{{1/4}} \| \left((t+D^{2})^{-1}    -(t+E^{2})^{-1} \right)E^{-1/2}\|_{2} dt  \\
  & = 2^{3/2}\pi \int_{0}^{\infty} t^{1/4} \| (t+D^{2})^{-1}D^{1/2}VD^{1/2}(t+E^{2})^{-1}E^{-1/2}\|_{2}dt \nonumber \\
  &\leq 2^{3/2}\pi \|V\|_{2} \int_{0}^{\infty}t^{1/4}
  \|(t+D^{2})^{-1}D^{1/2}\| \|D^{1/2}(t+E^{2})^{-1}E^{-1/2}\| dt\,
  . \label{inta1}
\end{align}
Since $D\leq E$ we can further bound
\begin{align*}
  \|D^{1/2}(t+E^{2})^{-1}E^{-1/2}\|&=\|E^{-1/2}(t+E^{2})^{-1}D(t+E^{2})^{-1}E^{-1/2}\|^{1/2} \\
  &\leq \|(t+E^{2})^{-1}\| \leq \frac 1t \, .
\end{align*}
Using the spectral theorem we conclude that $
\|(t+D^{2})^{-1}D^{1/2}\| \leq C (1+t^{3/4})^{-1}$ and hence the
integrand in (\ref{inta1}) falls off like $~t^{-3/2}$ at infinity,
making the integral finite. The estimate for
$B-1=D^{-1/2}(E^{1/2}-D^{1/2})$ is obtained along the same lines.

(iii): We apply the integral representation $\log x = \frac 1 2
\int_{0}^{\infty}\left(\frac{1}{x+t}-\frac{1}{x^{-1}+t}\right)dt$ and
the resolvent identity to obtain
\begin{align*}
  2\|\alpha\|_{2}&\leq \int_{0}^{\infty}\|(|A^{{*}}|+t)^{-1}-(|B^{{*}}|+t)^{-1}\|_{2}dt \\
  & \leq \| |A^{{*}}|-|B^{{*}}| \|_{2}
  \int_{0}^{\infty}\|(|A^{{*}}|+t)^{-1}\| \|(|B^{{*}}|+t)^{-1}\|dt <
  \infty
\end{align*}
since $\| |A^{{*}}|-|B^{{*}}|\|_{2} = \|A-B\|_{2}\leq
\|A-1\|_{2}+\|B-1\|_{2}<\infty$ by (i).

(iv): On $ \mathcal{F}^{(N)}$ we have the bound
\begin{align*}
  \sum_{i,j}{}^{'}\overline{\alpha}_{ij}b^{\dagger}_{i}b_{j}^{\dagger}\sum_{i,j}{}^{'}{\alpha}_{ij}b_{i}b_{j} &\leq \frac{a_{0}^{2}(a_{0}^{\dagger})^{2}}{(N-1)^{2}} \sum_{i,j,k,l} \overline{\alpha}_{ij}\alpha_{kl} a_{i}^{\dagger}a_{j}^{\dagger}a_{k}a_{l} \\
  &\leq \left(\frac{N+2}{N-1}\right)^{2} \|\alpha\|^{2}_{2}N(N-1) \,,
\end{align*}
which shows that $X$ is a bounded operator. Its anti-symmetry follows
directly from its definition.
\end{proof}
This completes the proof of part (I) of
Proposition~\ref{maintechprop}.

\subsection{Proof of Proposition~\ref{maintechprop}
  (II)} \label{ss:II}

We abbreviate the symplectic transformation (\ref{invtrafo}) by
\begin{align}\label{notat}
  \left(\begin{array}{c} {c} \\ {c}^{\dagger} \end{array}\right) =:
  \left(\begin{array}{cc} L& M \\ M & L \end{array}\right)
  \left(\begin{array}{c} {b} \\ {b}^{\dagger} \end{array}\right)\, .
\end{align}
A straightforward computation shows that
\begin{align*}
  [c_{i},c_{j}^{\dagger}]+[c_{j},c_{i}^{\dagger}]&=2\frac{N-N^{>}}{N-1}\delta_{ij}-\frac{1}{N-1}\sum_{k,l}{}^{'}\left(L_{jl}L_{ik}-M_{jl}M_{ik}\right)
  \left(a_{k}^{\dagger}a_{l}+a_{l}^{\dagger}a_{k}\right)\, .
\end{align*}
We will show below that $Y- E/ 2$ and $D-E$ are trace class, with
\begin{align}\label{traceident}
  \mathrm{tr}\left(Y-\frac{E}{2}\right)=\frac 1 2 \Tr \left( D+ Q
    V-E\right) \, .
\end{align}
Given that, we have
\begin{align*}
  & 2 \sum_{i,j}{}^{'}  \Big(U_{0}^{{*}} \left( Y-\frac{E}{2}\right)U_{0}\Big)_{ij} [{c}_{i},{c}^{\dagger}_{j}]  \\
  & = \frac{N-N^{>}}{N-1}\mathrm{tr}\left( D+ Q V-E\right) \\
  & \quad -\frac{1}{N-1}\sum_{k,l}{}^{'}\left(L^{{*}}U_{0}^{{*}}\left(
      Y-\frac E 2 \right)U_{0}L-M^{{*}}U_{0}^{{*}}\left( Y-\frac E 2
    \right)U_{0}M \right)_{kl}
  \left(a_{k}^{\dagger}a_{l}+a_{l}^{\dagger}a_{k}\right)\, .
\end{align*}
Since $\Tr\, QV = \Tr \, V - v_{0000}$, Proposition~\ref{maintechprop}
(II) then follows if we can show that
\begin{align}\label{prophets}
  -C D \leq L^{{*}}U_{0}^{{*}}\left(Y-\frac E
    2\right)U_{0}L-M^{{*}}U_{0}^{{*}}\left(Y-\frac E 2\right)U_{0}M
  \leq C D\, .
\end{align}
We compute
\begin{align*}
  & 4L^{{*}}U_{0}^{{*}}\Big(Y- \frac E 2\Big)U_{0}L-4M^{{*}}U_{0}^{{*}}\Big(Y-\frac E 2 \Big)U_{0}M \\
  &= \frac 1 2 D^{{1/2}}\left(1+D^{-{1/2}}{V}D^{-{1/2}}+D^{-1}E (1+D^{-{1/2}}{V}D^{-{1/2}}) DE^{-1}- 2 D^{-1}E +\text{h.c.}\right)D^{{1/2}} \\
  &=:D^{{1/2}}RD^{{1/2}}\, .
\end{align*}
Now $R$ is a bounded operator since $DE^{-{1}}$ and $D^{-1}E$ are
bounded, which follows from
\begin{align}
  \|DE^{-1}\|^{2}&=\|DE^{-2}D\| \leq1\, \label{boundedop1}, \\
  \|D^{-1}E\|^{2}&= \|D^{-1}E^{2}D^{-1}\|= \|1 +2D^{-{1/2}}VD^{-{1/2}}
  \|<\infty\, . \label{boundedop2}
\end{align}
This proves (\ref{prophets}).

We now turn to (\ref{traceident}). Note that
\begin{align*}
  2\left(2Y- E \right)&=B^{{*}}\left({D}+V -D^{1/2}{E}D^{-1/2}
  \right)A+A^{{*}}\left( D + V - D^{-1/2}{E}D^{1/2} \right)B \, .
\end{align*}
We claim that
\begin{align}
  \label{tchs1}
  \|D^{1/2}(E-D)D^{-1/2}\|_{2}&<\infty \, , \\
  \label{tchs2}
  \|D^{1/2}(E-D)D^{-1/2}+\text{h.c.}\|_{1}&< \infty \, ,\\
  \|D+QV-E\|_{1}&< \infty \label{tchs0} \, .
\end{align}
Since by Lemma~\ref{hscondi} $A-1$, $B-1$, are Hilbert-Schmidt and, in
addition, $V$ is trace class by Lemma~\ref{hscondi}, it follows from
(\ref{tchs1})--(\ref{tchs0}) that
\begin{align*}
  B^{{*}}\left({D}+V -D^{1/2}{E}D^{-1/2} \right)A+\text{h.c.}={D}+QV
  -D^{1/2}{E}D^{-1/2} +\text{h.c.}+\mathrm{Rest}
\end{align*}
with $\|\mathrm{Rest}\|_{1}< \infty$; hence $2Y-E$ is trace
class. Moreover,
\begin{align*}
  \mathrm{tr}(2Y-E) &=\frac 1 2\mathrm{tr}\left(D^{1/2}(D-E)D^{-1/2}+\mathrm{h.c.}\right) + \mathrm{tr} \, Q V \\
  &= \mathrm{tr}\left(D+ {Q V}-E\right) \, ,
\end{align*}
where the first equality holds by cyclicity of the trace and the
second is seen to be true by computing the trace in the eigenbasis of
$D$.

To show (\ref{tchs1})--(\ref{tchs0}) we compute
\begin{align}\nonumber
  D^{1/2}(E-D)D^{-1/2}&= \pi D^{1/2} \int_{0}^{\infty}\sqrt{t}\left(
    ({t+D^{2}})^{-1}-({t+E^{2}})^{-1}\right)D^{-1/2}dt \\ \nonumber
  &=2\pi
  \int_{0}^{\infty}\sqrt{t}{D}({t+D^{2}})^{-1}VD^{1/2}({t+E^{2}})^{-1}
  D^{-1/2} dt \\ \nonumber
  &= 2\pi \int_{0}^{\infty}\sqrt{t}{D}({t+D^{2}})^{-1}V ({t+D^{2}})^{-1}dt  \\
  & \ \ \ \ \ - 4\pi
  \int_{0}^{\infty}\sqrt{t}{D}({t+D^{2}})^{-1}VD^{1/2}
  ({t+E^{2}})^{-1}D^{1/2}V({t+D^{2}})^{-1}dt \,, \label{4l}
\end{align}
where we applied the resolvent identity twice. The expression on the
last line is trace class. This follows from the bound
\begin{align*}
  \left\|{D}({t+D^{2}})^{-1}VD^{1/2}
    ({t+E^{2}})^{-1}D^{1/2}V({t+D^{2}})^{-1} \right\|_{1}\leq
  \left\|{D}({t+D^{2}})^{-1}\right\|^{2}\left\| ({t+D^{2}})^{-1}
  \right\| \left\| V\right\|_{1}^{2} \,,
\end{align*}
where we have used that $E^2\geq D^2$ in the second factor. The latter
expression falls off like $ t^{-2}$ for large $t$, making the integral
finite. For the first term on the right side of (\ref{4l}), we compute
its matrix elements. With $D_i = \epsilon_i -\epsilon_0$ the
eigenvalues of $D$,
\begin{align*}
  & \left\langle \varphi_i \left| \pi \int_{0}^{\infty}\sqrt{t}
      {D}({t+D^{2}})^{-1}V ({t+D^{2}})^{-1}dt \right| \varphi_{j}
  \right\rangle  \\
  &=V_{ij}\pi \int_{0}^{\infty}\sqrt{t}\frac{D_{i}}{t+D_{i}^{2}}
  \frac{1}{t+D_{j}^{2}}dt = V_{ij}\frac{D_{i}}{D_{i}+D_{j}}\, .
\end{align*}
In particular, since
\begin{align*}
  \left|V_{ij}\frac{D_{i}}{D_{i}+D_{j}}\right|\leq |V_{ij}| \,,
\end{align*}
the Hilbert-Schmidt property (\ref{tchs1}) follows. Moreover,
\begin{align*}
  V_{ij}\frac{D_{i}}{D_{i}+D_{j}}+(i \leftrightarrow j) = V_{ij} \,,
\end{align*}
which implies (\ref{tchs2}). To prove (\ref{tchs0}), one simply
computes the trace of the operator in (\ref{tchs2}) in the basis of
$D$, which leads to the conclusion that $\sum_i' \langle \varphi_i | E
- D| \varphi_i \rangle < \infty$.  Since $E-D$ is a positive operator,
this implies that $E-D$ is trace class. Since also $V$ is trace class,
this proves (\ref{tchs0}).

\subsection{Proof of Proposition~\ref{maintechprop} (III)}

Recall the notation introduced in (\ref{notat}). A straightforward
computation shows
\begin{align*} [c_{j},c_{i}]= \frac{1}{N-1}
  \sum_{k,l}{}^{'}\left(M_{jk}L_{il}a_{l}a_{k}^{\dagger}-L_{jk}M_{il}a_{k}a_{l}^{\dagger}\right)
\end{align*}
and
\begin{align*}
  \sum_{i,j}{}^{'}Z_{ij}\left([{c}_{j},{c}_{i}]+[{c}_{i}^{\dagger},{c}_{j}^{\dagger}]
  \right)=\frac{1}{N-1}\sum_{k,l}{}^{'}
  \left(L^{{*}}ZM-M^{{*}}ZL\right)_{kl}\left(a_{k}^{\dagger}a_{l}+a_{l}^{\dagger}a_{k}\right)\,
  .
\end{align*}
Hence what we need to show is
\begin{align*}
  -CD \leq L^{{*}}ZM-M^{{*}}ZL \leq CD\, .
\end{align*}
We observe that
\begin{align*}
  & 8\left(L^{{*}}ZM-M^{{*}}ZL\right) \\ &=\frac 1 2(B-A)\left(B^{{*}}\left(D+{V}\right)A-A^{{*}}\left(D+{V}\right)B\right)(A^{{*}}+B^{{*}}) +\text{h.c.} \\
  &=\left[D^{-{1/2}}ED^{-{1/2}}(D+V)D^{{1/2}}E^{-1}D^{{1/2}}-D-V\right]+\text{h.c.} \\
  &=D^{{1/2}}\Big(\Big[D^{-1}E(1+D^{-{1/2}}VD^{-{1/2}})DE^{-1}
  -1-D^{-{1/2}}VD^{-{1/2}} \Big]+\text{h.c.}\Big)D^{{1/2}}\, .
\end{align*}
The operator in square brackets is bounded because of
(\ref{boundedop1}) and (\ref{boundedop2}), hence the claim follows.

\section{Proof of Theorem~\ref{mainthm}}\label{finish}

This section contains the proof of Theorem~\ref{mainthm}. We split the
proof into two parts, corresponding to the lower and upper bounds on
the eigenvalues of $H_N$, respectively.

\subsection{Lower bound}

By combining the estimate~(\ref{HNlower}) with
Proposition~\ref{maintechprop}, we obtain the inequality
\begin{align} \nonumber H_{N}&\geq Nh_{00}+\frac{N+1}{2}v_{0000}+
  (1-\lambda)\mathcal{U}^{\dagger}\left(\sum_{i}{}^{'}e_{i}
    a_{i}^{\dagger}a_{i}\right)\mathcal{U}+\frac 1 2
  \mathrm{tr}(E-D-V)\\ \nonumber
  & \quad -C\Big( \left(N^{-1}\eps^{-1}+N^{-1}\lambda^{-1}+\zeta^{-1}N^{-1/2}\right)(N^{>}+1)(T_\text{H}+1)\\
  &\hspace{1in}+(N^{-1}+\eps+\zeta N^{-1/2})(T_\text{H}+1)\Big)\,
  , \label{lowerlast}
\end{align}
which holds for any $\lambda>0$, $\zeta>0$ and $0<\epsilon<1$.  Since
the spectrum of $\sum_{i}'e_{i}a_{i}^{\dagger}a_{i}$ consists of
finite sums of the form $\sum_{i}' e_{i}n_{i}$ with
$\sum_{i}'n_{i}\leq N$, the desired lower bound follows directly from
the min-max principle. In fact, for any function $\Psi$ in the
spectral subspace of $H_{N}$ corresponding to energy $E\leq
E_{0}(N)+\xi$, Lemmas~\ref{simple bounds} and~\ref{quadrbdlem} imply
that
$$
\bra{\Psi} (T_\text{H}+1) \ket{\Psi} \leq C(\xi+1)
$$
and
\begin{align*}
  \bra{\Psi} (N^{>}+1)(T_\text{H}+1) \ket{\Psi} \leq C(\xi+1)^{2} \, .
\end{align*}
Choosing $\eps = O(\sqrt{\xi/N})=\lambda$ and $\zeta=\sqrt{\xi}$, we
conclude that the spectrum of $H_{N}$ below an energy $E_{0}(N)+\xi$
is bounded from below by the corresponding spectrum of
\begin{align*}
  Nh_{00}+\frac{N+1}{2}v_{0000}+\sum_{i}{}^{'}e_{i}a_{i}^{\dagger}a_{i}-\frac
  1 2 \mathrm{tr}(D+V-E)-O(\xi^{\frac{3}{2}}N^{-1/2})\, .
\end{align*}
This completes the desired lower bound.

\subsection{Upper bound}\label{ss:ub}

A combination of (\ref{HNupper}) and Proposition~\ref{maintechprop}
implies that
\begin{align}\nonumber
  H_{N}& \leq Nh_{00}+\frac {N+1} 2
  v_{0000}+(1+\lambda)\mathcal{U}^{\dagger}\left(\sum_{i}{}'e_{i}a_{i}^{\dagger}a_{i}\right)\mathcal{U}-\frac
  1 2 \mathrm{tr}(D+V-E) \\ \nonumber
  & \quad +C \left(N^{-1}\eps^{-1}+N^{-1}\lambda^{-1}  +N^{-1}       +\zeta^{-1}N^{-1/2}\right)(N^{>}+1)(T_\text{H}+1)\, \\
  & \quad + C\left(\eps+\zeta N^{-1/2}+N^{-1}
  \right)(N^{>}+1)^{1/2}(T_\text{H}+1)^{1/2} \, , \label{hb1}
\end{align}
for any $\lambda>0$, $\zeta>0$ and $\epsilon>0$.  To apply the min-max
principle we need the following bound.
\begin{lem} \label{quadraticbound} One has the bound
  \begin{align} \label{lem6}
    \mathcal{U}(N^{>}+1)(T_\text{H}+1)\mathcal{U}^{\dagger} & \leq
    C\left(\sum_{i}{}^{'}e_{i}a_{i}^{\dagger}a_{i}+1\right)^{2} \, .
  \end{align}
\end{lem}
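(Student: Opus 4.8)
The plan is to reduce Lemma~\ref{quadraticbound} to the assertion that conjugation by $\mathcal U$ preserves $(\mathcal N_E+1)^2$ up to a constant, where I abbreviate $\mathcal N_E:=\sum_{i}{}' e_i a_i^\dagger a_i=\mathrm{d}\Gamma(\hat E)$. The three operators $N^>=\mathrm{d}\Gamma(Q)$, $T_\text{H}=\mathrm{d}\Gamma(D)$ and $\mathcal N_E=\mathrm{d}\Gamma(\hat E)$ are second quantizations of operators diagonal in the basis $\{\varphi_i\}$, and therefore commute. From $(\eps_1-\eps_0)Q\le D$ one gets $(\eps_1-\eps_0)N^>\le T_\text{H}$, and from $E^2=D^2+2D^{1/2}VD^{1/2}\le (D+v(0))^2$ (using $0\le V\le v(0)$) together with the min--max comparison of the ordered eigenvalues of $E$ and $D$ one gets $D\le \hat E\le D+v(0)Q$, which second-quantizes to $T_\text{H}\le \mathcal N_E\le C\,T_\text{H}$. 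Since all three commute, this yields $(N^>+1)(T_\text{H}+1)\le C(\mathcal N_E+1)^2$; conjugating by $\mathcal U$ reduces the lemma to
\[
\mathcal U(\mathcal N_E+1)^2\mathcal U^\dagger\le C(\mathcal N_E+1)^2 .
\]

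To prove this I would use the factorization $\mathcal U=\mathcal W^\dagger \e^{X}$ of (\ref{defofU}) and write $\mathcal U(\mathcal N_E+1)^2\mathcal U^\dagger=\mathcal W^\dagger\big(\e^{X}(\mathcal N_E+1)^2\e^{-X}\big)\mathcal W$, treating the two conjugations separately. For the inner one I would run a Gr\"onwall argument in the spirit of (\ref{opest3}): with $f(t)=\e^{tX}(\mathcal N_E+1)^2\e^{-tX}$ one has $f'(t)=\e^{tX}[X,(\mathcal N_E+1)^2]\e^{-tX}$, so it suffices to prove the commutator bound $[X,(\mathcal N_E+1)^2]\le C(\mathcal N_E+1)^2$ and integrate. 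Since $[X,\mathcal N_E]=-\tfrac12\sum_{kl}{}'\alpha_{kl}(e_k+e_l)(b_k^\dagger b_l^\dagger+b_kb_l)$ is self-adjoint, a Schwarz inequality reduces this to $[X,\mathcal N_E]^2\le C(\mathcal N_E+1)^2$, which I would establish by number-operator estimates of the type in (\ref{squarecommest}), moving one power of the energy weight onto $\mathcal N_E$ and absorbing the rest into the Hilbert--Schmidt norm $\|\hat E^{1/2}\alpha\|_2$. That this weighted norm is finite follows from $\hat E\le D+v(0)Q$ together with $\|D^{1/2}\alpha\|_2<\infty$, which one checks by the same integral representation for $\log$ used in Lemma~\ref{hscondi}(iii), reducing it to the finiteness of $\|(E-D)D^{-1/2}\|_2$ already obtained in (\ref{Hilbertschmidtest}).

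It then remains to control the basis change $\mathcal W=\Gamma(W)$ with $W=W_0U_0$. Both factors fix $\varphi_0$ and map $QL^2(\R^d)$ into itself --- $U_0$ because it only reorders the eigenbasis of $E$ (and $E\varphi_0=0$), and $W_0$ because it is the polar part of $A=D^{1/2}E^{-1/2}$ --- so $W$ commutes with $Q$ and $\mathcal W$ preserves $N^>$ exactly. For the energy I would prove the one-particle inequality $W^*\hat EW\le C\hat E$, equivalently $\|\hat E^{1/2}W\hat E^{-1/2}\|<\infty$; this second-quantizes to $\mathcal W^\dagger(\mathcal N_E+1)^2\mathcal W\le C(\mathcal N_E+1)^2$ and, combined with the Gr\"onwall bound above, closes the estimate. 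I expect this last inequality to be the main obstacle: it is the statement that the basis change $W$ preserves the form domain of $H_\text{H}$, and it is not an abstract consequence of the unitarity of $\mathcal W$. Rather, one has to exploit the quantitative energy-weighted Hilbert--Schmidt control of $A-1$, $B-1$ and $\alpha$ from Lemma~\ref{hscondi} and (\ref{Hilbertschmidtest}), fed through the integral representations of $\sqrt{\cdot}$ and $\log$, to show that $W$ differs from the identity only by an operator that is relatively bounded with respect to $\hat E^{1/2}$.
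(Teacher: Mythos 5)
Your first two stages are sound. The comparison $D\le\hat E\le D+v(0)Q$ is legitimate: since $\hat E$ and $D$ are simultaneously diagonal in $\{\varphi_i\}$ with increasingly ordered eigenvalues, the min--max comparison of eigenvalues does upgrade to an operator inequality, and this correctly reduces the lemma to $\mathcal U(\mathcal N_E+1)^2\mathcal U^\dagger\le C(\mathcal N_E+1)^2$ with $\mathcal N_E=\mathrm{d}\Gamma(\hat E)$. Your Gr\"onwall treatment of the inner conjugation by $\e^{X}$ parallels the paper's (\ref{squarecommest})--(\ref{opest3}) and (\ref{operatorest}); the weighted norm actually needed there is $\|\hat E\alpha\|_2$ rather than $\|\hat E^{1/2}\alpha\|_2$ if one follows the pair estimate verbatim, but this is a repairable detail since $\|\hat E\alpha\|_2\le\|D\alpha\|_2+v(0)\|\alpha\|_2$ and $\|D\alpha\|_2\le\|D^{1/2}(E-D)D^{-1/2}\|_2<\infty$ exactly as in the paper, via (\ref{logestA}) and (\ref{tchs1}). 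The genuine gap is the outer stage. The assertion that $W^*\hat EW\le C\hat E$ ``second-quantizes'' to $\mathcal W^\dagger(\mathcal N_E+1)^2\mathcal W\le C(\mathcal N_E+1)^2$ is false as stated: second quantization gives the first-power bound $\mathcal W^\dagger\mathcal N_E\mathcal W=\mathrm{d}\Gamma(W^*\hat EW)\le C\mathcal N_E$, and $0\le A\le B$ does not imply $A^2\le B^2$ for non-commuting operators --- and $\mathrm{d}\Gamma(W^*\hat EW)$ does not commute with $\mathcal N_E$. Since $W$ is unitary, $(W^*\hat EW)^2=W^*\hat E^2W$, so your route actually requires the strictly stronger one-particle bound $W^*\hat E^2W\le C\hat E^2$, i.e. $\|\hat EW\hat E^{-1}\|<\infty$, not the $\|\hat E^{1/2}W\hat E^{-1/2}\|<\infty$ you state. (Granted the stronger bound, the Fock-space square does follow: expand $\mathrm{d}\Gamma(W^*\hat EW)^2$ into same-slot terms, controlled by $W^*\hat E^2W\le C\hat E^2$, and distinct-slot terms, which commute and are controlled by $W^*\hat EW\le C\hat E$.) Moreover you prove neither inequality: you rightly flag the one-particle bound as the main obstacle, but ``$W-1$ is relatively bounded with respect to $\hat E^{1/2}$'' is a restatement of the problem, and it does not follow softly from the Hilbert--Schmidt control of $A-1$, $B-1$, $\alpha$.

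The paper's proof is organized precisely to avoid conjugating a square by $\mathcal W$. Since $\mathcal U=\mathcal W^\dagger\e^{X}$ puts $\mathcal W$ outermost, the paper keeps the target in the commuting-product form $(N^{>}+1)(T_\text{H}+1)$ through the $\e^{X}$ stage (estimate (\ref{firststep}), obtained from the fourth-order bound (\ref{operatorest}) plus operator monotonicity of the square root), and then uses the fact you mention but never exploit --- $W$ commutes with $Q$, hence $\mathcal W$ commutes with $N^{>}$ --- so that only the single first power $T_\text{H}+1$ is conjugated: $\mathcal W^\dagger(T_\text{H}+1)\mathcal W\le C(\mathcal N_E+1)$ by the one-particle bound $W^*DW\le C\hat E$ of (\ref{secondstep}), after which the final square comes from the commuting pair $(N^{>}+1)(\mathcal N_E+1)\le C(\mathcal N_E+1)^2$. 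The proof of (\ref{secondstep}) itself is the analytic core your sketch lacks: the paper writes $D^{1/2}W_0E^{-1/2}=\pi D\int_0^\infty t^{-1/2}(Et+D)^{-1}dt$ and controls $\|Q(t+DE^{-1})^{-1}\|$ quantitatively via Gil's resolvent estimate for Hilbert--Schmidt perturbations (Lemma~\ref{Gillemma}), using that $DE^{-1}-1$ is trace class; soft arguments fail here because $DE^{-1}$ is not self-adjoint. To salvage your architecture you must either prove the stronger $\|\hat EW\hat E^{-1}\|<\infty$, or restructure as the paper does so that only first powers of second-quantized operators ever meet $\mathcal W$.
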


Note that by operator monotonicity of the square root it follows
immediately from Lemma~\ref{quadraticbound} that
\begin{align*}
  \mathcal{U}(N^{>}+1)^{1/2}(T_\text{H}+1)^{1/2}\mathcal{U}^{\dagger}
  & \leq C\left(\sum_{i}{}^{'}e_{i}a_{i}^{\dagger}a_{i}+1\right)\, .
\end{align*}
Hence we obtain from (\ref{hb1})
\begin{align}\label{upperestlast} \nonumber
  \mathcal{U}H_{N}\mathcal{U}^{\dagger} &\leq Nh_{00}+\frac{N+1}{2}v_{0000}+(1+\lambda)\sum_{i}{}^{'}e_{i}a_{i}^{\dagger}a_{i}-\frac 1 2 \mathrm{tr}(D+V-E) \\
  & \quad +C \left(N^{-1}\eps^{-1}+N^{-1}\lambda^{-1}  +N^{-1}       +\zeta^{-1}N^{-1/2}\right)\left(\sum_{i}{}^{'}e_{i}a_{i}^{\dagger}a_{i}+1\right)^{2}\, \nonumber \\
  & \quad + C \left(\eps+\zeta N^{-1/2}+N^{-1}
  \right)\left(\sum_{i}{}^{'}e_{i}a_{i}^{\dagger}a_{i}+1\right) \, .
\end{align}
Given an eigenvalue of $\sum_{i}^{'}e_{i}a_{i}^{\dagger}a_{i} $ with
value $\xi$, we choose $\eps =O(\sqrt{\xi/N})=\lambda$ and
$\zeta=\sqrt{\xi}$ to obtain $Nh_{00}+\frac {N+1} 2 v_{0000}+\xi
-\frac 1 2\mathrm{tr}(D+V-E)+O(\xi^{3/2}N^{-1/2})$ for the right side
of (\ref{upperestlast}). Hence the desired upper bound follows from
the min-max principle.

It remains to prove (\ref{lem6}).

\begin{proof}[Proof of Lemma~\ref{quadraticbound}] If we can show that
  \begin{align} \label{firststep} \e^{X}(N^{>}+1)(T_\text{H}+1)\e^{-X}
    \leq C (N^{>}+1)(T_\text{H}+1)\,
  \end{align}
  and
  \begin{align}\label{secondstep}
    W^{*}DW=U_0^{*}W_{0}^{*} DW_{0}U_0 \leq C \hat{E} \,,
  \end{align}
  the claim follows since then
  \begin{align*}
    \mathcal{U}(N^{>}+1)(T_\text{H}+1)\mathcal{U}^{\dagger}&\leq C\mathcal{W}^{\dagger}(N^{>}+1)^{1/2}(T_\text{H}+1)(N^{>}+1)^{1/2}\mathcal{W} \\
    &=C(N^{>}+1)^{1/2}\mathcal{W}^{\dagger}(T_\text{H}+1)\mathcal{W}(N^{>}+1)^{1/2} \\
    &\leq C
    (N^{>}+1)\left(\sum_{i}{}^{'}e_{i}a_{i}^{\dagger}a_{i}+1\right) \,
    ,
  \end{align*}
  where we have used (\ref{firststep}) for the first inequality, and
  (\ref{secondstep}) for the second.

  We start with the proof of (\ref{firststep}). In fact we shall show
  that
  \begin{align}\label{operatorest}
    \e^{X}(N^{>}+1)^{2}(T_\text{H}+1)^{2}\e^{-X} \leq C
    (N^{>}+1)^{2}(T_\text{H}+1)^{2}
  \end{align}
  from which the claim follows by operator monotonicity of the square
  root. We compute
  \begin{align} \nonumber
    [X,(N^{>}+1)^{2}(T_\text{H}+1)^{2}] =& (N^{>}+1)(T_\text{H}+1)[X,(N^{>}+1)(T_\text{H}+1)] \\
    &+[X,(N^{>}+1)(T_\text{H}+1)](N^{>}+1)(T_\text{H}+1)\,
    . \label{commutX}
  \end{align}
  With
  \begin{align*}
    A_{1}:=[X,N^{>}]&= \sum_{i,j}{}^{'}\alpha_{ij}\left(b_{i}b_{j}+b_{j}^{\dagger}b_{i}^{\dagger}\right) \\
    A_{2}:=[X,T_\text{H}]&=\sum_{i,j}{}^{'}\alpha_{ij}(\eps_{i}-\eps_{0})\left(b_{i}b_{j}+b_{j}^{\dagger}b_{i}^{\dagger}\right)
  \end{align*}
  we can bound
  \begin{align*}
    [X,(N^{>}+1)(T_\text{H}+1)]^{2}&= \left(A_{1}(T_\text{H}+1)+(N^{>}+1)A_{2}\right)^{2} \\
    &= \left(A_{1}(T_\text{H}+1)+A_{2}(N^{>}+1)+[N^{>},A_{2}]\right)^{2} \\
    &\leq C
    \left((T_\text{H}+1)A_{1}^{2}(T_\text{H}+1)+(N^{>}+1)A_{2}^{2}(N^{>}+1)
      + [N^{>},A_{2}]^{2}\right)\, .
  \end{align*}
  By (\ref{squarecommest}) we have
  \begin{align*}
    A_{1}^{2}\leq C \|\alpha\|_{2}^{2} (N^{>}+1)^{2}
  \end{align*}
  and similarly
  \begin{align*}
    A_{2}^{2} \leq C \|D\alpha\|_{2}^{2}(N^{>}+1)^{2}\, .
  \end{align*}
  Furthermore, since
  \begin{align*}
    [N^{>},A_{2}]=2\sum_{i,j}{}^{'}\alpha_{ij}(\eps_{i}-\eps_{0})\left(b_{j}^{\dagger}b_{i}^{\dagger}-b_{i}b_{j}\right)
  \end{align*}
  one checks that
  \begin{align*} [N^{>},A_{2}]^{2}\leq C
    \|D\alpha\|_{2}^{2}(N^{>}+1)^{2}\, .
  \end{align*}

  To see that $\|D\alpha \|_{2}<\infty$, we can proceed as in
  (\ref{logestA}) and bound
  \begin{align*}
    D\alpha^{2}D\leq
    D(D^{-1/2}ED^{-1/2}-1)^{2}D=D^{1/2}(E-D)D^{-1}(E-D)D^{1/2}\,.
  \end{align*}
  Hence we have
  \begin{align*}
    \|D\alpha\|_{2} & \leq \|D^{1/2}(E-D)D^{-1/2}\|_{2}
  \end{align*}
  which is finite due to (\ref{tchs1}). Applying Schwarz's inequality
  to (\ref{commutX}), we have thus shown that
  \begin{align*} [X,(N^{>}+1)^{2}(T_\text{H}+1)^{2}] & \leq C
    (N^{>}+1)^{2}(T_\text{H}+1)^{2}\,.
  \end{align*}
  We further have
  \begin{align*}
    \e^{tX}(N^{>}+1)^{2}(T_\text{H}+1)^{2}\e^{-tX} & =
    (N^{>}+1)^{2}(T_\text{H}+1)^{2} \\ & \quad + \int_0^t e^{sX}
    [X,(N^{>}+1)^{2}(T_\text{H}+1)^{2}] e^{-sX} dx \\ & \leq
    (N^{>}+1)^{2}(T_\text{H}+1)^{2} \\ & \quad + C
    \int_{0}^{t}\e^{sX}(N^{>}+1)^{2}(T_\text{H}+1)^{2}\e^{-sX}ds
  \end{align*}
  which by Gr\"onwall's inequality implies (\ref{operatorest}).

  For the proof of (\ref{secondstep}) we need to show that
  \begin{align*}
    W_{0}^{*}DW_{0}\leq CE
  \end{align*}
  or, equivalently, that
  \begin{align}
    D^{1/2}W_{0}E^{-1/2}&=DE^{-1/2}(E^{-1/2}DE^{-1/2})^{-1/2}E^{-1/2} \nonumber \\
    &=\pi D\int_{0}^{\infty}t^{-1/2}(Et+D)^{-1}dt \label{reint}
  \end{align}
  is a bounded operator. Observe that, by (\ref{boundedop1}),
  \begin{align}\label{nonsares}
    \left\|D (Et+D)^{-1} \right\|\leq \|DE^{-1}\| \|Q
    (t+DE^{-1})^{-1}\|\leq \| Q (t+DE^{-1})^{-1}\|\, .
  \end{align}
  With the aid of a Neumann expansion, one sees that the right side of
  (\ref{nonsares}) can be bounded by $2t^{-1}$ for $t> 2\|DE^{-1}\| $,
  which gives a bounded contribution to the integral in
  (\ref{reint}). For $t\leq 2 \|DE^{-1}\|$, one can argue that by
  analyticity of the resolvent map $t \mapsto (t+DE^{-1})^{-1}$, as
  well as the fact that $ED^{-1}$ is bounded, we get a uniform bound
  on $\|Q (t+DE^{-1})^{-1}\|$. This argument does not yield a
  quantitative bound, however, since $DE^{-1}$ is not a self-adjoint
  operator. To obtain an explicit bound, we make use of the fact that
  $DE^{-1}-1$ is a Hilbert-Schmidt operator. In fact, it is even trace
  class, since by (\ref{boundedop1}) and (\ref{tchs0})
  \begin{align*}
    \|D E^{-1} - 1 \|_{1} &=\|DE^{-1}(D-E)D^{-1}\|_{1} \leq
    \|(D-E)D^{-1}\|_{1} < \infty \, .
  \end{align*}
  We shall apply the following result.

  \begin{lem}[Theorem~6.4.1 in \cite{Gil}]\label{Gillemma} Let $A$ be
    a Hilbert-Schmidt operator. Then for $z\notin \sigma(A)$ (the
    spectrum of $A$)
    \begin{align*}
      \|(A-z)^{-1}\| \leq
      \sum_{k=0}^{\infty}\frac{\|A\|_{2}^{k}}{(\inf_{t\in
          \sigma(A)}|z-t|)^{k+1}\sqrt{k!}}\, .
    \end{align*}
  \end{lem}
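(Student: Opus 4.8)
The plan is to derive the bound from the Schur (triangular) representation of $A$ together with a factorial norm estimate for the quasi-nilpotent part; this is essentially the route taken in \cite{Gil}. First I would write $A=\Lambda+W$, the triangular representation of the Hilbert-Schmidt operator $A$ relative to a maximal chain of invariant subspaces: here $\Lambda$ is normal with $\sigma(\Lambda)=\sigma(A)$---its eigenvalues $\lambda_j$ being exactly those of $A$, together with their only possible accumulation point $0$---while $W$ is a Volterra (quasi-nilpotent) Hilbert-Schmidt operator that is strictly ``upper triangular'' with respect to the same chain. Since $\Lambda$ sits on the diagonal and $W$ strictly above it, the two are orthogonal in the Hilbert-Schmidt inner product, whence $\|W\|_2^2=\|A\|_2^2-\sum_j|\lambda_j|^2\le\|A\|_2^2$.

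Next, writing $\rho:=\inf_{t\in\sigma(A)}|z-t|=\mathrm{dist}\bigl(z,\sigma(\Lambda)\bigr)$, I would factor, for $z\notin\sigma(A)$,
\begin{align*}
A-z=(\Lambda-z)\bigl(I+(\Lambda-z)^{-1}W\bigr)\,.
\end{align*}
Because $\Lambda$ is normal we have $\|(\Lambda-z)^{-1}\|=\rho^{-1}$, and the operator $B:=(\Lambda-z)^{-1}W$ is again strictly upper triangular with respect to the chain, hence Volterra with spectral radius $0$. Therefore the Neumann series for $(I+B)^{-1}$ converges in operator norm irrespective of the size of $\|B\|$, giving
\begin{align*}
(A-z)^{-1}=\Bigl(\sum_{k=0}^{\infty}(-1)^kB^k\Bigr)(\Lambda-z)^{-1}\,,\qquad \|(A-z)^{-1}\|\le\rho^{-1}\sum_{k=0}^{\infty}\|B^k\|\,.
\end{align*}

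The crux of the matter, and the step I expect to be the main obstacle, is the factorial bound for powers of a Volterra Hilbert-Schmidt operator,
\begin{align*}
\|B^k\|\le\|B^k\|_2\le\frac{\|B\|_2^{\,k}}{\sqrt{k!}}\,.
\end{align*}
I would establish it by approximating $B$ by strictly upper-triangular finite matrices and expanding $(B^k)_{ij}$ as a sum over strictly increasing index chains $i<l_1<\dots<l_{k-1}<j$; the gain $1/\sqrt{k!}$ then comes from the combinatorial count of such chains, the discrete analogue of the volume $1/k!$ of the simplex $\{0\le t_1\le\dots\le t_k\le1\}$ appearing in the continuous triangular representation. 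Granting this estimate and using $\|B\|_2\le\|(\Lambda-z)^{-1}\|\,\|W\|_2\le\|A\|_2/\rho$, I would conclude
\begin{align*}
\|(A-z)^{-1}\|\le\rho^{-1}\sum_{k=0}^{\infty}\frac{(\|A\|_2/\rho)^k}{\sqrt{k!}}=\sum_{k=0}^{\infty}\frac{\|A\|_2^{\,k}}{\rho^{\,k+1}\sqrt{k!}}\,,
\end{align*}
which is precisely the claimed inequality with $\rho=\inf_{t\in\sigma(A)}|z-t|$.
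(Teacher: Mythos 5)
You should first be aware that the paper contains no proof of this statement at all: the lemma is imported verbatim from Gil's monograph \cite{Gil} (it is labelled Theorem~6.4.1 there), so the only meaningful benchmark is Gil's own argument, whose architecture your outline does faithfully reproduce. The triangular representation $A=\Lambda+W$ with respect to a maximal invariant chain, the orthogonality giving $\|W\|_2^2=\|A\|_2^2-\sum_j|\lambda_j|^2\le\|A\|_2^2$, the factorization $A-z=(\Lambda-z)(I+B)$ with $B=(\Lambda-z)^{-1}W$, the observation that a chain-diagonal operator times a chain-Volterra operator is again Volterra, and the Neumann series assembly are all correct, and you even recover the sharper form of the bound with $g(A)=\|W\|_2$ in place of $\|A\|_2$.

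The genuine gap is precisely the step you flag as the crux: the factorial bound $\|B^k\|\le\|B\|_2^k/\sqrt{k!}$ for a Volterra Hilbert--Schmidt operator is not a soft counting fact --- it \emph{is} the theorem, everything else being routine, and your combinatorial sketch would fail if implemented as described. Concretely, if you realize the ``sum over increasing chains'' by slicing $B=\sum_m B_m$ along the chain, strict triangularity gives $B^k=\sum_{m_1<\cdots<m_k}B_{m_1}\cdots B_{m_k}$ and hence a bound by the elementary symmetric function $e_k$ of the slice norms; but the only available control is $\sum_m\|B_m\|_2^2\le\|B\|_2^2$, and the inequality $e_k(a)\le\left(\sum_m a_m^2\right)^{k/2}/\sqrt{k!}$ is false (take $a_1=\cdots=a_n=1$: it would require $\binom{n}{k}\le n^{k/2}/\sqrt{k!}$, which fails badly for large $n$). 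Entrywise Cauchy--Schwarz over the $\binom{j-i-1}{k-1}$ increasing index chains likewise produces dimension-dependent constants rather than $1/\sqrt{k!}$. The honest ``simplex-volume'' implementation for kernels --- the classical Gohberg--Krein iteration --- yields only $\|B^k\|_2\le\|B\|_2^k/\sqrt{(k-1)!}$, a factor $\sqrt{k}$ weaker than the Hilbert--Schmidt inequality you assert in passing; Gil's $1/\sqrt{k!}$ is an \emph{operator-norm} bound obtained from a separate, genuinely nontrivial finite-dimensional induction for nilpotent matrices, followed by a chain-approximation limit. So your plan is the right one and all surrounding reductions are sound, but as a proof it defers exactly the substance of the lemma to an unproved (and, as sketched, unprovable) estimate. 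For what it is worth, the weaker $1/\sqrt{(k-1)!}$ version would still give a convergent, quantitative bound sufficient for the application in the proof of Lemma~\ref{quadraticbound}, but not the lemma as stated verbatim.
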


  Define $a$ to be the infimum of the spectrum of $DE^{-1}$ on the
  space $Q \mathcal{F}^{(1)}$. It equals the infimum of the spectrum
  of $E^{-1/2} D E^{-1/2}$ on that space, hence
$$
a = \| E^{1/2} D^{-1} E^{1/2} \| ^{-1} > 0 \,.
$$
By Lemma~\ref{Gillemma} we thus have
\begin{align*}
  \|Q(t+DE^{-1})^{-1}\|&=\|Q(t+1+DE^{-1}-1)^{-1} \|\\
  &\leq \sum_{k=0}^{\infty} \frac{\|DE^{-1}-1\|_{2}^{k}}{ (t+a)^{k+1} \sqrt{k!}   }\\
  & \leq \frac{\sqrt{2}}{t+a} \exp \left(\frac{\|DE^{-1}-1\|_{2}}{t+a}
  \right)\, .
\end{align*}
Here we have used the bound $\sum_{k=0}^{\infty} x^{k}/\sqrt{k!}\leq
\sqrt{2} \e^{x^{2}}$ for $x\geq 0$ (cf. p. 84 in \cite{Gil}). This
yields the desired quantitative bound, and concludes the proof of the
boundedness of (\ref{reint}).
\end{proof}

\section{Consequences for Eigenvectors} \label{corproof}

\subsection{Proof of Corollary~\ref{cor2}}

We abbreviate
\begin{align*}
  H&:= H_{N}- E_0(N) + 1
  =:\sum_{i=1}^{\infty}h_{i}\ket{\chi_{i}}\bra{\chi_{i}}\, ,
\end{align*}
with $h_{i}\leq h_{i+1}$. For $h_j\leq \xi$, it follows from
(\ref{lowerlast}) and Lemmas~\ref{simple bounds}--\ref{quadrbdlem}
that
$$
\bra{\chi_{j}} K \ket{\chi_{j}} \leq h_{j}\left(1+C(\xi/N)^{1/2}
\right)\, .
$$
From (\ref{upperestlast}) we further deduce that $h_j \leq k_j
\left(1+C(\xi/N)^{1/2} \right)$, and thus
\begin{align*}
  \bra{\chi_{j}} K \ket{\chi_{j}} \leq k_{j}\left(1+C(\xi/N)^{1/2}
  \right)\, .
\end{align*}
A simple application of the min-max principle \cite[Lemma~2]{yinsei}
then shows that if $k_{j+1}>k_{j}$ then
\begin{align*}
  \sum_{k,l=1}^{j}|\langle \chi_{k},\psi_{l}\rangle|^{2}\geq j-
  C(\xi/N)^{1/2}\frac{ \sum_{l=1}^{j}k_{l}}{k_{j+1}-k_{j}}\, .
\end{align*}
In other words, with
$P^{j}_{K}:=\sum_{k=1}^{j}\ket{\psi_{k}}\bra{\psi_{k}}$ and
$P^{j}_{H}:=\sum_{k=1}^{j}\ket{\psi_{k}}\bra{\psi_{k}}$,
\begin{align*}
  \| P_{K}^{j}-P_{H}^{j}\|_{2}^{2}\leq C(\xi/N)^{1/2}\frac{
    \sum_{l=1}^{j}k_{l}}{k_{j+1}-k_{j}}\, .
\end{align*}
This completes the proof. \hfill\qed
 
\begin{rem}\label{rem:pr}
  Note that the (normalized) eigenfunctions of $K$ can be written as
  \begin{align} \label{close} \left(\mathcal{U}^{\dagger} \prod_{i\geq
        1}\frac{(a_{i}^{\dagger})^{n_{i}}}{\sqrt{n_{i}!}}\mathcal{U}\right)
    \mathcal{U}^{\dagger}\ket{ N-n,0,\dots}=\prod_{i\geq
      1}\frac{(d_{i}^{\dagger}+K_{i}^{\dagger})^{n_{i}}}{\sqrt{n_{i}!}}
    \mathcal{U}^{\dagger}\ket{ N-n,0,\dots}
  \end{align}
  where $n=\sum_{i\geq 1} n_{i}\leq N$, and $\ket{N-n,0,\dots}$
  denotes the function $\otimes_{i=1}^{N-n} \varphi_0 \in
  \mathcal{F}^{(N-n)}$. The operators $d_i$ are explicitly defined in
  (\ref{def:di}). The operators $K_i$ are small in the low-energy
  subspace, as shown in the proof of Lemma~\ref{secondsteplem}. The
  eigenfunctions of $K$ (and, hence, the ones of $H_N$) are thus
  approximately obtained by applying the raising-type operators
  $d_i^\dagger$ to the $N-n$-particle ground state.  To explicitly
  estimate the difference of the functions (\ref{close}) and
  \begin{align*}
    \prod_{i\geq 1}\frac{(d_{i}^{\dagger})^{n_{i}}}{\sqrt{n_{i}!}}
    \mathcal{U}^{\dagger}\ket{ N-n,0,\dots}\,,
  \end{align*}
  however, it would be necessary to give bounds on products of powers
  of the operators $K_{i}^{\dagger}$ and $d_{i}^{\dagger}$, which are
  more involved than the ones used in Lemma~\ref{secondsteplem}.
\end{rem}

\begin{rem}\label{finrem}
  As noted in Section~\ref{ss:mod}, Corollary~\ref{cor2} implies that
  the ground state $\Psi_{0}$ of $H_N$ is close, in $L^2$-norm, to
  $\mathcal{U}^\dagger\ket{N,0,\dots}$. To see the importance of the
  unitary operator $\mathcal{U}$, one can calculate the matrix element
  \begin{align}
    \bra{N,0,\dots}\mathcal{U}^{\dagger}\ket{
      N,0,\dots}=\bra{N,0,\dots}\e^{-X}\ket{ N,0,\dots} \,
    . \label{overlap}
  \end{align}
  This equality follows from the fact that $W$ leaves the Hartree
  ground state $\varphi_{0}$ invariant.  One readily checks that
  $\frac{d}{dt} \bra{N,0,\dots}\e^{-tX}\ket{ N,0,\dots} |_{t=0}=0
  $. However,
$$
\frac{d^{2}}{dt^{2}} \bra{N,0,\dots}\e^{-tX}\ket{ N,0,\dots}
|_{t=0}=\bra{N,0,\dots}X^{2}\ket{ N,0,\dots} = -\frac N{2(N-1)}
\|\alpha\|_2^2 \,,
$$
which is not small for large $N$. Hence we expect that the matrix
element (\ref{overlap}) differs significantly from $1$.
\end{rem}

\vspace{.2in}

\noindent \emph{Acknowledgments.} It is a pleasure to thank Sven
Bachmann for helpful discussions. Financial support by NSERC is
gratefully acknowledged.

\end{document}